\title{Improved Integrality Gap in 
  Max-Min Allocation, \\
or, Topology at the North Pole\footnote{A preliminary version of this
  paper appeared in the proceedings of the {\em Symposium on Discrete
  Algorithms (SODA) 2023).}}}
\author{Penny Haxell\thanks{Department of Combinatorics and
   Optimization, University of Waterloo, Waterloo, ON,
   Canada. Partially supported by NSERC and by a
   Mercator Fellowship of the Research Training Group {\em Facets of Complexity}.}
 \quad Tibor Szab\'{o}\thanks{Freie Universit\"at Berlin, Berlin,
  Germany. Research partially funded by the Deutsche
   Forschungsgemeinschaft (DFG, German Research Foundation)
  under Germany's Excellence Strategy – The Berlin Mathematics 
  Research Center MATH+ (EXC-2046/1, project ID: 390685689) and the
  Research
  Training Group \emph{Facets of Complexity.}}}
\date{\today}
\newtheorem{thm}{Theorem}[section]
\newtheorem{lem}[thm]{Lemma}
\newtheorem{prop}[thm]{Proposition}
\newtheorem{obs}[thm]{Observation}
\newtheorem*{claim}{Claim}
\newtheorem{fact}{Fact}
\theoremstyle{definition}
\theoremstyle{remark}
\numberwithin{equation}{section}
\newcommand{\cA}{\mathcal{A}}
\newcommand{\cB}{\mathcal{B}}
\newcommand{\cC}{\mathcal{C}}
\newcommand{\cI}{\mathcal{I}}
\newcommand{\cJ}{\mathcal{J}}
\newcommand{\abs}[1]{\left|#1\right|}
\newcommand{\explode}{\divideontimes}
\newcommand{\veps}{\varepsilon}
\begin{document}
\maketitle
\markboth{Topology at the North Pole}
{Topology at the North Pole}
\renewcommand{\sectionmark}[1]{}

\begin{abstract}
  In the max-min allocation problem a set $P$ of
  players are to be allocated disjoint subsets of a set $R$ of
  indivisible resources, 
  such that the minimum utility among all players is maximized.
  We study the restricted variant, also known as the Santa Claus
  problem, where each resource has an intrinsic positive value, and
  each player covets a subset of the resources.
Bez\'akov\'a and Dani~\cite{bezakovadani} showed that this problem
  is NP-hard to approximate within a factor less than $2$,
  consequently a great deal of work has focused on approximate solutions.
  The principal approach for obtaining
  approximation algorithms has been via the Configuration LP (CLP) of Bansal and
  Sviridenko~\cite{bansalsviridenko}. Accordingly, there has been
  much interest in bounding the integrality gap of this CLP. 
The existing algorithms and integrality gap estimations are all based
  one way or another on the combinatorial augmenting tree
argument of Haxell~\cite{haxell} for finding perfect matchings in certain
hypergraphs.

Our main innovation in this paper is to introduce the use of topological
methods, to replace the combinatorial argument of~\cite{haxell}
for the restricted max-min allocation problem. This approach yields substantial
improvements in the integrality gap of the CLP. In particular we
improve the previously best known bound of
$3.808$ to $3.534$. We also study the $(1,\varepsilon)$-restricted
version, in which resources can take only two values, and improve the
integrality gap in most cases.
Our approach applies a criterion of Aharoni and Haxell, and Meshulam, for the
existence of independent transversals in graphs, which involves
the connectedness of the independence complex. This is complemented by 
a graph process of Meshulam that decreases the connectedness of the
independence complex in a controlled fashion and hence, tailored
appropriately to the problem, can verify the criterion.  
In our applications we aim to establish the flexibility of the
approach and hence argue for it to be a potential asset in other
optimization problems involving hypergraph matchings.

\end{abstract}

\thispagestyle{empty}

\newpage

\setcounter{page}{1}

\section{Introduction} \label{sec:introduction}

In this paper
we consider the {\em restricted max-min allocation} problem.
An instance ${\cal I} = (P, R , v, \{L_p : p\in P\})$ of the problem
  consists of 
  a set $P$ of players, a set $R$ of indivisible resources,
  where each resource $r \in R$ has an intrinsic positive value
$v_r >0$, and each $p\in P$ covets a set $L_p\subseteq R$ of resources.
An {\em allocation} of the resources is a function $a: P\rightarrow
2^R$, with $a(p)\subseteq L_p$ for each $p\in P$, such that every resource is allocated to (at most) one player, that is $a (p)
\cap a(q) =\emptyset$ for every $p\neq q$.
The {\em min-value of allocation $a$} is $\min_{p\in P} v(a(p))$,
where for a set $S\subseteq R$ of resources
$v(S) = \sum_{r\in S}v_r$ represents the total value of $S$.
The objective is to maximize the min-value over all allocations of
resources. This value will be denoted by $OPT = OPT({\cal I})$.

The choice of a max-min
objective function is arguably a good one for
achieving overall individual ``Fairness'' in the
distribution of a set of indivisible resources
that are considered desirable by the players.\footnote{This is in contrast with the situation where resources are 
considered rather ``chores'', when one would usually aim to minimize the
maximum values of the subsets of resources allocated to each player.
That would be the 
setup for example in the classical makespan minimization problem, where
various jobs have to be allocated to a set of machines.} 
Since the seminal paper of Bansal and Sviridenko~\cite{bansalsviridenko}, the restricted max-min allocation
problem often goes under the name {\em Santa Claus Problem},
where the players represent children, and the resources are presents
to be distributed by Santa Claus. One imagines each
present $r$ having a ``catalogue'' value $v_r$, but some presents may
not be interesting to some children.\footnote{... since perhaps
 they already secured the
latest edition of their favorite smartphone for their birthday.} To be
fair\footnote{... and to avoid criticism from jealous parents},
Santa might wish to distribute
the presents so that the smallest total value received by any child is
as large as possible. 

The problem of how to find an optimal solution efficiently
was studied first in the special case when $L_p=R$ for
every player $p\in P$. In this case Woeginger~\cite{woeginger97} and Epstein and
Sgall~\cite{epsteinsgall} gave polynomial time approximation schemes (PTAS),
and Woeginger~\cite{woeginger00} gave an FPTAS when the number of
players is constant.
For the general case however, Bez\'akov\'a and Dani~\cite{bezakovadani}
showed that the problem is hard to approximate up to any factor $< 2$. 
On the positive side, there has been a great deal of progress towards finding
good approximations. In \cite{bezakovadani} an approximation
ratio of $|R|-|P|+1$ is achieved, as well as an additive approximation
algorithm using the standard assignment LP relaxation of the
problem. This finds a solution of value at least $T_{ALP}-\max_{r\in R} v_r$,
where $T_{ALP}$ is the optimal value of the assignment LP.
This algorithm however does not offer any approximation factor
guarantee when
$\max_{r\in R} v_r$ is large. 

To address the fact that the assignment LP can have arbitrarily large
integrality gap in general, 
Bansal and Sviridenko~\cite{bansalsviridenko} introduced the important
innovation of using a stronger LP, called the configuration LP for the
problem, which we now describe.
Given a problem instance ${\cal I}$ and $T\geq 0$, for each player $p\in P$
we define the family ${\cal
  C}_p(T) = \{ C \subseteq L_p: v(C) \geq T \}$ of {\em configurations for
$p$}. The {\em configuration LP} for ${\cal I}$ with {\em target} $T$ has a variable 
$x_{p,S} \geq 0$ for every player $p\in P$ and configuration
$S\in {\cal C}_p(T)$, and a constraint
$$\sum_{S\in {\cal C}_p(T)} x_{p,S} \geq 1$$ 
for every player $p\in P$ and a constraint
$$\sum_{p\in P} \sum_{S\in {\cal C}_p(T), S\ni r} x_{p,S} \leq 1$$
for every resource $r\in R$.

We will refer to this LP as CLP($T$) for ${\cal I}$.
Formally we minimize the objective function $0$, but the main point is
whether CLP($T$) is feasible. In the language of discrete optimization,
to say that CLP($T$) is feasible means that the union $\bigcup_{p\in
  P}\cC_p(T)$ of the $|P|$ hypergraphs $\cC_p(T)$ has a fractional
matching $x: \bigcup_{p\in
  P}\cC_p(T) \rightarrow [0,1]$ that has total value at least 1 on each $\cC_p(T)$.

For a given instance ${\cal I}$, let $T^* = T^*({\cal I})$ be the
maximum $T$ for which CLP($T$) is feasible. 
It is a striking fact from \cite{bansalsviridenko} that even though
CLP($T$) has exponentially many variables, $T^*$ can be
approximated up to any desired accuracy in polynomial time.
Note that any allocation for ${\cal I}$ of min-value $T'$ gives an (integer)
feasible solution to CLP($T'$). Hence $OPT\leq T^*$.
We will refer to $T^*/OPT$ as the {\em integrality gap}. Thus
  to prove the upper bound $1/\alpha$ on the integrality gap is to
  prove that, given any $T$ and fractional matching $x$ as described above,
  there exist $|P|$ disjoint sets $\{e^p\subseteq L_p:p\in P\}$ with
  $v(e^p)\geq\alpha T$ for each $p\in P$.

Using their configuration LP, Bansal and
Sviridenko~\cite{bansalsviridenko} obtained an
$O(\log \log |P|/ \log\log\log |P|)$-approximation algorithm
for the Santa Claus problem. They also formulated a combinatorial
conjecture and connected it
to the problem of finding an allocation with large min-value given a
feasible solution of CLP($T$).  Feige~\cite{feige} proved this
conjecture via repeated applications of the
Lov\'asz Local Lemma and hence established a constant 
integrality gap for the CLP. This was later made algorithmic by Haeupler,
Saha, and Srinivasan~\cite{haeupsahasrin} using Local Lemma
algorithmization, which provided the first
(huge, but) constant factor approximation algorithm for
the Santa Claus problem.

Asadpour, Feige, and Saberi~\cite{asadpourfeigesaberi} formulated the
problem in terms of hypergraph matching and proved an upper bound of
$4$ on the integrality gap of the CLP.
Via the machinery of \cite{bansalsviridenko} this result implies an efficient algorithm to estimate the value
of OPT up to a factor $(4+\delta)$. 
The approach of \cite{asadpourfeigesaberi} is based on a local search
technique introduced by Haxell~\cite{haxell}, where the corresponding procedure is not
known to be efficient. Polacek and Svensson~\cite{polaceksvensson}
modified the local search of \cite{asadpourfeigesaberi} and were able
to prove a quasi-polynomial running time for a
$(4+\delta)$-approximation algorithm. Finally, Annamalai, Kalaitzis, and
Svensson~\cite{annakalasven} managed to adapt the local search 
procedure to terminate in polynomial time, introducing several influential
novel ideas, which resulted in a polynomial time $12.33$-approximation algorithm. 
Subsequently Cheng and Mao~\cite{chengmao18} altered the algorithm
to establish a $(6+\delta)$-approximation guarantee, improving further
in \cite{chengmao19} to obtain a $(4+\delta)$-approximation
algorithm. Davies, Rothvoss, and Zhang~\cite{daviesrothvosszhang} also
gave an $(4+\delta)$-approximation algorithm, working in a more general setting, where
a matroid structure is imposed on the players.
The integrality gap of the configuration LP was further improved by
Cheng and Mao~\cite{chengmao18b} and Jansen and
Rohwedder~\cite{jansenrohwedder} to 3.833 and then to 3.808 by Cheng
and Mao~\cite{chengmao19} by better and better analysis of the
procedure of \cite{asadpourfeigesaberi}. 

A special case of the problem, that already captures much of its
difficulty, comes from limiting the number of distinct values taken by
resources to two. In the $(1,\varepsilon)$-restricted allocation problem
resources can take only two values $1$ or $\varepsilon$, where
$0 < \varepsilon \leq 1$. The relevance of this case is also underlined
by the fact that a key reduction step in the foundational result of~\cite{bansalsviridenko} 
required an approximation algorithm for the $(1,\varepsilon)$-restricted
allocation problem for arbitrarily small $\varepsilon >0$.

Chan, Tang, and Wu~\cite{chantangwu}, extending work of
Golovin~\cite{golovin} and Bez\'akov\'a and Dani~\cite{bezakovadani},
show that approximating OPT  up to a factor less than $2$ is already
NP-hard for the $(1,\varepsilon)$-restricted problem, for any fixed
$\varepsilon \leq 1/2$. 
Note that when $\varepsilon =1 $, so each resource has the same value,
the problem can be solved
exactly and easily via applications of a bipartite matching
algorithm. This algorithm can also be used to give a
$1/\varepsilon$-approximation, which is better than $2$-approximation for
$\varepsilon >1/2$. In
\cite{chantangwu} it was proved that the integrality gap
of the CLP for the $(1,\varepsilon)$-restricted allocation problem is at
most $3$, for every $\varepsilon$. The paper also gives a quasipolynomial-time
algorithm that finds a $(3+4\varepsilon)$-approximation.

\subsection{Our contributions}

The existing algorithms and integrality gap estimation for the Santa
Claus problem are, one way or another, based on the combinatorial augmenting tree
argument of \cite{haxell} for finding perfect matchings in certain
hypergraphs. Many of them are sophisticated variants
of the local search technique of \cite{asadpourfeigesaberi} 
and its efficient algorithmic realization in \cite{annakalasven}.

Our main innovation in this work is to introduce the use of
topological methods for the Santa Claus problem, and replace the combinatorial
argument of~\cite{haxell}. 
This approach yields substantial 
improvements in the integrality gap of the CLP. 

Our first main result improves the integrality gap from $3.808$ to $3.534$. 
\begin{thm}\label{thm:main-intro} The integrality gap of the CLP is
  at most $\frac{53}{15}$. 
\end{thm}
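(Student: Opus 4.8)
The plan is to reduce the integrality-gap question to a hypergraph matching problem, and then attack that matching problem with a topological connectivity criterion rather than the combinatorial augmenting-tree argument of~\cite{haxell}. First I would set up the now-standard reduction: given a feasible solution to CLP($T$) with $T=1$ (after scaling), separate resources into \emph{fat} resources (value at least some threshold, say $\ge 1/\alpha$ where $\alpha=53/15$) and \emph{thin} resources. A player who can be satisfied by a single fat resource is easy to handle via a bipartite matching on fat resources; the difficulty is the remaining ``thin'' players, who must receive a bundle of thin resources of total value at least $1/\alpha$. Following the hypergraph formulation of Asadpour--Feige--Saberi~\cite{asadpourfeigesaberi}, I would build an auxiliary hypergraph whose vertices are the thin resources together with dummy copies of the fat resources, and whose hyperedges for each player are the inclusion-minimal configurations in ${\cal C}_p(1/\alpha)$; finding a perfect matching that saturates all players then yields an allocation of min-value $\ge 1/\alpha$, giving integrality gap $\le\alpha$.

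Next I would recast the perfect matching problem as the existence of an \emph{independent transversal}. The key step is to take the collection of hyperedges available to each player and form a graph $H$ whose vertex set is $\bigsqcup_{p} E_p$ (one vertex per player--configuration pair), with a part $V_p$ for each player $p$, and with edges joining two vertices whenever the corresponding configurations share a resource (a conflict graph). An independent transversal of $(H, \{V_p\})$ — an independent set picking exactly one vertex from each $V_p$ — is precisely a consistent choice of one configuration per player with no shared resource, i.e.\ the desired matching. By the Aharoni--Haxell / Meshulam criterion quoted in the abstract, such a transversal exists provided the independence complex of $H$ restricted to any sub-union of parts is sufficiently highly connected; concretely it suffices that for every $S\subseteq P$, the independence complex ${\cal I}(H[\bigcup_{p\in S}V_p])$ is $(|S|-1)$-connected, or more precisely that its connectivity exceeds $|S|-2$.

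The heart of the proof is then verifying this connectivity bound, and this is where I expect the main obstacle to lie. The strategy is to use Meshulam's graph process: one iteratively removes either an edge or the closed neighbourhood of a vertex, and each such operation decreases the connectivity of the independence complex by at most $1$ (edge deletion) or leaves it controlled in terms of the two resulting subgraphs (vertex domination). Starting from $H[\bigcup_{p\in S}V_p]$ and running this process, one must show that the total ``cost'' charged — governed by how many thin resources are involved and how the CLP fractional solution distributes weight $\ge 1$ across the configurations of each player — stays below $|S|$. The careful accounting here is precisely what forces the constant $53/15$: the fat/thin threshold $1/\alpha$, the worst-case structure of minimal configurations near the threshold (a bundle of thin resources just barely exceeding $1/\alpha$, each of value just under $1/\alpha$, so at most a bounded number of them, combined with one possible near-fat resource), and the LP weight bound must balance so that the connectivity never drops below what the transversal criterion demands. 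Optimizing this trade-off — essentially a small linear/fractional program over the possible ``bad'' local configurations — yields $\alpha=53/15\approx 3.533$.

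Finally I would assemble the pieces: the bipartite matching takes care of players satisfiable by a fat resource; the topological argument produces a conflict-free assignment of thin-resource bundles to the remaining players; together these give an allocation of min-value at least $OPT/\alpha$ whenever CLP($T^*$) is feasible, hence $T^*/OPT\le \alpha = 53/15$. The one subtlety to be careful about throughout is the interaction between the fat-resource dummies and the thin bundles in the same hypergraph — one wants the same player's fat options and thin options to sit in one part $V_p$ so that the transversal genuinely chooses \emph{some} valid configuration — and checking that the connectivity bookkeeping is not spoiled by this mixing is the step I would scrutinize most carefully.
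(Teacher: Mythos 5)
Your overall strategy matches the paper's: reduce to an independent transversal in a conflict graph over minimal configurations, apply the Aharoni--Haxell/Meshulam connectivity criterion, and verify the required connectivity via Meshulam's edge deletion/explosion process (your ``vertex domination'' variant is a weaker form; the paper uses the version where an edge is either deletable, i.e.\ deleting it does not raise $\eta$, or explodable, i.e.\ removing both endpoints and all their neighbours drops $\eta$ by at least one). The warm-up of this kind of argument, with a single arbitrary run of the process and a crude charge of three ``blocks'' per explosion, indeed gives integrality gap $4$; the paper does exactly this in its Section 3.

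The genuine gap is that everything that turns $4$ into $53/15$ is absent, and you would not get the constant by ``optimizing a small linear program over bad local configurations'' without first identifying what those configurations are and proving they are the only ones that can occur. Concretely, two missing ideas. First, you need structure theorems about the Meshulam process showing that when no \emph{cheap} DE-sequence exists (one covered by $\leq 2m$ per explosion, $m$ the maximum value of a non-hyperedge), the residual configuration mass $C\setminus W$ is forced to consist only of size-two hyperedges and to have value $\leq 3m/2$ (the paper's Theorem~\ref{thm:nongraph}), and that a configuration with $v(C\cap W) < T - jm$ yields a $(2j+1)/j$-economical DE-sequence (Lemma~\ref{lem:jm}). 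These are the real content: they say that away from a few precise thresholds, explosions can be packed so that their covers cost $2m$, $7m/3$, or $5m/2$ per explosion instead of $3m$. Second, you need the certification mechanism. You gesture at ``the CLP fractional solution distributes weight $\geq 1$,'' but the paper does not use the primal at all: it uses the \emph{dual} of the configuration LP (Propositions~\ref{prop:dual} and~\ref{prop:refined-dual}), with variables supported on the accumulated cover $W$, to prove lower bounds of the form $c(|U|-|F_U|) \leq d|Y_1| + v(Y_2)$. Taking dual snapshots after each of four phases (cheap, $7/3$, $5/2$, arbitrary) yields four linear inequalities in $n_1,\ldots,n_4$, and a specific convex combination of them gives $|U|-|F_U| \leq n_1+n_2+n_3+n_4 = \ell(\sigma)$ precisely when $T \geq \frac{53}{15}m$. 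Without the economical-cover lemmas and the dual bookkeeping, your plan reproduces the ratio $4$ but cannot produce $53/15$, which is why the argument as stated has a real hole rather than being a complete proof by a different route.

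A smaller point: you propose treating fat resources by a separate bipartite matching and feeding dummies into the hypergraph. The paper instead keeps fat singletons inside $H(\alpha)$ as clique components and lets Fact~\ref{fact:eta}(2) absorb them directly into the $\eta$-count, so that the transversal criterion is verified for $H(\alpha)$ itself with no splicing step. Your decomposition can be made to work but adds exactly the ``mixing'' bookkeeping you flag as the step to scrutinize; the paper's formulation avoids it.
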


For our approach we make use of a criterion of Aharoni and
Haxell~\cite{aharonihaxell} and Meshulam~\cite{meshulam2} for the
existence of independent transversals in
graphs, using the (topological) connectedness of the
independence complex. In our application we apply this to an appropriately modified
line graph of the multihypergraph of all those subsets that are
valuable enough to be potentially allocated to the players. In order to show that the connectedness
of the independence complex is large enough, we run a graph theoretic
process, which is based on a theorem of Meshulam~\cite{meshulam2}.
In the process we dismantle our line graph, but control the topological
connectedness of the independence complex throughout, to make sure
that the process runs for long enough. This necessitates that we choose
our dismantling process with care and apply intricate analysis of the underlying
structures, carefully tailored to the specifics of the problem. We
employ the dual of the CLP to certify the length of the process. 

Our approach is conceptually different from that of all previous work
on the Santa Claus problem. 
The topological theorems in the background provide an incredibly
rich family of independent sets in the modified line graph, that is 
geometrically highly structured via a triangulation of a
high-dimensional simplex. In this setting, good allocations of
disjoint sets of
resources to players correspond to multicolored simplices in the
triangulation, and the existence of such an allocation is guaranteed
by Sperner's Lemma. This is in sharp contrast to the much
simpler sparse spanning tree-like structure at the heart of the
combinatorial approach, and where a solution is found via a direct
step-by-step augmentation process.

Our general strategy to show the existence of a solution of
large minimum utility seems quite flexible and we expect it to
be a useful asset for other algorithmic
problems of interest involving hypergraph matchings.

The machinery developed for the proof of
Theorem~\ref{thm:main-intro} can also be
used to improve significantly the known results on the integrality
gap of the CLP for the $(1,\varepsilon)$-restricted allocation problem.
In the next theorem we highlight some of the main
consequences of this aspect of our work.

\begin{thm} \label{thm:smallepsilon}
Let $\varepsilon < \frac{1}{2}$ and let ${\mathcal I}$ be an instance of the 
$(1,\varepsilon)$-restricted Santa Claus problem with
maximum CLP-target $T^*:=T^*({\cal I})$.
Then the integrality gap of ${\cal I}$ is at most
$f(\frac{\varepsilon}{T^*})$, where 
$f:(0,1]\rightarrow \mathbb{R}^+$ is a function satisfying
  \begin{itemize}
  \item $f(x)<3$  unless $x=\frac16$ or $x=\frac13$,
  \item $f(x)\leq 2.75$ for all $x\in (0,\frac{1}{6}) \cup
    [\frac{2}{11}, \frac13) \cup [\frac{4}{11}, 1]$, 
    and
\item $\lim_{x\rightarrow 0}f(x) < 2.479$.
\end{itemize}
\end{thm}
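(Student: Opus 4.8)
The plan is to reduce both Theorem~\ref{thm:main-intro} and Theorem~\ref{thm:smallepsilon} to a single combinatorial-topological statement: given a feasible solution of CLP($T^*$), we build an auxiliary multihypergraph $H$ whose hyperedges are the ``thick'' bundles (subsets of $L_p$ whose value is at least roughly $T^*/\rho$ for the target ratio $\rho$) together with the single ``fat'' resources (those of value $\geq T^*/\rho$) assigned directly, and we must extract a perfect matching covering all players. Following the Aharoni--Haxell--Meshulam framework, a perfect matching exists once the independence complex of a suitably modified line graph $L(H)$ is sufficiently highly connected relative to the number of players; concretely, if for every subset $P'\subseteq P$ the restricted line graph has independence complex that is $(|P'|-1)$-connected, then an independent transversal (= a system of disjoint representatives = a perfect matching) exists. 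So the whole game becomes: \emph{prove a lower bound on the connectedness of this independence complex} that beats the threshold $|P'|-1$ for the claimed value of $\rho$.

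\medskip\noindent\textbf{Bounding the connectedness via Meshulam's dismantling process.} To control connectedness I would invoke Meshulam's theorem: if one deletes the closed neighbourhood of an edge $e=\{u,w\}$ (an ``edge explosion'') from a graph $G$, then $\conn(\cI(G)) \geq \min\{\conn(\cI(G\setminus N[u])), \conn(\cI(G\setminus N[e]))+1\}$, while deleting a single vertex only drops connectedness by at most $1$. The strategy is to run a carefully chosen sequence of such operations on $L(H)$, each step corresponding to ``committing'' a resource or a bundle to a player, and to show the process survives for at least $|P'|$ rounds before the graph becomes empty (or before an independent set of the right size is forced). The dual of CLP($T^*$) is the accounting device: the LP-duality certificate guarantees enough total ``value mass'' among the configurations of any player set $P'$, and by a fractional-relaxation / flow argument this translates into a guaranteed supply of edges in $L(H)$ available to be exploded at each round, so the process cannot stall prematurely. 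Choosing the dismantling order — which bundle to explode, in what sequence, and how to split the value accounting between the two branches of the Meshulam inequality — is where the value $\rho = 53/15$ comes from: a greedy rule that always explodes along the ``most valuable still-available'' bundle, combined with a potential-function argument tracking both remaining connectedness and remaining dual weight, should be tuned so the two branches of the inequality stay balanced.

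\medskip\noindent\textbf{The $(1,\varepsilon)$ case.} For Theorem~\ref{thm:smallepsilon} the same machine runs but the arithmetic simplifies dramatically because resources take only values $1$ and $\varepsilon$: a bundle's value is determined by how many $1$'s and how many $\varepsilon$'s it contains, so the dismantling process becomes a finite case analysis parametrized by $x = \varepsilon/T^*$. I would define $f(x)$ as the infimum over ratios $\rho$ for which the connectedness inequality still clears the threshold, then verify the three bullet points by checking the relevant ranges of $x$ by hand: the exceptional points $x=1/6$ and $x=1/3$ are exactly where a small number of $\varepsilon$-resources fails to combine into a thick bundle under the natural rounding, forcing $f$ up to $3$; away from those bad residues one gains a little slack, yielding the $2.75$ bound on the stated union of intervals; and as $x\to 0$ the $\varepsilon$-resources behave like a divisible fluid, the rounding loss vanishes, and the connectedness bound improves to give $\lim_{x\to0}f(x)<2.479$. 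For this last limit I would analyze the continuous relaxation of the process — essentially a differential inequality for the connectedness-to-players ratio — and extract the constant $2.479$ as the solution of a small optimization (the exact constant is whatever the tuned potential function yields; $2.479$ is the rounded value).

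\medskip\noindent\textbf{Main obstacle.} The genuinely hard part is \emph{choosing the dismantling process and its accounting so that the Meshulam inequality never drops the connectedness below the running player count}. Each edge explosion offers a minimum of two quantities, and a naive choice lets the adversary drive the process down the cheaper branch; one must show a priori — using the CLP-dual weights — that whichever branch is taken, enough ``budget'' remains to continue, and that the budget is consumed no faster than one unit of connectedness per player removed. Getting the constant down to $53/15$ rather than the previous $3.808$ requires the explosion rule to be adaptive (looking at the current fractional structure, not a fixed order) and the potential function to mix topological connectedness with a weighted count of un-exploded valuable bundles; verifying that this potential is monotone under every legal move, in every branch, is the crux of the proof and the source of all the delicate case analysis.
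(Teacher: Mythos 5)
Your high-level framework matches the paper's: reduce to finding an independent transversal in a line-type graph, certify the Aharoni--Haxell--Meshulam condition by running a Meshulam deletion/explosion sequence, and use the CLP dual as the accounting device. That is indeed the route the paper takes. But your writeup is a proof \emph{plan}, not a proof, and the pieces that actually establish the three bullet points are absent.

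Concretely, the following are missing and are not mere details. (1) You never define $f$. The paper sets $f(x)=\frac{1}{x\,r_{\lceil 1/x\rceil}}$, where $r_c$ is the largest $r$ with $\sum_{X=r}^{c}\frac{1}{a(X)}\geq 1$ for an explicit piecewise-linear cost function $a(X)=a_r(X)$; the three bullet points are then arithmetical consequences of this definition. Saying ``define $f(x)$ as the infimum over ratios $\rho$ for which the connectedness inequality clears the threshold'' does not yield a function about which the claimed inequalities can be checked. (2) You do not perform the reduction to $1\leq T^*<2$; the paper handles $T^*\geq 2$ via Bez\'akov\'a--Dani's additive approximation and $T^*<1$ by rescaling $\varepsilon$, and the $x\geq\frac13$ regime by a direct bipartite-matching observation. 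Without this you cannot even set up the case $c=\lceil T^*/\varepsilon\rceil\geq 4$. (3) The quantitative heart of the argument is a lemma producing, for each remaining configuration $C$ with $|C\setminus W|=X\geq r$, a DE-sequence based in $C$ whose cover has average cardinality at most $a(X)$; this requires a delicate case analysis (roughly: one, two, or three explosions depending on the intersection pattern with $C\setminus W$). Your ``adaptive explosion rule plus potential function'' gestures at this but supplies no such bound, so the process could stall. (4) The phase structure — for $X=c,c-1,\dots,r$, keep exploding in configurations with at least $X$ resources uncovered, and take a dual snapshot after each phase to get $\sum_{j\geq X} a(j)\,n_j\geq(c-X+1)(|U|-|F_U|)$, then minimize $\sum n_j$ — is what produces the condition $\sum_{X=r}^{c}\frac{1}{a(X)}\geq 1$. (5) Your $x\to 0$ argument via a ``differential inequality'' is not what the proof does: the paper splits $\sum 1/a(X)$ over the three pieces of $a$, uses $H_n-\ln n\to\gamma$ to obtain $\ln(4/3)$, $3\ln(9/8)$, and a linear tail, and solves for the limiting ratio, giving exactly $\frac{10}{3}-\frac{4}{3}\ln\frac43-4\ln\frac98<2.479$. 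Without this computation the constant $2.479$ is unsupported. Likewise, your explanation of why $x=\frac16$ and $x=\frac13$ are the only exceptions (a ``rounding'' heuristic) does not match the actual mechanism, which is that $r_6=2$ and $r_3=1$ make $c/r_c=3$ exactly at $c\in\{3,6\}$; one must verify this from the table of $(c,r_c)$ values, which you do not produce.

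In short: right toolbox and right architecture, but the lemma bounding the per-configuration DE-sequence cost, the explicit $a(X)$ and $r_c$, the phase-by-phase dual snapshots, the $T^*$ reduction, and the asymptotic computation are all missing, and without them none of the three stated bounds on $f$ can be verified.
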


One important message of this theorem is the identification 
of a couple of
specific instances that seem especially hard to crack. For example,
we would be delighted to see a $(1,1/3)$ instance with an optimal CLP
target of $1$ and no allocation of min-value $2/3$. Furthermore, we
see that as
long as $\frac{\varepsilon}{T^*}$ is not too close to either of the
two problematic values, the integrality gap is substantially below $3$. 

As observed in~\cite{chantangwu} (and also explained in the proof
of Theorem~\ref{thm:smallepsilon}), the assumption $1\leq T^*<2$
captures the challenging case of the problem. Under this assumption,
the last part tells us that the integrality gap is less than 2.479
when $\varepsilon\rightarrow 0$.
This estimate compares favorably with an instance of the
problem given in~\cite{chantangwu}, that has integrality gap $2$ for
arbitrarily small $\varepsilon$.

We remark that the restriction on $\varepsilon$ in the theorem
is not crucial since, as mentioned earlier, there is 
a simple $\frac{1}{\varepsilon}$-approximation algorithm based on
bipartite matchings, which gives an approximation ratio $\leq 2$ if
$\varepsilon\geq \frac{1}{2}$. Moreover the restriction $x\leq 1$ is
also natural as $T^* \geq \varepsilon$ whenever $T^*$ is positive.

Finally, we note that our proofs in this paper can be turned into an
algorithmic procedure that constructs an allocation with the promised
min-value, but at the moment we have no control over the running
time. Thus  our results are in the same 
spirit as those of \cite{feige,asadpourfeigesaberi,
  chantangwu,jansenrohwedder,chengmao18b, chengmao19} in which the
strongest estimate on the integrality gap did not come with a
corresponding efficient algorithm to find an allocation. 
Nevertheless, together with the machinery of
\cite{bansalsviridenko}, our work can be used to efficiently
estimate the min-value of an optimal allocation.
As an application of such a theorem we can imagine a scenario where
Santa Claus might be prone to favoritism. Having
supernatural powers and plenty of summer leisure time at his
  traditional home at the North Pole, he can certainly calculate an optimal
allocation, yet may choose a suboptimal one benefitting his favorites. 
Our Theorem~\ref{thm:main-intro} combined with \cite{bansalsviridenko}
leads to a polynomial time
algorithm that parents can use to uncover any bias
Santa might have that is more blatant than $(\frac{15}{53} -\delta)$-times
the optimum.

\subsection{Related work}
The max-min allocation problem is also widely studied in the more general
case, where different players $p$ might have different utility value
$v_{pr}$ for resource $r\in R$. The Santa Claus problem corresponds to
the case when $v_{pr} \in \{ 0, v_r\}$.
This scenario was first considered by Lipton, Markakis, Mossel, and
Saberi~\cite{liptmarkmosssabe}. 
The NP-hardness result of Bez\'akov\'a and Dani~\cite{bezakovadani} about approximating
with a factor less than $2$ is still the best known for the general case.
Bansal and Sviridenko~\cite{bansalsviridenko} showed that their CLP
has an integrality gap of order $\Omega (\sqrt{|P|})$ for the general
problem. Asadpour and Saberi~\cite{asadpoursaberi} could match this
with an $O(\sqrt{|P|} \log^3 |P|)$-approximation algorithm using the CLP.
Chakrabarty, Chuzhoy, and Khanna~\cite{chakchuzkhan} 
give an $|R|^{\varepsilon}$-approximation algorithm for any constant
$\varepsilon$, that works in polynomial time, as well as a $O(\log^{10}
|R|)$-approximation algorithm that works in quasipolynomial time. 

The special case where each resource is coveted by only two
players is interesting algorithmically. In this case
Bateni, Charikar, and Guruswami~\cite{batenicharikarguruswami} 
showed that the Santa Claus problem is NP-hard to approximate to within a
factor smaller than $2$. Complementing this, Chakrabarty, Chuzhoy, and
Khanna~\cite{chakchuzkhan} give a $2$-approximation
algorithm, even if the values are unrestricted. 
The case when resources can be coveted only by three players is shown to be
equivalent to the general case~\cite{batenicharikarguruswami}.

For the classical dual scenario of min-max allocation
Lenstra, Shmoys, and Tardos~\cite{lenstrashmoystardos} gave a 2-approximation algorithm and
showed that it is NP-hard to approximate within a factor of $3/2$.
Using a configuration LP and a local search algorithm inspired by
those developed for the Santa Claus problem,
Svensson~\cite{svensson-minmax}
managed to break the factor $2$-barrier for the integrality
gap of the restricted version of the min-max allocation problem. 
Once more, this result comes with an efficient algorithm
to estimate the optimum value up to a factor arbitrarily
close to $\frac{33}{17}$,
but not with an efficient algorithm to find such an allocation.
The approximation factor was subsequently improved to $\frac{11}{6}$
by Jansen and Rohwedder~\cite{JR-11/6}, who
later~\cite{JR-quasi} also provided an algorithm that finds
such an allocation in quasipolynomial time.

\paragraph{Organization of the paper}
In Section~\ref{sec:topology} we present our topological tools and
describe our proof strategy. In Section~\ref{sec:warmup} we
demonstrate how our method works by giving a clean proof of the
known fact
that the integrality gap is at most $4$. In Section~\ref{sec:economical} we
introduce the main innovation that makes our improvement on the
integrality gap possible, and we use it in
Section~\ref{sec:mainproof} to prove Theorem~\ref{thm:main-intro}. In the 
subsequent Section~\ref{sec:proofs} we give the proof of the main statement from Section~\ref{sec:economical}.
Finally, in Section~\ref{sec:2val} we prove
Theorem~\ref{thm:smallepsilon} on the
two-values problem. Background and intuition for the topological notions we use are provided for the interested reader in the Appendix. We also provide in the Appendix a guide to the notation and terminology used throughout Sections~\ref{sec:introduction} to~\ref{sec:2val}.

\section{Topological tools and the proof strategy}\label{sec:topology}

\subsection{The setup} \label{sec:setup}
Let ${\cal I} = (P, R , v, \{L_p : p\in P\})$ be an instance of the
Santa Claus problem and let $T\in \mathbb{R}$ be a target such that
CLP($T$) is feasible.
A subset $e\subseteq L_p$ of coveted resources of some player $p\in P$
with the property that $v(e) > \alpha T$ and $v(e') \leq \alpha T$ for
every proper subset $e'\subset e$ is called an $\alpha$-{\it
  hyperedge}. We say that $p$ is the {\it owner} of $e$ or $e$ is an
$\alpha$-hyperedge of $p$. To
indicate this we might write $e^p$ if necessary.
Note that the hypergraph consisting of all $\alpha$-hyperedges is a
multihypergraph, since the same subset $e$ may be an
$\alpha$-hyperedge of several players $p$.
For example if an $\alpha$-hyperedge $e \subseteq L_p \cap L_q$ with
$p\neq q$, we will have both $e^p$ and $e^q$ in the multihypergraph.
An allocation with min-value greater than
$\alpha T$ constitutes choosing for every player $p\in P$ an 
$\alpha$-hyperedge of $p$, 
such that they are pairwise disjoint.\footnote{We note that defining $\alpha$-hyperedges to have value {\em at least} $\alpha T$ would capture more directly the integrality gap problem. However for our proof strategy the strict inequality turns out to be more natural.}

For $\alpha \in \mathbb{R}$,
the {\em $\alpha$-approximation allocation graph} 
$H({\cal I}, T, \alpha) = H(\alpha)$ is the auxiliary $|P|$-partite
graph with vertex set
$$V(H(\alpha)) = \cup_{p\in P} V_p, \mbox{ where } V_p = \{ e^p : \mbox{ $e
  \subseteq R$ is an
  $\alpha$-hyperedge of $p$} \},$$
and edge set 
$$E(H(\alpha)) = \{e^pf^q: p\neq q,  e\cap f \neq \emptyset \}.$$
An {\em independent transversal} in a vertex-partitioned graph such as
$H(\alpha)$ is an independent set (i.e. one that induces no edges)
that is a {\em transversal}, i.e. it 
consists of exactly one vertex in each partition class. Thus a
problem instance ${\cal I}$ with feasible CLP($T$) has an
allocation with min-value greater than $\alpha T$ for some $\alpha >0$
if and only if the $\alpha$-approximation allocation graph
$H({\cal I}, T, \alpha)$ has an independent transversal.  
Hence our Theorem~\ref{thm:main-intro} is implied by the following.

\begin{thm}\label{thm:main} Let $( P , R, \{ L_p: p\in P \}, v)$
  be an instance of the Santa Claus
  problem and let $T\in \mathbb{R}$ be such that the CLP($T$) is feasible.
  Then the corresponding $\alpha$-approximation allocation graph
  $H(\alpha)$ has an independent transversal with $\alpha = \frac{15}{53}$. 
\end{thm}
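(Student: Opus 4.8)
The plan is to establish Theorem~\ref{thm:main} by showing that the $\alpha$-approximation allocation graph $H(\alpha)$ with $\alpha=\frac{15}{53}$ has an independent transversal, via the topological criterion of Aharoni--Haxell and Meshulam: if the independence complex $\mathcal{I}(H)$ of a graph $H$ with vertex partition $(V_p)_{p\in P}$ satisfies $\conn(\mathcal{I}(H[\cup_{p\in S}V_p])) \geq |S|-1$ for every $S\subseteq P$, then $H$ has an independent transversal. So the goal reduces to lower-bounding the connectedness of the independence complex of every induced subgraph on a union of partition classes.

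To control this connectedness I would invoke Meshulam's graph process (the theorem of~\cite{meshulam2}): starting from $H(\alpha)$, one repeatedly deletes either an edge or the closed neighborhood of a vertex, and each such move drops the bound on $\conn(\mathcal{I})$ by at most $1$ (edge deletion) or $0$ (closed-neighborhood deletion), so that if the process runs for at least $|S|$ steps on each induced subgraph $H[\cup_{p\in S}V_p]$ before exhausting it, the required inequality $\conn \geq |S|-1$ holds. The design problem is therefore to choose, for each subset $S$ of players, a dismantling order of the vertices $\cup_{p\in S}V_p$ — essentially a sequence of ``economical'' closed-neighborhood deletions interleaved with a few edge deletions — that survives at least $|S|$ rounds. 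Here the combinatorial heart of the matter is that we may always find, as long as fewer than $|S|$ vertices of some class $V_p$ have been removed, a vertex $e^p$ (an $\alpha$-hyperedge of some not-yet-matched player) whose closed neighborhood in the current graph can be removed while ``making progress,'' i.e. without prematurely emptying any class; the value $\frac{15}{53}$ is exactly what the accounting forces.

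The key quantitative step is to certify, using the dual of CLP($T$), that the process can run long enough. Since CLP($T$) is feasible, LP duality gives a fractional assignment of weights to resources (a solution to the covering side of the dual) witnessing that every player has ``enough'' $\alpha$-hyperedge mass; this is the device that guarantees, at every stage of the dismantling, the existence of a vertex to delete whose closed neighborhood consumes only a controlled amount of resource ``budget.'' Concretely I would track, for the current residual graph on $\cup_{p\in S}V_p$, the total value of resources already committed, and show that as long as fewer than $|S|$ rounds have elapsed this quantity is strictly less than what the dual solution provides each remaining player — hence some player still has an available $\alpha$-hyperedge, which furnishes the next deletion. The two innovations announced for Section~\ref{sec:economical} (the ``economical'' deletions and the fattened/modified line graph) are what tighten this budget analysis from the $\alpha=\frac14$ of the warm-up down to $\alpha=\frac{15}{53}$.

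The main obstacle, I expect, is precisely this budget bookkeeping: one must simultaneously handle many induced subgraphs $H[\cup_{p\in S}V_p]$, argue that the local move chosen for the full graph restricts correctly to each $S$, and push the constant as low as possible by allowing the deletion of closed neighborhoods of cleverly chosen vertices (not just arbitrary ones) so that each round removes few resources relative to the slack in the dual. Making the induction go through uniformly over all $S$ — rather than re-running a process per subset — and squeezing the worst case of the analysis to yield exactly $\frac{15}{53}$ rather than a weaker constant is where the intricate, problem-tailored work lies; the topological input itself (Meshulam's theorem plus the Aharoni--Haxell/Meshulam transversal criterion) is used essentially as a black box.
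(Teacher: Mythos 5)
Your high-level scaffold — invoke the Aharoni--Haxell/Meshulam transversal criterion, lower-bound the connectedness of the induced independence complexes via Meshulam's deletion/explosion process, and certify the length of the process with the CLP dual — matches the paper's framework, which is essentially what Sections~\ref{sec:topology} and~\ref{sec:warmup} set up. But there are two problems, one small and one fatal.

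The small one is a misstatement of the accounting in Meshulam's theorem. You write that edge deletion drops the connectedness bound by $1$ and closed-neighborhood deletion (explosion) drops it by $0$; it is the other way around. By Theorem~\ref{thm:meshulam} an explodable edge $e$ satisfies $\eta(G)\geq\eta(G\explode e)+1$, so it is the explosions that each spend one unit of $\eta$-budget, while deletions of deletable edges are free. Correspondingly, the relevant quantity is not ``the number of steps before exhausting the subgraph'' but the number of explosions (the length $\ell(\sigma)$ in Observation~\ref{obs:sequence}), or the occurrence of a KO-sequence. You also say one should avoid ``re-running a process per subset,'' but the paper's argument does exactly that: the DE-sequence is built anew for each $U\subseteq P$, and the bound $|U|-|F_U|$ is established separately for each. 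There is no uniformity obstacle here.

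The fatal problem is that essentially all of the content that pushes $\alpha$ from $\tfrac14$ (the warm-up accounting, where each explosion of $e^pf^q$ is charged the cost of three blocks) to $\tfrac{15}{53}$ is absent. You gesture at ``economical closed-neighborhood deletions'' and say ``the value $\frac{15}{53}$ is exactly what the accounting forces,'' but you do not say what the economical moves are or how the accounting goes. The paper's actual mechanism has several interlocking pieces, none of which appear in your proposal: (i) the notion of a \emph{cheap} DE-sequence, one with a cover of value at most $2m\ell(\sigma)$, and a structural consequence (Theorem~\ref{thm:nongraph}) that in the absence of cheap or KO-sequences every $\alpha$-hyperedge available outside the cover has exactly two elements; (ii) Lemma~\ref{lem:jm}, which uses that structure to manufacture $7/3$- and $5/2$-DE-sequences whose covers are bounded in cardinality rather than value; (iii) a four-phase construction of the overall DE-sequence (first exhaust cheap/KO moves, then $7/3$-moves, then $5/2$-moves, then finish arbitrarily); (iv) four separate dual-LP ``snapshots'' taken after Phases 2, 3 (twice), and 4, each using a tailored dual solution (Propositions~\ref{prop:dual} and~\ref{prop:refined-dual}) to yield a linear inequality in $n_1,n_2,n_3,n_4$; and (v) a specific convex combination of those four inequalities valid when $T\geq\frac{53}{15}m$. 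Without some concrete version of this machinery, there is no way to get a constant below $4$: the warm-up argument alone caps you at $\alpha=\frac14$, and the improvement is entirely driven by proving that, structurally, most explosions can be covered more economically than three blocks' worth. Your proposal asserts this should be possible but offers no argument why, and no derivation of $\frac{15}{53}$.
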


\subsection{Topological tools} \label{sec:topological}
In this section we introduce the main topological tools needed and
 describe how we use them in our arguments. 
 
For a given graph $G$, let ${\cal J}(G) = \{ I \subseteq V(G) :
\mbox{$I$ is independent}\}$ be its {\em independence complex}. 
Following Aharoni and Berger~\cite{aharoniberger} we define $\eta(G)$
to be the (topological) connectedness of ${\cal J}(G)$ plus
$2$.  An advantage of this shifting by $2$
is that the formulas for the following simple properties of $\eta$
simplify (see e.g.~\cite{adamaszekbarmak, aharoniberger,
  aharonibergerziv}). (In fact Part (2) is true in much greater
generality, see e.g.~\cite{aharoniberger}, but this simple statement is all we require.)
\begin{fact} \label{fact:eta} Let $G$ be a graph.
  \begin{itemize}
  \item[(1)] $\eta (G) \geq 0$ with equality if and only if $G$ is the
    empty graph (i.e. the graph with no vertices).
  \item[(2)] If graph $G$ is the disjoint union of $G_1$ and a
    non-empty graph $G_2$ then $\eta (G) \geq \eta (G_1) + 1$. Moreover, if
    $G_2$ is a single (isolated) vertex then $\eta (G) = \infty$.
      \end {itemize}
    \end{fact}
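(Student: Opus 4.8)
The plan is to unwind the definition $\eta(G)=\conn({\cal J}(G))+2$, where $\conn$ denotes topological connectedness with the standard conventions: $\conn$ of the (topologically) empty space is $-2$, $\conn$ of any nonempty space is at least $-1$, and $\conn$ of a contractible space is $\infty$. The structural observation driving everything is that ${\cal J}(G)$, viewed as an abstract simplicial complex, always contains the empty face, so as a topological space it is empty precisely when $G$ has no vertices — if $G$ has a vertex $v$, then the $0$-simplex $\{v\}$ is a face of ${\cal J}(G)$.

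For part (1): since $\conn(X)\ge -2$ for every space $X$, we immediately get $\eta(G)\ge 0$ for all $G$. If $V(G)=\emptyset$ then ${\cal J}(G)=\{\emptyset\}$ is topologically empty, so $\conn({\cal J}(G))=-2$ and $\eta(G)=0$. Conversely, if $V(G)\ne\emptyset$ then ${\cal J}(G)$ is a nonempty complex, hence $\conn({\cal J}(G))\ge -1$ and $\eta(G)\ge 1>0$. Together these give ``$\eta(G)=0$ if and only if $G$ is empty''.

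For part (2): the key is the identity ${\cal J}(G_1\sqcup G_2)={\cal J}(G_1)\ast{\cal J}(G_2)$ (join of simplicial complexes), which holds because a subset of $V(G_1)\cup V(G_2)$ is independent in the disjoint union exactly when each of its two parts is independent in the corresponding $G_i$. I would then invoke the standard join-connectedness inequality $\conn(X\ast Y)\ge \conn(X)+\conn(Y)+2$ (a classical fact going back to Milnor; this is one of the basic properties of $\eta$ recorded in the references cited just before the statement), which after the ``$+2$'' shift reads $\eta(G_1\sqcup G_2)\ge \eta(G_1)+\eta(G_2)$. Since $G_2$ is non-empty, part (1) gives $\eta(G_2)\ge 1$, hence $\eta(G)\ge \eta(G_1)+1$. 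For the ``moreover'' clause, if $G_2$ is a single isolated vertex then ${\cal J}(G_2)$ is a point, so ${\cal J}(G)={\cal J}(G_1)\ast(\mathrm{pt})$ is the cone over ${\cal J}(G_1)$, which is contractible; therefore $\conn({\cal J}(G))=\infty$ and $\eta(G)=\infty$.

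The only genuinely delicate point is bookkeeping around the distinction between the void complex and the $(-1)$-sphere $\{\emptyset\}$, and checking that the join-connectedness inequality behaves correctly in the degenerate cases — in particular when ${\cal J}(G_1)=\{\emptyset\}$ (because $G_1$ also has no vertices), where the join with $\{\emptyset\}$ acts as the identity and the inequality degenerates to an equality. Everything else is a routine translation through the shift by $2$.
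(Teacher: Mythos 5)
Your proof is correct, but for the main inequality in part (2) it takes a genuinely different route from the paper's Appendix verification. There, the authors prove $\eta(G)\geq\eta(G_1)+1$ by running a DE-sequence confined to the $G_2$ component: by Theorem~\ref{thm:meshulam} every edge of $G_2$ is deletable or explodable, so one can dismantle $G_2$ edge by edge; either an isolated vertex appears at some point (whence $\eta(G)=\infty$ by the cone argument), or $G_2$ disappears entirely, which requires at least one explosion, and Observation~\ref{obs:sequence}(ii) yields $\eta(G)\geq\eta(G_1)+1$. You instead use the join decomposition ${\cal J}(G_1\sqcup G_2)={\cal J}(G_1)\ast{\cal J}(G_2)$ together with the classical (Milnor) join-connectedness inequality, which after the $+2$ shift gives the stronger $\eta(G)\geq\eta(G_1)+\eta(G_2)$. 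Both routes lean on a nontrivial external ingredient: the paper reuses Meshulam's theorem, which it needs anyway, and so keeps $\eta$ a black-box combinatorial parameter characterized by Fact~\ref{fact:eta} and Theorems~\ref{thm:aharoniberger}--\ref{thm:meshulam}; your version invokes a separate piece of classical algebraic topology but is topologically more transparent and slightly sharper. For part (1) and the ``moreover'' clause the two arguments essentially coincide (nonempty vertex set $\Leftrightarrow$ nonempty realization, and coning by an isolated vertex gives a contractible complex), with the paper phrasing them through the combinatorial definition of $k$-connectedness via $Z_2$-cycles rather than the topological one. Your caution about the void complex versus $\{\emptyset\}$ is well placed and handled correctly.
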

Intuitively, $\eta (G)$ represents the smallest dimension of a
``hole'' in the geometric realization of the abstract simplicial complex
${\cal J}(G)$. For the purposes of this paper it suffices to regard
$\eta$ strictly as a graph parameter satisfying Fact~\ref{fact:eta} and
the upcoming Theorems~\ref{thm:aharoniberger} and \ref{thm:meshulam}.
However, for the interested reader we provide the formal
definition, background and some intuition in the Appendix. 
    
Our proof of Theorem~\ref{thm:main} is based on two key theorems
involving the parameter 
$\eta$. The first one provides a sufficient Hall-type condition for the
existence of independent transversals. This result was
implicit already in~\cite{aharonihaxell} and~\cite{meshulam1}, and was
first observed by Aharoni.\footnote{
According to~\cite{meshulam1}, it was
noted by Aharoni (via private communication) that the method
of~\cite{aharonihaxell} implies
Theorem~\ref{thm:aharoniberger} for line graphs  
of hypergraphs. However, the special properties of line graphs are not
essential to the proof, so this version also captures the main essence
of Theorem~\ref{thm:aharoniberger}.}
It was first stated explicitly in the   
form below in~\cite{meshulam2} (see also~\cite{aharoniberger}). 
Let $I$ be an index set and $J$ be an $|I|$-partite graph with
vertex partition $V_1, \ldots , V_{|I|}$. For a subset $U\subseteq I$
we denote by $J|_U$ the induced subgraph 
$J[\cup_{i\in U}V_i]$ of $J$ defined on the vertex set $\cup_{i\in U} V_i$. 
\begin{thm}\label{thm:aharoniberger}
Let $I$ be an index set and $J$ be an $|I|$-partite graph with
vertex partition $V_1, \ldots , V_{|I|}$. If for every subset $U\subset I$ we have
$\eta\left(J|_U\right) \geq \abs{U}$, then there is an
independent transversal in $J$.
\end{thm}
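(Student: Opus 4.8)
The plan is to prove Theorem~\ref{thm:aharoniberger} by combining the connectedness hypothesis with Sperner's lemma. Write $I=\{1,\dots,n\}$ and let $\Delta$ be an abstract $(n-1)$-simplex with vertex set $I$; for a face $F\subseteq\Delta$ let $U_F\subseteq I$ denote its vertex set, so $\dim F=\abs{U_F}-1$. The goal is to construct a simplicial subdivision $K$ of $\Delta$ together with a simplicial map $g\colon K\to\mathcal J(J)$ enjoying the \emph{face property}: for every face $F$ of $\Delta$, the restriction of $g$ to the part $K_F$ of $K$ subdividing $F$ takes its values in the subcomplex $\mathcal J(J|_{U_F})$. (Here I use the hypothesis for \emph{every} subset of $I$, the set $I$ itself included; without the case $U=I$ the statement already fails when $\abs{I}=2$.)

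This map is built by induction on the dimension of $F$, i.e.\ over the skeleta of $\Delta$, refining $K$ as we proceed. On a $0$-face $\{i\}$ we must place $g(\{i\})$ in the vertex set of $\mathcal J(J|_{\{i\}})$, which is $V_i$; this is possible since $\eta(J|_{\{i\}})\ge 1$ forces $V_i\neq\emptyset$ by Fact~\ref{fact:eta}(1). Inductively, suppose $g$ has been defined (on a suitable subdivision) over the boundary $\partial F$ of a $d$-face $F$, and set $U:=U_F$, so $\abs{U}=d+1$. By the face property applied to the proper subfaces of $F$, whose index sets lie in $U$, this partial map sends $\partial F\cong S^{d-1}$ into $\mathcal J(J|_U)$. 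Since $\eta(J|_U)\ge\abs{U}=d+1$, the complex $\mathcal J(J|_U)$ is $(d-1)$-connected, so --- after subdividing $F$ compatibly with the subdivision already fixed on $\partial F$ and a simplicial approximation --- the map extends over $F$ with values in $\mathcal J(J|_U)$ (for $d=0$ this is the previous sentence; for $d=1$ it is path-connectedness, $\eta\ge 2$; for $d\ge 2$ it is genuine $(d-1)$-connectedness). Doing this across all faces of each successive skeleton, and finally over $\Delta$ itself, yields the desired $K$ and $g$.

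Now colour each vertex $x\in V(K)$ by the unique index $c(x)\in I$ with $g(x)\in V_{c(x)}$; this is well defined because $g(x)$ is a single vertex of $J$. If $x$ lies in the relative interior of a face $F$ of $\Delta$, the face property gives $g(x)\in\mathcal J(J|_{U_F})$, hence $c(x)\in U_F$; thus $c$ is a Sperner colouring of the triangulation $K$ of $\Delta$. By Sperner's lemma some $(n-1)$-simplex $\tau$ of $K$ is rainbow, say with vertices $x_1,\dots,x_n$ and $c(x_j)=j$. As $g$ is simplicial into $\mathcal J(J)$, the image $\{g(x_1),\dots,g(x_n)\}$ is a face of $\mathcal J(J)$, i.e.\ an independent set of $J$; and $g(x_j)\in V_j$ for each $j$, so this set is a transversal. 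Hence it is an independent transversal of $J$, as required. I expect the main obstacle to be the inductive construction of $g$ in the second step: one must organise the repeated compatible subdivisions so that the map stays simplicial and consistent along shared faces, and one must handle the bottom of the connectivity ladder with care --- ``$(d-1)$-connected'' reads as ``nonempty'' for $d=0$, ``connected'' for $d=1$, ``simply connected'' for $d=2$ --- which is exactly the regime where $\eta$ bundles several conditions together and one leans on Fact~\ref{fact:eta}.
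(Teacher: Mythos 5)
Your proposal is correct and follows essentially the same route that the paper itself describes: build a triangulation of the $(|I|-1)$-simplex and a simplicial map into the independence complex, skeleton by skeleton, using the connectedness hypothesis at each extension step, then extract a rainbow simplex via Sperner's lemma. Note that the paper does not actually prove Theorem~\ref{thm:aharoniberger} --- it cites~\cite{aharonihaxell,meshulam1,meshulam2,aharoniberger} --- but the Appendix sketches precisely this triangulation-plus-Sperner argument ``following the proofs of Theorems~\ref{thm:aharoniberger} and~\ref{thm:meshulam}'', and your construction of $K$, the face property, the Sperner colouring by ``owner,'' and the final rainbow simplex match that sketch. One small but worthwhile observation you make: the hypothesis must be read with $U\subseteq I$ rather than $U\subsetneq I$ (a single edge with $|I|=2$ kills the proper-subset version), and indeed the paper applies the theorem with $U\subseteq P$ in Section~\ref{sec:topology}. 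One small technical remark in the other direction: since the paper's combinatorial definition of $k$-connectedness is already phrased as extending simplicial maps from triangulated sphere boundaries to triangulated balls, you do not actually need to invoke simplicial approximation or pass through continuous maps at any point --- the extension step is directly the definition of $\eta(J|_U)\geq |U|$ applied to $\partial F$; organising the compatible refinements across faces of a fixed dimension is the only bookkeeping, as you anticipate.
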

The formal resemblance of Theorem~\ref{thm:aharoniberger} to Hall's Theorem for
matchings in bipartite graphs is no coincidence: the latter is a
consequence of the former.
Indeed, for a bipartite graph $B = (X\cup Y, E)$ satisfying
Hall's Condition we can define an $|X|$-partite (simple) graph $J(B)$,
where for every $x\in X$ there is a part $V_x = \{ y^x : y\in N_B(x) \}$ and
$y_1^{x_1}y_2^{x_2}$ is an edge if and only if $y_1=y_2$. Then a matching
of $B$ saturating $X$ corresponds to an independent transversal in
$J(B)$.
For a subset $U\subseteq X$, the subgraph $J(B)|_U$ is the union of
$|N(U)|$ disjoint cliques, so $\eta(J(B)|_U) \geq |N(U)| \geq |U|$ by
Properties (1) and (2) in Fact~\ref{fact:eta} and Hall's condition.   

Our second tool is a theorem of Meshulam~\cite{meshulam2}, reformulated
in a way that is particularly well-suited for our
arguments. Let $G$ be a graph, and let $e$ be an edge of $G$.
We denote by $G - e$ the graph obtained from $G$
by deleting the edge $e$ (but not its end vertices).
We denote by $G \explode e$ the graph obtained from $G$ by removing
both endpoints of $e$ and all of their neighboring vertices.
The graph $G \explode e$ is called $G$ with $e$ \emph{exploded}.

\begin{thm} \label{thm:meshulam}
Let $G$ be a graph and let $e \in E(G)$, such that $\eta(G-e) >
\eta(G)$. Then we have that $\eta(G) \geq \eta(G \explode e) + 1$.
\end{thm}

Inspired by Meshulam's Theorem we call an edge $e$ of $G$ {\em
  deletable} if $\eta(G-e) \leq \eta (G)$ and {\em explodable} if
$\eta(G\explode e) \leq \eta (G)-1$. By the theorem, if an edge is not
deletable then it is explodable.
A {\em deletion/explosion sequence, or DE-sequence,
  starting with graph $G_{start}$} is a sequence of
operations, which, starting with $G_{start}$, in each step either deletes a
deletable edge or explodes an explodable edge in the current graph.
The {\em length} $\ell (\sigma)$ of the sequence $\sigma$ is the
number of explosions in $\sigma$.  
A DE-sequence is called a {\em KO-sequence} if its outcome is a graph with an isolated vertex.
The  following are simple yet crucial properties of DE-sequences.   

\begin{obs}\label{obs:sequence}
  Let $G$ be the outcome of a DE-sequence $\sigma$ of length
  $\ell$, starting with $G_{start}$. Then the following are true.
  \begin{itemize}
 \item[(i)] $\eta (G_{start}) \geq \eta (G) + \ell$.
\item[(ii)] If $\sigma$ is a KO-sequence  then $\eta  (G_{start}) =\infty$. 
\item[(iii)] 
For any vertex $w\in V(G)$, there is a (possibly empty) sequence of deletions starting from $G$, after which $w$ is either an isolated vertex or  some edge of $G$ incident to $w$ can be exploded. 
  
  \end{itemize}
     \end{obs}
     \begin{proof}
Part (i) follows since during performing the DE-sequence $\sigma$
   the deletion of a deletable edge does not
increase the value of $\eta$ and the explosion of an explodable
edge decreases the value of $\eta$ by at least $1$. 
For (ii), by (i) and by Fact~\ref{fact:eta}(2) we have
$\eta (G_{start}) \geq \eta (G) = \infty $.
For (iii) let us consider all edges of $G$ incident to $w$, in some order.  If we can delete all of them following the order, then $w$ becomes isolated. Otherwise after some (possibly $0$) number of deletions, the next incident edge in the order is not deletable. By Meshulam's theorem this edge is explodable.  
\end{proof}
 In Appendix~\ref{sec:example} we give a small concrete example
demonstrating how to use DE-sequences to obtain a lower bound on $\eta$.

\subsection{The proof strategy}\label{sec:strategy}
Let $T$ be such that the $CLP(T)$ of instance ${\cal I}$  with the target $T$
has a feasible solution.
Our proof strategy is to take, for our chosen $\alpha$,
the graph $H(\alpha)$ defined in Section~\ref{sec:setup}
and use Theorem~\ref{thm:aharoniberger} to derive
the existence of an independent transversal in it.

Those $\alpha$-hyperedges that contain a single resource will have a special status. 
A resource $r\in R$ is called {\em fat} if $v_r >
\alpha T$, otherwise it is called {\em thin}.
The set $F = F(\alpha) := \{ r\in R: v_r > \alpha T \}$ is the set of
fat resources. Any set $S\subseteq R$ of resources with $S\cap F =
\emptyset$ is called thin. We will in particular be speaking of {\em thin
  $\alpha$-hyperedges} and {\em thin configurations}.
Note that an $\alpha$-hyperedge is thin if and only if it contains at
least two elements. The corresponding vertices
of $H(\alpha)$ are also called thin. 
For a fat resource $r\in R$, the singleton $\{ r \}$ is called a {\em fat
$\alpha$-hyperedge}, and if $r\in L_p$ then 
$r^p$ is called a {\em fat vertex} of $H(\alpha)$. 
Each fat resource $r\in F$ corresponds to a clique $C_r :=
\{ r^p : r\in L_p \}$ in  $H(\alpha)$ which forms a component, since no other
$\alpha$-hyperedge contains $r$ (due to their minimality).

As we show next, we can shift our main focus to the subgraph
$J(\alpha) := H(\alpha)- \cup_{r\in F} C_r$ of $H(\alpha)$
induced by the set of thin vertices.  
To verify the condition  of Theorem~\ref{thm:aharoniberger} we need to
consider an arbitrary subset $U\subseteq P$ of the players and 
the corresponding induced subgraph $H(\alpha)|_U$ of $H(\alpha)$.
By Fact~\ref{fact:eta}(2) the disjoint clique components corresponding to fat
vertices $r\in F_U : = F\cap (\cup_{p\in U} L_p)$ each contribute at least one to the
value of $\eta(H(\alpha)|_U)$. 
We thus need to prove that for the remaining
graph we have $\eta (J(\alpha)|_U) \geq |U|-|F_U|$.

To that end, starting with $G_{start}=J(\alpha)|_U$ we will specify a 
DE-sequence $\sigma$ and prove that either $\sigma$ is a KO-sequence
or $\ell(\sigma) \ge |U|-|F_U|$. In the former case
Observation~\ref{obs:sequence}(ii) implies
$\eta (J(\alpha)|_U) = \infty$. In the latter case, denoting by $G_{end}$ the final graph of $\sigma$,  Observation~\ref{obs:sequence}(i) and Fact~\ref{fact:eta}(1) imply
$\eta (J(\alpha)|_U) \geq \eta(G_{end}) + |U|-|F_U| \geq |U|-|F_U|$.
In both cases we have that $$\eta(H(\alpha)|_U) \geq \eta (J(\alpha)|_U)
+ |F_U| \geq |U|,$$
so the condition of Theorem~\ref{thm:aharoniberger}
is verified. Hence there exists an independent transversal in
$H(\alpha)|_U$ and we are done.
We have just proved the following.
\begin{thm}\label{thm:diet}
  Let ${\cal I} = ( P ,R, v, \{ L_p: p\in P \})$ be a problem instance
  and $T \in \mathbb{R}$ such that CLP($T$) has a feasible solution. 
 Suppose for every $U\subseteq P$ there exists a
 DE-sequence $\sigma$ starting with
  $G_{start}=J(\alpha)|_U$ such that either $\sigma$ is a KO-sequence, or
  $\ell(\sigma)\geq |U|-|F_U|$. Then $H(\alpha)$ has an independent
  transversal.
\end{thm}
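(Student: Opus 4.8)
The plan is to verify the hypothesis of Theorem~\ref{thm:aharoniberger} for the graph $H(\alpha)$ equipped with its natural $|P|$-partition $\{V_p\}_{p\in P}$. Concretely, I would fix an arbitrary $U\subseteq P$ and prove $\eta(H(\alpha)|_U)\geq |U|$; the existence of an independent transversal in $H(\alpha)$ then follows at once from that theorem.

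The first step is to isolate the contribution of the fat resources. For each $r\in F_U := F\cap\bigcup_{p\in U}L_p$, the clique $C_r=\{r^p : p\in U,\ r\in L_p\}$ is, by minimality of $\alpha$-hyperedges, a connected component of $H(\alpha)|_U$, and $H(\alpha)|_U$ is exactly the disjoint union of $J(\alpha)|_U$ with the components $\{C_r : r\in F_U\}$. Peeling these off one at a time and applying Fact~\ref{fact:eta}(2) at each step (each $C_r$ being non-empty) yields
\[
\eta(H(\alpha)|_U)\ \geq\ \eta(J(\alpha)|_U)+|F_U| .
\]
Hence it suffices to establish $\eta(J(\alpha)|_U)\geq |U|-|F_U|$.

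For this last inequality I would invoke the DE-sequence $\sigma$ promised by the hypothesis, with $G_{start}=J(\alpha)|_U$, and split according to the stated dichotomy. If $\sigma$ is a KO-sequence, then Observation~\ref{obs:sequence}(iii) gives $\eta(J(\alpha)|_U)=\infty\geq |U|-|F_U|$. Otherwise $\ell(\sigma)\geq |U|-|F_U|$, and writing $G_{end}$ for the final graph of $\sigma$, Observation~\ref{obs:sequence}(ii) together with Fact~\ref{fact:eta}(1) gives $\eta(J(\alpha)|_U)\geq \eta(G_{end})+\ell(\sigma)\geq 0+(|U|-|F_U|)$. In either case $\eta(J(\alpha)|_U)\geq |U|-|F_U|$, so $\eta(H(\alpha)|_U)\geq |U|$, and Theorem~\ref{thm:aharoniberger} applies.

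I expect this assembly to be entirely routine: it is just a bookkeeping argument gluing together Theorem~\ref{thm:aharoniberger}, Fact~\ref{fact:eta}, and Observation~\ref{obs:sequence}. The genuine obstacle lies not in this statement but in its hypothesis, namely producing, for every $U\subseteq P$, a DE-sequence starting from $J(\alpha)|_U$ that is either a KO-sequence or runs for at least $|U|-|F_U|$ explosions — which is precisely where the choice of $\alpha$, the fine structure of thin configurations, and the dual of the CLP must be brought to bear in the later sections. Theorem~\ref{thm:diet} should be viewed merely as the clean interface that reduces the whole problem to building those sequences one vertex subset at a time.
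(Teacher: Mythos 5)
Your proof is correct and follows essentially the same route as the paper: peel off the clique components $C_r$ ($r\in F_U$) using Fact~\ref{fact:eta}(2) to reduce to showing $\eta(J(\alpha)|_U)\geq |U|-|F_U|$, then settle this via the dichotomy for the promised DE-sequence using Observation~\ref{obs:sequence}(ii)--(iii) and Fact~\ref{fact:eta}(1), and finally invoke Theorem~\ref{thm:aharoniberger}. The only cosmetic difference is that you make the disjoint-union decomposition of $H(\alpha)|_U$ fully explicit before applying Fact~\ref{fact:eta}(2), whereas the paper states the $+|F_U|$ contribution directly; this is the same argument.
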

  
We remark that this approach to proving the existence of an
independent transversal using $\eta$ was described in terms of a game
in~\cite{aharonibergerziv}, and used in many settings, see
e.g. \cite{aharbergkotlziv, aharonibergersprussel,
   aharholzhowaspru, haxell2, haxell3, haxellnarins}.

With Theorem~\ref{thm:diet} we have reduced our task to constructing,  
for every $U\subseteq P$, a DE-sequence $\sigma$ starting with
  $G_{start}=J(\alpha)|_U$ such that either $\sigma$ is a KO-sequence, or
  $\ell(\sigma)\geq |U|-|F_U|$.  
To prove lower bounds on the length of a DE-sequence $\sigma$ that
starts with $J(\alpha)|_U$, 
we will maintain a cover $W\subseteq R$ of all $\alpha$-hyperedges that
correspond to vertices of $J(\alpha)|_U$, that
disappeared during explosions of $\sigma$, and control
the size of $W$.
If we are able to do this, then the complement of $W$ is large,
allowing us to find an $\alpha$-hyperedge in it and hence
extend the DE-sequence further. 
Note that deletions do not remove any vertices of $J(\alpha)|_U$.

More generally, we say $W$ is a {\em cover of the DE-sequence $\sigma$}
starting with a subgraph $G_{start} \subseteq J(\alpha)|_U$ and
ending with $G_{end}$ if 
\begin{itemize}
\item[($\star$)]  every vertex $e^p$ of
$G_{start}$ with $e\cap W = \emptyset$ is present in $G_{end}$.
\end{itemize}
The natural choice to cover the $\alpha$-hyperedges corresponding
to vertices that disappeared from $G_{start} \subseteq J(\alpha)|_U$ 
during the explosions in a DE-sequence $\sigma$ is $\bigcup (e\cup f)$,
where the union is over all edges $e^pf^q$ of $G_{start}$
exploded in $\sigma$. This will be called the {\em basic cover} of
$\sigma$. Note that for the basic cover $W_{\sigma}$,
every vertex $h^s$ of $G_{start}$ with $h\cap W_{\sigma} =
\emptyset$ is unaffected by each explosion that happened during
$\sigma$ and hence is still present in the graph $G_{end}$.

In the next subsection we will demonstrate how the simple accounting
by adding up the values of the basic covers of the explosions of an
arbitrary DE-sequence starting with $J(\alpha)|_U$ and ending with a
graph with no edges
is already sufficient to derive the existence of an allocation of
min-value greater than  $\frac{1}{4}T$.
To achieve our improved bounds in Theorem~\ref{thm:main}, in Sections~\ref{sec:economical} and
\ref{sec:mainproof} we will choose our DE-sequences
and account for their accompanying covers more carefully.

\section{The demonstration of the method} \label{sec:warmup}

In this section, we apply our method to verify
Theorem~\ref{thm:main} for the ratio $\alpha=\frac{1}{4}$. We emphasize that
this can easily be proved by using instead the combinatorial method
of~\cite{haxell}; indeed, as described in the Introduction, this approach
and intricate refinements of it
have been the basis of essentially all progress on the integrality gap
of the CLP for this problem since the pivotal paper
of~\cite{asadpourfeigesaberi}. 
The aim of this section is to re-prove the basic ratio of
$\frac{1}{4}$ using our topology-based proof strategy, to establish the
context for later refinements that we employ in the rest of the paper, and that lead to
the improved ratio $\alpha=\frac{15}{53}$ in Theorem~\ref{thm:main}.

Our setup in this section is as follows.

\bigskip
\noindent{\textbf{Setup~\ref{sec:warmup}.}}
Let ${\cal I} = (P, R , v, \{L_p : p\in P\})$ be an instance of the
Santa Claus problem and let $T\in \mathbb{R}$ be a target such that
CLP($T$) is feasible. Fix $0 < \alpha < 1$ and let $U$ be
an arbitrary subset of $P$. 
\bigskip

Our approach to proving  Theorem~\ref{thm:main} with ratio $\alpha$
will be as described in Section~\ref{sec:strategy}: for the
arbitrarily chosen subset $U\subseteq P$, we will construct a
DE-sequence $\sigma$ starting with 
  $G_{start}=J(\alpha)|_U$ such that either $\sigma$ is a KO-sequence, or
  $\ell(\sigma)\geq |U|-|F_U|$.  Then by Theorem~\ref{thm:diet} we
  will have proved Theorem~\ref{thm:main} for this choice of
  $\alpha$. 

In this section, to prove  Theorem~\ref{thm:main}
for $\alpha=\frac14$, in fact it will suffice to choose an {\em
  arbitrary} DE-sequence $\sigma$ starting 
with $G_{start}=J(\alpha)|_U$ and ending with a graph $G_{end}$ with
no edges. This is possible by repeated application of
Observation~\ref{obs:sequence}(iii). 
If $G_{end}$ contains a vertex then $\sigma$ is a KO-sequence and we
are done, so we may assume that $G_{end}$ has no vertices.
We are left to show that $\ell (\sigma) \geq |U|-|F_U|$, provided that
$\alpha=\frac14$ (which we will assume only at the end of the argument).

To estimate the value of covers,  the following definition will be useful.
A subset $s\subseteq R$ is called a {\em block} if $v(s) \leq \alpha T$. %m$. 
Note then that any proper subset of an $\alpha$-hyperedge is 
a block. 

We estimate the value of the basic cover $W_{\sigma}$ by simply adding
up estimates for the basic covers of its individual explosions. 
\begin{obs} \label{obs:basic} With the assumptions of
  Setup~\ref{sec:warmup} %, let $m=\alpha T$,  and 
 suppose $e^pf^q$ is an explodable edge in a subgraph $G$ of $J(\alpha)|_U$.
Then the value of its basic cover $e\cup f$ is at most $3\alpha T$. %m$.
\end{obs}
\begin{proof}
The cover $e\cup f$ is a subset of the union of three blocks:
$(e\setminus \{ x \}) \cup \{ x \} \cup (f\setminus e)$, where $x\in e$
is arbitrary. Indeed, $f\setminus e$ is a block since it is a proper
subset of $f$, and both $e\setminus\{x\}$ and $\{ x\}$ are blocks since they are
proper subsets of 
$e$. For this recall that all $\alpha$-hyperedges under consideration are
thin. Consequently  $v(e\cup f) \leq 3\alpha T$. %3m$.
\end{proof}
Hence $v(W_{\sigma}) \leq 3\alpha T\ell (\sigma)$. %3m\ell (\sigma)$.

To give a lower bound on this value,   we invoke the dual DCLP($T$) of the
configuration LP for instance $\cI$ and target
value $T$. 
In DCLP($T$) there is a variable $y_p\geq 0$ for each player $p\in P$, a
variable $z_r\geq 0$ for each resource $r\in R$, and for each configuration
$S\in {\cal C}_p(T)$ there is a constraint $$y_p \leq \sum_{r \in S}
z_r.$$
The objective function, which is to be maximized, is  $$\sum_{p\in P}y_p -\sum_{r\in R}z_r.$$
We will use the dual as a convenient way to verify certain inequalities by checking the feasibility of well-chosen solutions to DCLP($T$).
Since CLP($T$) is minimization problem with objective function $0$,  weak duality amounts to the following observation. 
\begin{obs}\label{obs:weak-duality}  Let $\cI$ be an instance of the Santa Claus problem and let $T\in\mathbb{R}$ be such that the CLP($T$) for $\cI$ is feasible. If $y \in \mathbb{R}^P$ and $ z \in \mathbb{R}^R$ represent a feasible solution of the DCLP($T$) for $\cI$ then $$\sum_{p\in P}y_p - \sum_{r\in R}z_r \leq 0.$$
\end{obs}
 
The following consequence of Observation~\ref{obs:weak-duality}, as well as its more refined version (Proposition \ref{prop:refined-dual}), will be applied repeatedly throughout our paper. Here it will provide a lower bound on the value of $W_{\sigma}$.

\begin{prop}\label{prop:dual} With the assumptions of
    Setup~\ref{sec:warmup}, let $c\geq 0$ and 
  $Y\subseteq R\setminus F$ be such that $v(Y\cap S) \geq 
  c$ for every thin configuration $S \in {\cal C}_p(T)$ for $p\in U$.
  Then 
$$v(Y) \geq c (|U|-|F_U|).$$
\end{prop}

\begin{proof} We define a feasible  solution to DCLP($T$) and then
  invoke Observation~\ref{obs:weak-duality}. Let 
\begin{align*} 
y_p & = \left\{ 
\begin{array}{cc} 
0 & p\not\in U \\
c & p\in U 
\end{array}
\right.
\ \ \ 
z_r = \left\{
\begin{array}{cc}
c & r\in F_U \\
v_r & r\in Y\\
0 & \mbox{otherwise}
\end{array}
\right.
\end{align*}
To check the feasibility of the solution, let 
$S\in {\cal C}_p(T)$ be an arbitrary configuration.
If $p\not\in U$, then $y_p=0$ and the corresponding constraint holds
by the non-negativity of the $z_r$. 
If $p\in U$, then $y_p =c$. If there is a fat resource $s\in S$, then
$s \in F_U$, so $\sum_{r\in S}z_r \geq
z_s = c =y_p$. Otherwise $S\cap F = \emptyset$ and hence $\sum_{r\in S}z_r \geq \sum_{r\in
   S\cap Y} v_r \geq c = y_p$. 
So the solution is feasible. Observation~\ref{obs:weak-duality} then implies $0\geq \sum_{p\in P}y_p -\sum_{r\in R}z_r = |U|c
- (|F_U|c +\sum_{r\in Y}v_r )$ and the claim follows.
\end{proof}

Now we assume $\alpha=  \frac14$. %, that is $T= 4\alpha T$. %m$.  
To obtain a lower bound on
$v(W_{\sigma})$ we apply 
Proposition~\ref{prop:dual} with $U$, $Y=W_{\sigma}$ and $c = 3\alpha T$. %m$.
To that end we need to check $v(S\cap W_{\sigma}) \geq 3\alpha T$ %m$ 
for every
thin configuration $S\in {\cal C}_p(T)$ with $p\in U$.
Since  $v(S) \geq T  = 4\alpha T$ %m$ 
for every configuration $S$, it is enough to verify that
$v(S\setminus W_{\sigma}) \leq \alpha T$. %m$.
As $G_{end}$ has no vertices, Property ($\star$) of $W_{\sigma}$
implies that $R\setminus (F \cup W_{\sigma})$ should contain no $\alpha$-hyperedge 
of any $p\in U$. Consequently, for any thin configuration
$S\in {\cal C}_p(T)$ with $p\in U$,
the value of $S\setminus W_{\sigma}$ should not be large enough
for an $\alpha$-hyperedge. Hence $v(S\setminus W_{\sigma}) \leq \alpha T$ %m$ 
as needed.
Proposition~\ref{prop:dual} then implies $v(W_{\sigma}) \geq 3\alpha T %m
(|U|-|F_U|)$.
Combining this with $v(W_{\sigma}) \leq 3\alpha T %m
\ell(\sigma)$, we obtain
$\ell (\sigma) \geq |U| -|F_U|$ and we are done by Theorem~\ref{thm:diet}.

\section{Economical DE-sequences}\label{sec:economical}
In this section we start our proof of Theorem~\ref{thm:main} by
introducing a couple of important definitions and our main tool.
The key to improving the bound from the previous section is to improve
upon Observation~\ref{obs:basic}, whose proof amounts to saying that
any explosion has a cover that is the union of three blocks. 
For example, if the intersection of $\alpha$-hyperedges $e$ and $f$ happens
to contain at least two resources, then their explosion has a cover
that is the union of only two blocks, a clear savings over
Observation~\ref{obs:basic}.  Thus the lack of such explosions
introduces restrictions on the remaining graph $G$ and helps in
searching for not one, but perhaps a  sequence of two explosions which
has a cover that is the union of fewer than six blocks, the fewer the better,
again a savings over Observation~\ref{obs:basic}. The lack of such a
sequence introduces further restrictions that we can exploit. 

More generally,  our improvement on Section~\ref{sec:warmup} relies on finding
DE-sequences whose accounting (through their covers) is done more
economically when some of the explosions are packed together.
We use two different approaches, one based on total value (in the form of ``cheap DE-sequences") and the other based on total cardinality (``$i/j$-DE-sequences"). 

%Recall $m=\alpha T$. 
We say that a DE-sequence $\sigma$ is {\it cheap} if
there exists a cover of $\sigma$ of value at most $2\alpha T %m
\ell(\sigma)$.
Note that any sequence of deletions is a cheap DE-sequence of length $0$, hence 
by Meshulam's Theorem, if there is no cheap DE-sequence starting with graph $G^*$, every edge of $G^*$ is explodable.
The example above, of an explosion $e^pf^q$ with $|e\cap f| \geq 2$, is a cheap DE-sequence of length $1$, since $e\cup f$ is the union of two blocks.
In practice we often demonstrate that a DE-sequence $\sigma$ is cheap by exhibiting 
a cover that is a subset of the union of at most $2\ell (\sigma)$ blocks.

Our second type of ``economical'' DE-sequence is based on cardinality. 
For integers $1\leq j\leq i$, a DE-sequence $\sigma$ is called
  an {\em $i/j$-DE-sequence} if it has length $j$ and a cover of
  cardinality at most $i$. In our proofs $j$ will be either 2 or 3,
  and $i$ will be $2j+1$. Since in particular an $i/j$ sequence has
  $j$ explosions and a
  cover of value at most $i\alpha T$,% m$, 
  a $7/3$-sequence is
  ``more economical'' than a $5/2$-sequence, and both are ``more
  economical'' than the ``full price'' sequence used in
  Observation~\ref{obs:basic}.

Our main technical theorem tells us that, during the execution of a
DE-sequence $\sigma$, if $\sigma$ cannot be extended by a cheap DE-sequence 
(or a KO-sequence), and if some thin configuration still has
total value more than $j\alpha T$ %m$ 
outside the cover $W$ of $\sigma$, then we can
extend $\sigma$ by a $(2j+1)/j$-DE-sequence.

\begin{thm}\label{thm:jm}
Assume Setup~\ref{sec:warmup}. % and set $m=\alpha T$. 
 Let $G^* \subseteq  J(\alpha)|_U$ and  $W\subseteq R \setminus F$ be a subset of resources such
that ($\star$) holds with $G_{start}=J(\alpha)|_U$ and $G_{end}=G^*$.
Let $j=2$ or $3$. Suppose there is no
  KO-sequence and no cheap DE-sequence starting with $G^*$.
  If there is a thin configuration $C\in {\cal C}_p(T)$ with
  $p\in U$ and $v(C\cap W) < (1-j\alpha)T$, %T- j\alpha T$, %m$,
  then there exists a $(2j+1)/j$-DE-sequence starting with $G^*$.
  \end{thm}
Hence if we cannot continue $\sigma$ with any step that improves upon
Observation~\ref{obs:basic}, every thin configuration has large
intersection with the current cover $W$. This will be the key fact
that allows us to complete our proof, which is given in the following section.
The proof of Theorem~\ref{thm:jm}
is quite intricate, and is postponed to Section~\ref{sec:proofs}.

\section{Proof of Theorem~\ref{thm:main}} \label{sec:mainproof}
 
Our setup in this section is Setup~\ref{sec:warmup}.

\begin{proof}[Proof of Theorem~\ref{thm:main}]
 As we saw in Section~\ref{sec:strategy} and again in
 Section~\ref{sec:warmup}, to prove Theorem~\ref{thm:main} with the
 ratio $\alpha$ it is
 sufficient to verify that, given Setup~\ref{sec:warmup}, there
 exists a DE-sequence $\sigma$ 
starting with $G_{start} = J(\alpha)|_U$ such that either $\sigma$ is
a KO-sequence, or $\ell(\sigma ) \geq |U|-|F_U|$. 
Here we will show that this holds for $\alpha= \frac{15}{53}$, which again will only be used in the last step of the argument.

We define $\sigma$ in four phases.
Here $G$ denotes the current graph of the sequence. 
A {\em maximal cheap DE-sequence} is one that is not a proper initial subsequence of another cheap DE-sequence. 

\begin{itemize}
\item Phase 1. {\small WHILE} a KO-sequence or
  cheap DE-sequence $\tau$ exists in $G$,
  {\small DO} perform $\tau$
\item Phase 2. 
{\small WHILE} there exists a $7/3$-DE-sequence $\tau$ in $G$, {\small DO} 
                perform $\tau$ and then perform a maximal cheap
                DE-sequence.
\item Phase 3. 
{\small WHILE} there exists a $5/2$-DE-sequence $\tau$ in $G$, {\small DO} 
                perform $\tau$, and then perform a maximal cheap DE-sequence. 
\item Phase 4. {\small WHILE} $G$ has an edge {\small DO} perform a deletion or an
  explosion in $G$.
    \end{itemize}
The DE-sequence $\sigma$ is the concatenation of all the sequences $\tau$ over all the phases, in order.    When the procedure terminates, the final graph $G_{end}$ has no edge.
    If $G_{end}$ contains a vertex, then $\sigma$ is a
KO-sequence starting with $G_{start}$, as desired.
Therefore we may assume that $G_{end}$ has no vertices. Note that, in
this case, at no time during our procedure did there exist a KO-sequence starting at the current graph $G$.
    
Let $n_1$ denote the total number of explosions performed in the cheap
DE-sequences throughout Phases 1, 2, and 3, and
$W_1$ be the union of all the covers associated to these cheap
DE-sequences. By definition of cheap DE-sequence
we know that
\begin{align}\label{eq:W1}
  v(W_1) & \leq 2\alpha T %m
  n_1.
           \end{align}
For $j=2,3$, let  $n_j$ denote the number of explosions performed in
$7/3$-DE-sequences during Phase 2 and $5/2$-DE-sequences during Phase
3, respectively, and let $W_j$ be the union of their corresponding covers.
For a $(2j+1)/j$-DE-sequence the number of resources in the cover is $2j+1$
and the number of explosions is $j$. Hence
\begin{align}\label{eq:Wj-card}
  |W_{2}| & \leq \frac{7}{3}n_{2} \mbox{ and }  |W_{3}| \leq \frac{5}{2}n_{3}. 
\end{align}
For these sets it will also be useful to estimate their values. For
this, recall that each thin resource is of value
at most $\alpha T$. %m$. 
Therefore
\begin{align}\label{eq:Wj-value}
v(W_{2})
             & \leq \frac{7}{3}\alpha T %m
             n_{2}  \mbox{ and } v(W_{3})
              \leq \frac{5}{2}\alpha T%m
              n_{3}.
\end{align}
Let $n_4$ be the number of explosions performed in Phase 4 and $W_4$
the union of the basic covers corresponding to them.
Then by Observation~\ref{obs:basic} we have
\begin{align}\label{eq:W4}
  v(W_4) & \leq  3\alpha T%m 
  n_4.
\end{align}

The next proposition is a more refined version of
  Proposition~\ref{prop:dual}, and uses the dual DCLP($T$) in a way similar
  to that of Section~\ref{sec:warmup}. We will employ it to take
snapshots at various points 
during $\sigma$ in order to derive lower bounds involving linear
combinations of the quantities $n_j$, $j=1, 2,3,4$. 
 For each $j$, we will apply it with $Y$ being a
  cover of the initial subsequence of $\sigma$ defined up to the end
  of Phase $j$, and the lower bound on $v(Y_{\leq d}\cap S)$ will be
  ensured by Theorem~\ref{thm:jm}.

\begin{prop}\label{prop:refined-dual}
Assume the conditions of Setup~\ref{sec:warmup}. Let 
$0\leq d\leq c\leq 2d$ and  
$Y\subseteq R\setminus F$ be given, and let $Y_{\leq d} := \{ y\in Y :
  v_y \leq d \}$ and $Y_{> d} := \{ y\in Y : v_y > d \}$. Suppose that
  for every $p\in U$ and every thin configuration $S \in {\cal
    C}_p(T)$ with  
$|Y_{> d}\cap S| \leq 1$, we have
\begin{align*}
v(Y_{\leq d}\cap S) & \geq \left\{ 
\begin{array}{ll}
c       & \mbox{if $Y_{>d} \cap S = \emptyset$} \\
   c-d & \mbox{if  $|Y_{>d}\cap S| = 1$.}
 \end{array}
\right.
\end{align*}
Then 
for any partition of $Y = Y_1 \cup Y_2$ we have
$$ c|U| - c|F_U| \leq d|Y_1| + v(Y_2).$$
\end{prop}
 In fact we will apply this proposition only when $c=d$ and $c=2d$.
Note that we can recover Proposition~\ref{prop:dual}
by setting $d=c$ and using that $Y=Y_{>d} \cup Y_{\leq d}$ is a partition and $d|Y_{>d}| \leq v(Y_{>d})$. 

\begin{proof}
We define a feasible  solution to DCLP($T$) and then invoke Observation~\ref{obs:weak-duality}. Let
\begin{align*} 
y_p & = \left\{ 
\begin{array}{ll} 
0 & p\not\in U \\
c& p\in U 
\end{array}
\right.
\ \ \ 
z_r = \left\{
\begin{array}{cc}
  c & r\in F_U \\
 d & r\in Y_{> d}\\
 v_r & r\in Y_{\leq d} \\
0 & \mbox{otherwise.}
\end{array}
\right.
\end{align*}
To verify the feasibility, let $S\in {\cal C}_p(T)$ be an arbitrary configuration.
If $p\not\in U$, then $y_p=0$ and the corresponding constraint holds
by the non-negativity of the $z_r$. 

Otherwise $p\in U$ and we must check $\sum_{r\in S} z_r \geq y_p = c$. 

If there is a fat resource $s\in F\cap S \subset F_U$, then 
$\sum_{r\in S}z_r \geq z_s = c =y_p$.  
Otherwise $S\cap F = \emptyset$ and we make a case distinction based
on $|Y_{>d}\cap S|$. 
If $Y_{>d}\cap S = \emptyset$, then $\sum_{r\in
  S}z_r \geq \sum_{r\in S\cap Y_{\leq d}}z_r = v(S \cap Y_{\leq d})
\geq c$.

If $Y_{>d}\cap S = \{ s \}$, then $\sum_{r\in
  S}z_r \geq \sum_{r\in S\cap Y_{>d}}z_r + \sum_{r\in S \cap Y_{\leq d}}z_r
= z_s + v(S\cap Y_{\leq d}) \geq d + c-d = c$. 
  
Finally, if  $|Y_{>d}\cap S| \geq 2$, then $\sum_{r\in
  S}z_r \geq \sum_{r\in S\cap Y_{>d}}z_r \geq 2d \geq c$.

So in all cases the solution is feasible. 
Observation~\ref{obs:weak-duality} then implies that $$0\geq \sum_{p\in P}y_p -\sum_{r\in R}z_r = c|U| -
c|F_U| -d|Y_{>d}| - v(Y_{\leq d}).$$ 
To derive our conclusion note that 
\begin{align*}
  c|U| -c|F_U| & \leq d|Y_{>d}| + v(Y_{\leq d}) \\ & = d|(Y_1)_{>d}| +
  v((Y_1)_{\leq d})  +  d|(Y_2)_{>d}| + v((Y_2)_{\leq d}) \\
  & \leq d|(Y_1)_{>d}| +
  d|(Y_1)_{\leq d}|  +  v((Y_2)_{>d}) + v((Y_2)_{\leq d}) \\
  & = d|Y_1| + v(Y_2).
\end{align*}
\end{proof}

Corresponding to the last step of our proof in
Section~\ref{sec:warmup} where we applied Proposition~\ref{prop:dual},
here we derive our first inequality  
from Proposition~\ref{prop:refined-dual}, after Phase 2 is complete.

\begin{lem}\label{lem:phase2} Assume Setup~\ref{sec:warmup}. % and set  $m=\alpha T$. 
Then 
  \begin{align} \label{eq:phase2}
    |U|-|F_U| & \leq \frac{2\alpha}{1-3\alpha}%\frac{2m}{T-3m}
    n_1+\frac73n_2.
    \end{align}
\end{lem}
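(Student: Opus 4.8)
\textbf{Proof plan for Lemma~\ref{lem:phase2}.}
The plan is to take a snapshot of the situation immediately after Phase~2 terminates and feed it into Proposition~\ref{prop:dual}. Let $G$ be the current graph at the end of Phase~2, and let $W := W_1 \cup W_2'$ where $W_2'$ denotes the union of all covers accumulated so far --- i.e. the covers of all cheap DE-sequences run in Phases~1 and~2 together with the covers of all $7/3$-DE-sequences run in Phase~2. By the cover property ($\star$), $W$ is a cover of the combined DE-sequence from $G_{start}=J(\alpha)|_U$ to $G$, so every vertex $e^p$ of $J(\alpha)|_U$ with $e\cap W=\emptyset$ survives in $G$. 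Crucially, when Phase~2 terminates there is no KO-sequence, no cheap DE-sequence, and no $7/3$-DE-sequence starting with $G$.

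The heart of the argument is to show that for every thin configuration $S\in\mathcal C_p(T)$ with $p\in U$, we have $v(S\cap W)\ge \min\{T-3m,\ 3m\}$ or something of that shape that lets us apply the dual; more precisely, I want to verify the hypothesis of Proposition~\ref{prop:dual} with $Y=W$ and $c$ chosen appropriately. First I would argue that $S\setminus W$ cannot contain an $\alpha$-hyperedge $e$ with $|e|\ge 3$: if it did, then since there is no KO-sequence and no cheap DE-sequence starting with $G$, Theorem~\ref{thm:nongraph} applies (with $G^*=G$ and $W$ playing the role there), forcing $v(S\setminus W)\le 3m/2$ --- but wait, that is a conclusion, not a contradiction, so this branch actually directly bounds $v(S\setminus W)$. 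The other branch is when $S\setminus W$ contains no $\alpha$-hyperedge of size $\ge 3$ but does contain one of size $2$, or contains none at all. If $S\setminus W$ contains no $\alpha$-hyperedge at all, then $v(S\setminus W)\le m$. The remaining case --- $S\setminus W$ contains an $\alpha$-edge (size exactly $2$) but no larger $\alpha$-hyperedge --- is where Lemma~\ref{lem:jm} with $j=2$ comes in: if $v(S\cap W)<T-2m$, i.e. $v(S\setminus W)>2m$, then there would be a $7/3$-DE-sequence starting with $G$, contradicting the termination of Phase~2. Hence in this case $v(S\setminus W)\le 2m$, so $v(S\cap W)\ge T-2m$.

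Combining the cases: for every thin configuration $S$ of a player in $U$ we get $v(S\cap W)\ge \min\{\,v(S)-3m/2,\ v(S)-m,\ T-2m\,\}$. Since $v(S)\ge T$ always and since $\alpha=15/53$ guarantees $m<\alpha T$ so that $v(S)>m/\alpha = 53m/15 > 3m$, the quantities $v(S)-3m/2$ and $v(S)-m$ both exceed $3m/2$ and we may safely use the uniform lower bound $c := \min\{T-2m,\ \tfrac{3}{2}m\}$ --- actually to land exactly on the stated inequality I would instead split $W = W_1 \cup (W_2'\setminus W_1)$ and track the two contributions separately, using $v(S\cap W_1)$ via the ``extra'' value and the residual via $T-3m$; the cleanest route is to apply Proposition~\ref{prop:dual} with $c = T-3m$ after verifying $v(S\cap W)\ge T-3m$ (which follows from all the case bounds above since $T-3m \le T-2m$ and $T-3m < 3m/2$ in the relevant range), yielding $v(W)\ge (T-3m)(|U|-|F_U|)$. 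Then plug in $v(W)\le v(W_1)+v(W_3) \le 2mn_1 + \tfrac73 m n_2$ from \eqref{eq:W1} and \eqref{eq:Wj-value}, divide by $(T-3m)$... no: to get the stated form $|U|-|F_U|\le \tfrac{2m}{T-3m}n_1+\tfrac73 n_2$ one must treat the $W_1$-part with denominator $T-3m$ and the $W_2$-part with denominator... Here I expect the actual accounting to require applying the dual with the refined statement Proposition~\ref{prop:refined-dual}, partitioning $Y=W$ into $Y_1$ (the $7/3$-cover part, charged by cardinality at value $d=\tfrac{3}{2}m$ so that $d|Y_1|\ge$ its contribution) and $Y_2$ (the cheap part, charged by value), so that $(T-3m)(|U|-|F_U|)\le \tfrac32 m\,|W_2'\setminus W_1| + v(W_1) \le \tfrac32 m\cdot\tfrac73 n_2\cdot(\text{something}) + 2mn_1$.

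\textbf{Main obstacle.} The delicate point is bookkeeping the constant $3m$ versus $\tfrac32 m$ versus $T-2m$ and getting the coefficient of $n_2$ to come out as exactly $\tfrac73$ rather than something larger: this forces the use of the \emph{refined} dual (Proposition~\ref{prop:refined-dual}) rather than the basic one, charging the $7/3$-covers by cardinality (each cover has $\le 7$ resources per $3$ explosions) against a threshold $d$ chosen so that a configuration meeting a fat resource or a size-$\ge 2$ portion of such a cover is automatically satisfied, while configurations avoiding the covers entirely are handled by Theorem~\ref{thm:nongraph}/Lemma~\ref{lem:jm} as above. Verifying the precise hypothesis of Proposition~\ref{prop:refined-dual} --- in particular the $c-d$ bound in the $|Y_{>d}\cap S|=1$ case --- is where the care is needed, and the choice $\alpha=15/53$ is presumably exactly what makes the arithmetic close.
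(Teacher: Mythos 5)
You correctly identify the snapshot point (end of Phase~2), the correct cover $W$ (the cheap covers accumulated so far together with the $7/3$-covers), the need for the refined dual (Proposition~\ref{prop:refined-dual}) rather than the basic one to extract the bare coefficient $\tfrac73$ on $n_2$, and the target lower bound $v(S\cap W)\ge T-3m$. But two concrete steps go wrong, and fixing them actually collapses your case analysis into a single clean line.

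First, your deduction that Lemma~\ref{lem:jm} with $j=2$ contradicts the termination of Phase~2 is incorrect: with $j=2$ the lemma produces a $\tfrac52$-DE-sequence (since $(2j+1)/j=\tfrac52$), and at the end of Phase~2 such sequences may still exist --- only $\tfrac73$-DE-sequences (and cheap ones and KO-sequences) have been excluded. The version you want is $j=3$: the hypothesis $v(S\cap W)<T-3m$ then yields a $\tfrac73$-DE-sequence, which \emph{does} contradict Phase~2's termination, and its conclusion $v(S\cap W)\ge T-3m$ is \emph{exactly} the threshold you need. Once you use $j=3$ the entire detour through Theorem~\ref{thm:nongraph} and the three-case split is unnecessary: the uniform bound falls out in one step.

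Second, your proposed parameter $d=\tfrac32 m$ in the refined dual does not produce the coefficient $\tfrac73$. Plugging into the ``Moreover'' clause of Proposition~\ref{prop:refined-dual} with $Y_1=W_2$, $Y_2=W_1'$ gives $(T-3m)(|U|-|F_U|)\le \tfrac32 m\,|W_2|+v(W_1')$, and after dividing by $T-3m$ the coefficient on $n_2$ becomes $\tfrac{7m}{2(T-3m)}$, which is not $\tfrac73$ for the relevant range of $T$. The correct choice is $c=d=T-3m$, so that $d|W_2|$ divided by $c=T-3m$ is just $|W_2|\le\tfrac73 n_2$. In short: apply Proposition~\ref{prop:refined-dual} with $c=d=T-3m$, $Y=W_1'\cup W_2$, partition $Y_1=W_2$, $Y_2=W_1'$, and verify the hypothesis on configurations via Lemma~\ref{lem:jm} with $j=3$; everything else you considered is superfluous.
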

\begin{proof}
  Let $G$ be the current graph after the end of Phase 2 and set
  $W=W_1'\cup W_2$, where $W_1'$ is the union of the covers
associated with cheap DE-sequences up to the end of Phase 2.
We use Proposition~\ref{prop:refined-dual} with $U$, $c=d=(1-3\alpha) T$, and
$Y=W$. For this we only need to check for every thin configuration $S \in {\cal
  C}_p(T)$ with $p\in U$ and
$W_{>(1-3\alpha) T}\cap S =\emptyset$ that $v(W_{\leq (1-3\alpha) T} \cap S) = v(W\cap S) \geq
(1-3\alpha) T$. This follows from Theorem~\ref{thm:jm} applied with $W$, $j=3$,
$G^*=G$, and $C=S$, since after Phase 2 is
complete there is no KO-sequence, cheap DE-sequence, or $7/3$-DE-sequence
starting with $G$.
By Proposition~\ref{prop:refined-dual} we conclude that
$$(1-3\alpha) T(|U|- |F_U|) \leq v(W_1') + (1-3\alpha) T|W_2|.$$
By (\ref{eq:W1}) we have $v(W_1') \leq v(W_1)\leq 2\alpha Tn_1$ and by
(\ref{eq:Wj-card}) we have $|W_2| \leq \frac{7}{3}n_2$.
This completes the proof.
\end{proof}

In our next lemma we take a snapshot after Phase 3 and derive two
inequalities. 

\begin{lem}\label{lem:phase3} Assume Setup~\ref{sec:warmup}. %and set $m=\alpha T$. 
If $\alpha \geq \frac{1}{4}$  then we have
  \begin{align}
      |U|-|F_U| & \leq \frac{\alpha}{1-3\alpha}n_1+\frac76n_2
  +\frac54n_3  \label{eq:phase31} \\
 |U|-|F_U| & \leq
  \frac{2\alpha}{1-2\alpha}n_1+\frac{7\alpha}{3(1-2\alpha)}n_{2} 
           +\frac{5\alpha}{2(1-2\alpha)}n_{3}. \label{eq:phase32}
           \end{align}
\end{lem}
\begin{proof} 
  Let $G$ be the current graph at the end of Phase 3 and
  set $W=W_1\cup W_2\cup W_3$ to be the corresponding cover.
As there is no KO-sequence, cheap DE-sequence, or 
$5/2$-DE-sequence at the end of Phase 3,
Theorem~\ref{thm:jm}, applied with $W$, $j=2$, $G^*=G$, and $C=S$,
implies $v(W\cap S) \geq (1-2\alpha)T$ for any thin configuration
$S \in {\cal C}_p(T)$ with $p\in U$.

For the first inequality
we use Proposition~\ref{prop:refined-dual} with $U$, $d=(1-3\alpha)T, c=2d$, and
$Y=W$. We need to check for every thin $S \in {\cal C}_p(T)$ with $p\in U$
and $|W_{>(1-3\alpha)T}\cap S|\leq 1$ that its value is large enough.
If $W_{>(1-3\alpha)T}\cap S =\emptyset$ then 
$v(W_{\leq (1-3\alpha)T} \cap S) = v(W\cap S) \geq (1-2\alpha)T \geq 2(1-3\alpha)T=c$ by our assumption on $\alpha$.
If $W_{> (1-3\alpha)T}\cap S = \{ s\}$ then
$v(W_{\leq (1-3\alpha)T} \cap S) = v(W\cap S) - v_s \geq (1-2\alpha)T -\alpha T=c-d$, since $s$
is thin. 

By Proposition~\ref{prop:refined-dual} we conclude that
\begin{align*}
  2(1-3\alpha)T(|U| - |F_U|) &  \leq v(W_1) + (1-3\alpha)T|W_2 \cup W_3|.
\end{align*}
Using (\ref{eq:W1}) and (\ref{eq:Wj-card}), we are done.

For the second inequality
we use Proposition~\ref{prop:dual} with $U$, $c=(1-2\alpha)T$, and
$Y=W$. Recall that by the application of Theorem~\ref{thm:jm},
we already know that $v(W\cap S) \geq
(1-2\alpha)T$ for every thin configuration 
$S \in {\cal C}_p(T)$ with $p\in U$. 
Hence by Proposition~\ref{prop:dual} we conclude that
$(1-2\alpha)T(|U|-|F_U|) \leq v(W) \leq v(W_1) + v(W_2) + v(W_3)$.
Using (\ref{eq:W1}) and (\ref{eq:Wj-value}), the inequality follows.
\end{proof}

Finally, after Phase 4, we also measure the covers. 

\begin{lem} Assume Setup~\ref{sec:warmup}. % and set $m=\alpha T$. 
We then have
  \begin{align} \label{eq:phase4}
    |U|-|F_U| & \leq  \frac{2\alpha}{1-\alpha}n_1+\frac{7\alpha}{3(1-\alpha)}n_{2} 
  +\frac{5\alpha}{2(1-\alpha)}n_{3}+\frac{3\alpha}{1-\alpha}n_4.
\end{align}
\end{lem}
\begin{proof}
Set $W=W_1\cup W_2\cup W_3\cup W_4$ and apply
Proposition~\ref{prop:dual} with $U$, $c=(1-\alpha)T$, and
$Y=W$.
This is possible as after Phase 4 is
complete, there are no vertices left in the final subgraph $G_{end}$ of
$J(\alpha)|_U$. Consequently the value of resources in
$S\setminus W$ is at most $\alpha T$ for any thin configuration
$S \in {\cal C}_p(T)$ with $p\in U$. That means
$v(W\cap S) \geq (1-\alpha)T$ holds.

By Proposition~\ref{prop:dual} we then find that
$(1-\alpha)T(|U|-|F_U|) \leq v(W) \leq v(W_1) + v(W_2) + v(W_3) + v(W_4)$.
Using (\ref{eq:W1}), (\ref{eq:Wj-value}) and (\ref{eq:W4}), the inequality follows.
\end{proof}

Now if $\alpha= \frac{15}{53}  > \frac{1}{4}$, then $T = \frac{53}{15} m$ and each of the
upper bounds
$(\ref{eq:phase2})$, $(\ref{eq:phase31})$,
$(\ref{eq:phase32})$, and $(\ref{eq:phase4})$ on $|U| - |F_U|$ is true. %gives an upper bound on $|U|-|F_U|$. 
It is then straightforward to check that the convex
combination of these with coefficients  
$\frac{1}{35}$, $\frac{26}{245}$, $\frac{46}{2205}$, and
$\frac{38}{45}$, respectively, implies that $$|U|-|F_U| \leq 
n_1+n_2+n_3+n_4 = \ell (\sigma).$$

This completes the proof of Theorem~\ref{thm:main}.
\end{proof}

\section{Proof of the existence of economical
  DE-sequences} \label{sec:proofs}

The main goal of this section is the proof of
Theorem~\ref{thm:jm}. Before getting to it we prove two lemmas, the
second of which represents the heart of our argument.  
  
We begin with some notation and terminology in the setting of
Setup~\ref{sec:warmup}, and for a subgraph  $G^*$ of $J(\alpha)|_U$.
When our attention is focused on the multihypergraph of
$\alpha$-hyperedges, we often refer to a vertex $e^p$ of $G^*$ as an
$\alpha$-hyperedge $e$ {\em of} $p$ or {\em owned by} $p$. When the identity of
the owner is irrelevant or already established, we often omit the
reference to the owner. In particular we also sometimes say the pair
of $\alpha$-hyperedges $e$ and $f$ are explodable, without specifying
their owners.
We also say that an $\alpha$-hyperedge $g$ of $p\in U$ {\em survives} an
explosion if the vertex $g^p$ is still present in the current graph
after the explosion. 
We say that $\alpha$-hyperedges $e$ and $f$ are {\em explodable at resource
$r$} if $e\cap f = \{ r \}$ and the pair $e$ and $f$ is explodable. 
For an element $x$ and set $e$ we write $e-x$ and
$e+x$ as shorthand for $e\setminus\{x\}$ and $e\cup\{x\}$, respectively.
We use the term {\it $\alpha$-edge} for an $\alpha$-hyperedge with
exactly two elements.

The setting in which we will apply our first lemma is as
  follows: we have begun to construct a DE-sequence starting from
  $J(\alpha)|_U$ and have reached a graph $G^*$. If we cannot continue
  our sequence ``cheaply'', then certain special conditions must hold.

\begin{lem}\label{lem:simple}
  Assume Setup~\ref{sec:warmup} and let 
$G^* = (V, E)$ be a subgraph of $J(\alpha)|_U$. Suppose there is no
KO-sequence or cheap DE-sequence starting with $G^*$.
  Let $f^q\in V$ be an $\alpha$-hyperedge of
  player $q$. Then the following hold.
  \begin{itemize}
  \item[(i)]
    If for player $p\in U$ there exists an
    $\alpha$-hyperedge $g^p \in V$ such that $g^pf^q\in E$
    then for every $\alpha$-hyperedge $e^p \in V$ of $p$ we have $|f\cap e| \leq 1$.\\
In particular, for any $e^p \in V$ with
    $|e\cap f| \geq 2$, we have $e^pf^q\not\in E$. \\
    Even more in particular, if $e^p, e^q \in V$, then $e^pe^q \not\in
    E$.
    \item[(ii)]  For every resource $r \in f$
      there exists an $\alpha$-hyperedge $g$ in $V$ %with $f\cap g=\{r\}$,
      that is explodable with $f$ at $r$.
\end{itemize}
 \end{lem}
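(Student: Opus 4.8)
The plan is to argue by contradiction throughout, exploiting two standing consequences of the hypotheses. Since there is no KO-sequence starting with $G^*$, the empty DE-sequence cannot be a KO-sequence, so $G^*$ has no isolated vertex; and since there is no cheap DE-sequence, Observation~\ref{obs:cheap} tells us that every edge of $G^*$ is explodable. We also use Observation~\ref{obs:KO} freely. I would prove the two ``in particular'' consequences of part (i) first, then part (ii), and finally the main statement of part (i).

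For the ``in particular'' statements, the key is a two-block decomposition. Suppose $e^p\in V$, $e^pf^q\in E$, and $|e\cap f|\geq 2$; pick distinct $r_1,r_2\in e\cap f$. Then $e\cup f=\bigl((e\setminus f)\cup\{r_1\}\bigr)\cup\bigl(f\setminus\{r_1\}\bigr)$, and both sets on the right are blocks: $f\setminus\{r_1\}$ is a proper subset of the thin $\alpha$-hyperedge $f$, while $(e\setminus f)\cup\{r_1\}$ is contained in $e$ and omits $r_2\in e$, hence is a proper subset of the thin $\alpha$-hyperedge $e$. Thus $v(e\cup f)\leq 2m$, so exploding $e^pf^q$ (whose basic cover is $e\cup f$) is a cheap DE-sequence of length one starting with $G^*$ --- a contradiction. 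This proves that $e^pf^q\notin E$ whenever $|e\cap f|\geq 2$; taking $f=e$ with owners $p\neq q$ gives the ``even more in particular'' statement $e^pe^q\notin E$.

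For part (ii), fix $r\in f$ and suppose toward a contradiction that no $\alpha$-hyperedge $g\in V$ is explodable with $f$ at $r$; since all edges of $G^*$ are explodable, this means $f^q$ has no neighbour $g^s$ with $g\cap f=\{r\}$. As $G^*$ has no isolated vertex, $f^q$ has some neighbour $g^s$, and by the first ``in particular'' (applied to the edge $g^sf^q$) we get $|g\cap f|=1$, say $g\cap f=\{r'\}$, and $r'\neq r$ by assumption. Likewise every neighbour $h^t$ of $f^q$ has $h\cap f=\{r_h\}$ with $r_h\neq r$, so $r\notin h$. Now consider the length-one DE-sequence that explodes $g^sf^q$, with $W:=(f\setminus\{r\})\cup(g\setminus\{r'\})$. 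This $W$ covers the explosion: $f^q$ and $g^s$ are hit because $f\setminus\{r\}$ and $g\setminus\{r'\}$ are nonempty subsets of $f$ and $g$; a neighbour $h^t$ of $f^q$ is hit because $r_h\in f\setminus\{r\}$; and a neighbour $k^u$ of $g^s$ is hit because either $k$ meets $g\setminus\{r'\}$, or $k\cap g=\{r'\}$ and $r'\in f\setminus\{r\}$. Since $W$ is a union of two blocks, $v(W)\leq 2m$, so the sequence is cheap --- a contradiction. Hence (ii).

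For the main statement of part (i), suppose $g^pf^q\in E$ and, for contradiction, that some $e^p\in V$ has $|e\cap f|\geq 2$; fix distinct $r_1,r_2\in e\cap f$. By the first ``in particular'' we have $e^pf^q\notin E$ and $e\neq g$, so $e$ and $f$ are decoupled in $G^*$. I would then explode the (explodable) edge $g^pf^q$; the vertex $e^p$ survives, being adjacent neither to $g^p$ (same owner) nor to $f^q$. If $e^p$ is now isolated, or can be made isolated by deletions, the sequence is a KO-sequence --- a contradiction; otherwise, by Observation~\ref{obs:KO}, after some deletions we explode $e^p$ against a neighbour $h^t$. The point is then to exhibit a cover of the resulting sequence of value at most twice its length, using that $r_1,r_2\in e$ already lie in $f\subseteq g\cup f$ (the basic cover of the first explosion), so that the extra contribution of $e$ is only the block $e\setminus\{r_1,r_2\}$, and that by the first ``in particular'' every neighbour of $e^p$ meets $e$ in a single resource, which lets one trim covers to unions of blocks. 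When $e=f$ this already works with a length-two sequence: a neighbour $h^t$ of $f^p$ necessarily meets $f$, so its cover trims to a single block and the total is at most $4m$. I expect the general case $|e|\geq 3$ to be the main obstacle, since there the explosion partner of $e^p$ need not meet $f$; one must then choose that partner more carefully or iterate the construction, controlling the amortized cost of the covers so that the sequence remains cheap --- this is the delicate accounting that must be carried out.
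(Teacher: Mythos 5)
Your proofs of the two ``in particular'' consequences of (i) and of part (ii) are correct, and essentially match the paper's: the two-block decomposition $(e\setminus f)\cup\{r_1\}$ and $f\setminus\{r_1\}$ is a cosmetic variant of the paper's $(e-x)\sqcup((f\setminus e)+x)$, and your cover $(f\setminus\{r\})\cup(g\setminus\{r'\})$ in (ii) is the paper's $(f-r)\cup(h\setminus f)$ in different notation, verified by the same case analysis on the neighbours of the two exploded hyperedges.

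However, you do not finish the main statement of part (i): you set up the correct length-two sequence (explode $g^pf^q$; then, using Observation~\ref{obs:KO}, explode $e^p$ with some neighbour $h^t$) but then explicitly leave the cover accounting open, speculating that the case $|e|\geq3$ or the fact that $h$ need not meet $f$ is an obstacle. It is not. The basic cover $f\cup g\cup e\cup h$ is always the union of four blocks, with no case distinction: fix any $x\in e\cap f$ and take $(e-x)$, $(f\setminus e)+x$, $g\setminus f$, $h\setminus e$. The first two partition $e\cup f$, and $(f\setminus e)+x$ is a proper subset of $f$ precisely because $|e\cap f|\geq 2$; $g\setminus f$ is a proper subset of $g$ because $g^pf^q\in E$ forces $g\cap f\neq\emptyset$; and $h\setminus e$ is a proper subset of $h$ because $h$ is adjacent to $e^p$, so $h\cap e\neq\emptyset$. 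This gives a cover of value at most $4m=2m\cdot\ell$, so the sequence is cheap and we get the desired contradiction. Your attempted block $e\setminus\{r_1,r_2\}$ leads you astray because it leaves both $r_1$ and $r_2$ (hence part of $f$) uncovered by any natural fourth block; using only one of $r_1,r_2$ as the ``bridge'' resource $x$, as above, is what makes the count close.
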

  \begin{proof}
    Recall that since there is no cheap DE-sequence starting with
    $G^*$, every edge of $G^*$ is explodable.

 To prove (i), first we show that if $e^p\in V$ with $|e\cap f| \geq 2$ then
 $e^pf^q\not\in E$. Indeed, as mentioned in
 Section~\ref{sec:economical}, otherwise  $e^pf^q$ is explodable and  
    $(e-x) \sqcup ((f\setminus e)+x))$, where $x\in e\cap
    f$, is a partition of the  basic cover $e\cup f$
    into two blocks.  This demonstrates the explosion of the edge
    $e^pf^q$ is cheap, a contradiction. Here
  we use that $e-x$ is a proper subset of the $\alpha$-hyperedge $e$
  since $x \in e$, and $(f\setminus e) +x$ is a proper subset of the
$\alpha$-hyperedge $f$, since $|e\cap f| \geq 2$.

Let now $g^p\in V$ be an $\alpha$-hyperedge of $p$ that is explodable
 with $f^q$. Suppose on the contrary that there exists an
 $\alpha$-hyperedge $e^p$ of $p$ such that $|e\cap f| \geq 2$.
By the above $e^pf^q\not\in E$ and $e\neq g$.
 We define a DE-sequence of length two (starting with $G^*$) and show that it is cheap,
giving a contradiction. 
First explode $f^q$ with $g^p$. Note that $e^p$ survives, since $e^pf^q$ is
not in $E$, and $g^p$ and $e^p$ are owned by the same player.
Then, since there is no KO-sequence isolating $e^p$, after
possibly some deletions, we can explode $e^p$ with some neighbor
$h$. (See Observation~\ref{obs:sequence}(iii).)
Then the basic cover $f\cup g\cup e\cup h$ of this DE-sequence of
length two has a partition $(e-x)\sqcup ((f\setminus e) +x)\sqcup (g\setminus
f)\sqcup (h\setminus e)$, where $x\in e\cap f$, into four blocks. 
Hence this DE-sequence is cheap, as claimed. This verifies the main
statement of $(i)$, which in turn implies the second statement.

For the last statement of (i) note that in our setting $V$ contains
only thin hyperedges, so $|e| \geq 2$.

For $(ii)$ note that since there is no KO-sequence starting with $G^*$, $f$ has some
neighbor in $G^*$. Once again, every edge of $G^*$ is
explodable. For a contradiction assume that
every $\alpha$-hyperedge $g$ with $f\cap g = \{ r\}$ is not
explodable with $f$. Explode $f$ with an arbitrary neighbor $h$. By $(i)$ we know that
$f\cap h=\{ s \}$ for some $s$. By our assumption $s\neq r$.
The key observation here is that the set $W^*=(f\cup h) -r$, i.e. something less than the
basic cover, is also a cover of the explosion of the edge $fh$.
For this, let $g$ be an $\alpha$-hyperedge of $G^*$ that did not
survive the explosion of the edge $fh$. If $g$ is a neighbor of $h$ in
$G^*$ then $W^*\cap g \supseteq h\cap g \neq \emptyset$, since $r\not\in
h$. Otherwise $g$ is a neighbor of $f$ and is covered by $W^*$
unless $g\cap f = \{ r\}$. However our starting assumption was that
$f$ had no such neighbor in $G^*$.
Then the cover $W^* = (f-r) \cup (h\setminus f)$ of the single explosion
of the edge $fh$ is the union of two blocks, which makes it cheap, a contradiction.
\end{proof}

Our next lemma, which will be key to the proof of
  Theorem~\ref{thm:jm}, applies in the same setting as that of
Lemma~\ref{lem:simple}. It provides much stronger consequences of the
fact that a DE-sequence starting from $J(\alpha)|_U$ with cover $W$
cannot be extended cheaply. 

\begin{lem}\label{lem:nongraph}
  Assume Setup~\ref{sec:warmup}. %and set $m=\alpha T$. 
Let $G^*= (V,E)$ be a subgraph of $J(\alpha)|_U$, and let  $W\subseteq
R \setminus F$ be a subset of resources such 
that ($\star$) holds with $G_{start}=J(\alpha)|_U$ and $G_{end}=G^*$.
  Suppose there is no KO-sequence and no cheap DE-sequence starting
  with $G^* = (V,E)$. 

  Let $C\in {\cal C}_p(T)$ be a configuration of player $p\in
  U$, such that $C\setminus W$ contains an $\alpha$-hyperedge $e$
  with $|e|\geq 3$.
Then $v(C\setminus W) \leq \frac{3}{2}\alpha T$. %3m/2$.
\end{lem}

\begin{proof}[Proof of Lemma~\ref{lem:nongraph}]
To begin we observe that from our setup it follows that if a subset $S
\subseteq C\setminus W$ has value $v(S) > \alpha T$ then $S$ contains some
$\alpha$-hyperedge $h$ of $p$ and by property ($\star$) of $W$ we have
$h^p \in V$. Throughout this proof,
when talking about
$\alpha$-hyperedges contained in $C\setminus W$, we mean those owned
by $p$, unless otherwise specified.

To prove Theorem~\ref{lem:nongraph} we will verify that the
  assumptions imply the following
stronger statement.

\begin{claim}
Let  $s$ denote the second-most valuable element
  of $e$ (where ties are broken arbitrarily). Then $C\setminus (W+s)$
  contains no   $\alpha$-hyperedge. 
\end{claim}

To derive the conclusion of Theorem~\ref{lem:nongraph} from the Claim:
note first that 
the second-most valuable element $s$ of $e$ satisfies
$v_s\leq \frac{\alpha T}{2}$, otherwise the value of the two most valuable elements
of $e$ would exceed $\alpha T$, contradicting that $|e| \geq 3$. 
Then $v(C\setminus W) = v(C\setminus (W+s)) + v_s \leq \alpha T + \frac{\alpha T}{2}$.

The rest of the proof consists of proving the Claim.
Suppose on the contrary that $C\setminus (W+s)$ contains an
$\alpha$-hyperedge. We claim that some such $\alpha$-hyperedge $g$
intersects $e-s$ nontrivially. Let $f \subseteq C\setminus (W+s)$ be an
$\alpha$-hyperedge and suppose it is disjoint from  $e-s$. We know $f$ contains at
least two elements, so removing the least valuable element $r$ from
$f$ results in $v(f-r)\geq \frac{v(f)}{2}> \frac{\alpha T}{2}$. Similarly, since $s$ is not
the unique most valuable element of $e$ we 
find that $v(e-s)\geq \frac{v(e)}{2}>\frac{\alpha T}{2}$. Therefore $(f-r)\cup (e-s)
\subseteq C\setminus (W +s)$ is
valuable enough to contain an $\alpha$-hyperedge $g$ of $p$, and
since $f-r$ is not valuable enough to contain it, we have $g\cap (e-s) \neq
\emptyset$, as claimed. 

Recall that since $g \subseteq C\setminus W$, we know $g^p \in V$. 
Let $a$ be the most valuable element in $g\cap (e-s)$.

\medskip

\noindent{\bf Case 1.} $|g\cap (e-s)| \geq 2$. Let $b\neq a$ such that $b\in g \cap (e-s)$. 
Then $v_a\geq v_b$, so $v_s\geq v_b$ and therefore
$v((g\cup e) -b) \geq v(g - b +s) \geq v(g) > \alpha T$. Hence
$(g \cup e )- b$ contains an $\alpha$-hyperedge $h$ of $p$ and
$h^p\in V$.

We achieve a contradiction by defining a DE-sequence of length two
starting with $G^*$, which turns out to be cheap.
First explode $g^p$ at $b$ with some $\alpha$-hyperedge $d_1$ in $G^*$, which
exists by Lemma~\ref{lem:simple}(ii). By part (i) $|d_1 \cap e| \leq
1$ and hence $d_1 \cap e = \{ b \}$.
Consequently $h^p$ survives the explosion of the edge $gd_1$, since
$h\cap d_1 \subseteq ((g\cup e) -b) \cap d_1= \emptyset$ and $h$ has
the same owner as $g$.
Since there is no KO-sequence isolating $h^p$ starting with $G^*$,
after possible deletions now
we can explode $h^p$ with some $\alpha$-hyperedge $d_2$.
We claim that the basic cover of
this  DE-sequence of length two (starting with $G^*$) is a subset of the union of four blocks: 
$$d_1 \cup g \cup h \cup d_2 \subseteq (d_1\setminus g) \cup (e- b) \cup (g-a)
\cup (d_2\setminus h).$$
Here we used that $(h\setminus g) + a$ is contained in $e-b$. 
This contradiction completes Case 1.

\medskip

\noindent{\bf Case 2.} $g \cap (e- s) = \{ a \}$.

\medskip

\noindent{\bf Case 2.a.} $v((g\cup e)-a) >\alpha T$.

Let us choose an $\alpha$-hyperedge
$f\subseteq(g\cup e)-a$ of $p$, as follows. If there is a resource $b\in e-a$ such that
$v(g-a+b)>\alpha T$, choose $f\subseteq g-a+b \subseteq (g\cup e)-a$, and
otherwise choose one arbitrarily.

By Lemma~\ref{lem:simple}(ii) $g^p$ has
an explodable neighbour $d$ at $a$ in $G^*$. 
Since $g$ and $e$ are both $\alpha$-hyperedges of $p$,
by Lemma~\ref{lem:simple}(i) we have that 
$|d\cap e| \leq 1$ and consequently $d\cap e = \{ a \}$. 

We achieve a contradiction by defining a DE-sequence of length two
starting with $G^*$,
which turns out to be cheap. First we explode $g^p$ with $d$.
The $\alpha$-hyperedge
$f^p$ survives this explosion since $f\cap d \subseteq ((g\cup e)
-a) \cap d =\emptyset$ and $f$ and $g$ are both
$\alpha$-hyperedges of $p$.
Secondly, (after some possible deletions) we explode $f^p$ with some
neighbour $h$. This is possible since there is no KO-sequence
isolating $f^p$.

We claim that the basic cover $d\cup g \cup h \cup f$
of this DE-sequence of length two is the subset of 
the union of four blocks, providing the contradiction we seek.
There is a slight difference in the accounting depending how $f$ was chosen.

If $b$ is such that $v(g-a+b) > \alpha T$ and $f \subseteq g-a+b$,
we take the blocks 
$(d\setminus g)\cup (h\setminus f))\cup(g-a) \cup\{a,b\}$. 
Note that $\{ a, b\}$ is a block since it is a proper subset of the
$\alpha$-hyperedge $e$ with at least three elements.

Otherwise $f \subseteq (g\cup e)-a$ and $g-a+b$ is a block for every
$b\in e-a$. Then we take the blocks
$(d\setminus g)\cup (h\setminus f))\cup(g-a+b) \cup(e-
b)$, where $b\in e-a$ is arbitrary. Here note that
the union of the third and the fourth term is $g \cup e$, which in
turn contains $f$. This completes Case 2.a.

\medskip

\noindent{\bf Case 2.b.} $v((g\cup e)-a) \leq \alpha T$.

  Let $x$ denote the most valuable element in $g-a$ (here $\{x\}=g-a$ is
  possible). By Lemma~\ref{lem:simple}(ii) there is an
  $\alpha$-hyperedge $h$, that is explodable with $g^p$ at $x$ in $G^*$.

  \noindent{\bf Case 2.b.i.} $h\cap (e-a)= \emptyset$. 

   In this case we define a cheap DE-sequence starting with $G^*$,
giving a contradiction. We start by exploding $g^p$ with
  $h$. The $\alpha$-hyperedge $e^p$ survives this explosion,
  since $h\cap e = \emptyset$ (note that $a\not\in h$ since $a \in
  g -x = g \setminus h$)
  and $g$ and $e$ are both
  $\alpha$-hyperedges of $p$.
  Again, since there is no KO-sequence isolating $e^p$,
  after possibly some deletions, we explode $e^p$ with some
  neighbour $d$. The basic cover $g\cup h \cup e \cup d$
  of this DE-sequence is the subset of the union of four blocks:
  $(h\setminus g)\cup \{ a \} \cup ((g \cup e)-a) \cup (d\setminus
  e)$.
  This contradiction shows that this case cannot hold.

  \noindent{\bf Case 2.b.ii.} $h\cap (e-a) \neq \emptyset$.

Since $e$ is an $\alpha$-hyperedge of the owner of $g^p$, by
  Lemma~\ref{lem:simple}(i) we have $|h\cap
  e|\leq 1$, and hence $h\cap e = \{ b \}$ for some $b \neq a$. 
  To complete the proof, we will argue that the explosion of the edge
  $hg$ is cheap by establishing that the basic cover 
$h\cup g$ can be partitioned into two blocks:
$(h- b)\cup (g- x+b)$. 
The rest of the proof is concerned with
demonstrating that $g-x+b$ is indeed a block (the first term is clearly
a block).

If $g-a=\{x\}$ is a singleton, then $g-x+b = \{a, b\}$ which is a
proper subset of the $\alpha$-hyperedge $e$ of size at least three and
hence is a block. 

Otherwise fix a resource $y\in g-a-x$,
and suppose on the contrary that $v(g- x+b)>\alpha T$.
We will find an $\alpha$-hyperedge $d$ of $p$ with $h \cap d \supseteq
\{ b , x\}$, which 
would contradict Lemma~\ref{lem:simple}(i) since $h$ is explodable with an
$\alpha$-hyperedge of $p$, namely $g$.

Since $y\in g$, we have $v_x\geq v_y$, and hence for $X:=g- y+b$ 
we have  $v(X) \geq v(g-x+b)  >\alpha T$. Note however, that both $X-a$ and $X-b$ are
blocks, since $v(X-a) \leq v(g\cup e) -a) \leq \alpha T$, and $X-b=g-y$ is a
proper subset of the $\alpha$-hyperedge $g$.
Hence $a, b \in f$ for any $\alpha$-hyperedge $f\subseteq X$ of $p$.

If $x \in f$ as well, then we are done.
Otherwise let us fix an $\alpha$-hyperedge $f=\{a,b,u_1,\ldots,u_t\}
\subseteq X$ and modify it slightly
to obtain the appropriate $d$.

Note that $t\geq 1$ since $\{ a, b\}$ is a proper subset of $e$.
Then $f'= f- u_1 +x = \{a,b,x,u_2,\ldots,u_t\}$
contains an $\alpha$-hyperedge $d$ of $p$ because $v(x)\geq v(u_1)$
since $u_1 \in f-a-b \subseteq g-y-a$.
Note that since $d\subseteq
f-u_1+x\subseteq X+x$ and $x\in g-a \subseteq X$ we find
$d\subseteq X$, and so $\{a,b\}\subset d$ (since $X-a$ and $X-b$ are
blocks). 
Furthermore $d$ must also contain $x$, since otherwise 
$d \subseteq f'-x = \{a,b,u_2,\ldots,u_t\}$ is a block (as a proper
subset of $f$).
This completes the proof of Case 2.b, and that of the theorem.
\end{proof}

We are now ready to prove Theorem~\ref{thm:jm}.
 
  \begin{proof}[Proof of Theorem~\ref{thm:jm}]
    
Since $v(C\setminus W) = v(C) - v(C\cap W) > T - (1-j\alpha)T) = j\alpha T$ and $C$ is thin,
there are strictly more than $j$ resources in $C\setminus W$; let  
$s,t_1,t_2,\ldots,t_j$ be the $j+1$ most valuable ones, in increasing
order of value.
Then  $v(C\setminus (W\cup\{t_2,\ldots,t_j)\})>\alpha T$ and consequently
$C\setminus(W\cup \{t_2,\ldots,t_j\})$ must contain an
$\alpha$-hyperedge.

Since $v(C\setminus W) > j\alpha T >\frac{3}{2}\alpha T$, by
Lemma~\ref{lem:nongraph} every $\alpha$-hyperedge in $C\setminus W$
has cardinality two. So in particular the two most valuable elements
$s$ and $t_1$ of $C\setminus(W\cup \{t_2,\ldots,t_j\})$ form an
$\alpha$-edge of $p$ and $(st_1)^p \in V$.
Then, since 
$v_{t_1}\leq v_{t_i}$ for $i\in\{2,\ldots,j\}$, we also know that each $st_i$ is an
$\alpha$-edge of $p$ and $(st_i)^p \in V$.

Next we derive a couple of crucial observations about the explodable neighbors of
$st_i$ at $t_i$.
\begin{claim}
  For every $i = 1, \ldots , j$, if an
$\alpha$-hyperedge $f^q$ (of some player $q\neq p$), with $t_i\in
f$, is explodable with
$(st_i)^p$ in $G^*$ then
\begin{itemize}
\item $f$ is an $\alpha$-edge and
\item $f\cap (C\setminus W) = \{ t_i \}$.
\end{itemize}
\end{claim}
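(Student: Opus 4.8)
The plan is to establish the two conclusions one at a time, reducing the second to the first. Since $f^q$ is adjacent in $G^*$ to the vertex $(st_i)^p$, which is an $\alpha$-hyperedge of the player $p\in U$, Lemma~\ref{lem:simple}(i) gives $|f\cap e|\le 1$ for every $\alpha$-hyperedge $e^p$ with $e^p\in V$. Applying this with $e=\{s,t_i\}$ and using $t_i\in f$ forces $s\notin f$, so the pair explodes at $t_i$. Applying it with $e=\{t_a,t_b\}$ for $a\ne b$ in $\{1,\dots,j\}$ — which is an $\alpha$-edge with $(t_at_b)^p\in V$, since $\{t_a,t_b\}\subseteq C\setminus W$ has value $\ge v_{t_1}+v_s\ge\alpha T>m$ — forces $|f\cap\{t_a,t_b\}|\le 1$; as $t_i\in f$ this yields $t_k\notin f$ for every $k\ne i$. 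Finally, assuming the first conclusion ($f=\{t_i,w\}$ an $\alpha$-edge), suppose $w\in C\setminus W$: then $\{t_i,w\}\subseteq C\setminus W$ has value $v(f)\ge\alpha T>m$, so by the gap in the definition of $m$ and by property $(\star)$ it contains an $\alpha$-hyperedge of $p$ in $V$, which can only be $\{t_i,w\}$ itself; thus $(t_iw)^p\in V$, and $|f\cap(t_iw)|=2$ contradicts Lemma~\ref{lem:simple}(i). Hence $w\notin C\setminus W$, and the second conclusion follows from the first.

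It remains to prove $f$ is an $\alpha$-edge. I would assume $|f|\ge 3$ and derive a contradiction by producing a cheap DE-sequence of length two starting with $G^*$ (impossible by Observation~\ref{obs:cheap}; a KO-sequence is impossible by Observation~\ref{obs:KO}). Explode $(st_i)^p$ with $f^q$ at $t_i$; since $s\notin f$ and $t_k\notin f$ for $k\ne i$, and $(st_1)^p,\dots,(st_j)^p$ share the owner $p$ (so are pairwise non-adjacent), every $(st_k)^p$ with $k\ne i$ survives this explosion, and such a $k$ exists because $j\ge 2$. Then explode $(st_k)^p$, after possible deletions, with a neighbour $h$ (Observation~\ref{obs:KO}); having survived the first explosion, $h$ avoids $\{s,t_i\}\cup f$, and meeting $\{s,t_k\}$ it satisfies $t_k\in h$ and $h\cap f=\emptyset$. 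So the basic cover of this length-two DE-sequence is $\{s\}\cup f\cup h$, and the goal is to partition it into four blocks, exploiting that $|f|\ge 3$ makes any two-element subset of $f$ (such as $\{t_i,x\}$ for a well-chosen $x\in f-t_i$) a proper subset of $f$ and hence a block, while $h-t_k$, $\{s\}$, $\{t_k\}$ are blocks as well — with the economical choice of $x$ and of the block absorbing $s$ made exactly in the style of the proof of Theorem~\ref{thm:nongraph}.

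I expect the main obstacle to be precisely this four-block accounting: the naïve decomposition $\{s\}\sqcup(f-t_i-x)\sqcup\{t_i,x\}\sqcup(h-t_k)\sqcup\{t_k\}$ uses five blocks, so one must fold $\{s\}$ into a neighbouring block or, more likely, choose the second explosion (or even the first) more carefully — e.g.\ using Lemma~\ref{lem:simple}(ii) to select a favourable explodable partner — so that $s$, and perhaps $t_k$, already lie inside one of the other pieces and only four blocks remain. Discharging the resulting sub-case split, while tracking which of $s,t_i,t_k$ and the elements of $f$ and $h$ coincide, is the delicate, case-heavy heart of the argument, and I would carry it out in the same manner as the corresponding bookkeeping in the proof of Theorem~\ref{thm:nongraph}.
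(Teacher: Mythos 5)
Your treatment of the second bullet is fine, though it is more elaborate than necessary: once $|f|=2$, writing $f=\{t_i,w\}$ and supposing $w\in C\setminus W$ gives $f\subseteq C\setminus W$, hence $f^p\in V$ by ($\star$), and then $|f\cap f|=2$ already contradicts Lemma~\ref{lem:simple}(i); the preliminary derivations that $s\notin f$ and $t_k\notin f$ are not needed.

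The genuine gap is in the first bullet, which you leave open. You assume $|f|\geq 3$ and set out to build a length-two DE-sequence, then note that the naive decomposition of $\{s\}\cup f\cup h$ uses five blocks and that you would ``carry out the bookkeeping in the style of Theorem~\ref{thm:nongraph}'' --- but you never produce the decomposition, and you never notice that a length-two sequence is overkill. The paper instead observes that the \emph{single} explosion of $(st_i)^pf^q$ already has a cheap cover: its basic cover $\{s,t_i\}\cup f=\{s\}\cup f$ can be partitioned into just two blocks, $\{a,t_i\}$ and $(f\setminus\{a,t_i\})+s$, where $a\in f-t_i$ is arbitrary. The first is a proper subset of $f$ (here $|f|\geq 3$ is used), hence a block; for the second, the ordering $v_s\leq v_{t_i}$ gives $v((f\setminus\{a,t_i\})+s)\leq v((f\setminus\{a,t_i\})+t_i)=v(f-a)\leq m$, so it too is a block. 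This ``swap $s$ for $t_i$'' trick is precisely the missing idea; without it, the block count does not close in your approach either. (The same trick would in fact rescue your length-two version --- cover $\{s\}\cup f$ with those two blocks and $h$ with $h-t_k$ and $\{t_k\}$, giving four --- but the single explosion is both sufficient and simpler.)
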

\begin{proof}
If $f$ had at least 3 resources then the explosion would be cheap, contradicting our assumption.
To see this choose $a\in f - t_i$, and observe that the basic cover
$\{ s, t_i \}\cup f$ can be partitioned into two blocks  $\{a,t_i\}$ and
$f\setminus\{a,t_i\}+s$.
Indeed, $\{a,t_i\}$ is a proper subset
of $f$ and hence is a block, and using $v_s\leq v_{t_i}$ we see that
$v(f\setminus\{a,t_i\}+s)\leq v(f\setminus\{a,t_i\}+t_i)=v(f-a)\leq \alpha T$. 
Thus every such $f$ is an $\alpha$-edge.

If $f\subseteq C\setminus W$ then by our setup ($\star$) implies that $f^p\in V$. This
contradicts Lemma~\ref{lem:simple}(i) since $f^q$ is explodable with an
$\alpha$-hyperedge owned by $p$ (that is $st_i$) and $|f\cap f| \geq
2$. So $f$ is not contained in $C\setminus W$, and since $f$ is an $\alpha$-edge the Claim follows.
\end{proof}

We now define a DE-sequence $\tau$ of length $j$ starting with $G^*$
that has basic cover $W_{\tau}$ of size $(2j+1)$. 

We construct $\tau$ by finding explosions one by one.
Let $\tau_0$ be the empty DE-sequence.
Suppose for some $i, 1\leq i \leq j$ we have already found DE-sequence
$\tau_{i-1}$ which performs explosions of the edges
$(st_1)f_1, \ldots , (st_{i-1})f_{i-1}$, in this order, such that for every
$k=1, \ldots i -1$ we have 
\begin{itemize}
\item[(a)] $f_k$ is an $\alpha$-edge and
\item[(b)] $f_k\cap (C\setminus W) = \{ t_k\}$.
\end{itemize}
Note that by $(a)$ and $(b)$ for the basic cover $W_{\tau_{i-1}} = \{ s \} \cup
\cup_{k=1}^{i-1} f_k$ we have $|W_{\tau_{i-1}}|  = 2i-1$.
%and  $v(W_{\tau_{i-1}}) \leq (2i-1)m$ (since every element is thin).
 
Our first step in constructing $\tau_i$ from $\tau_{i-1}$ is to perform, iteratively,
all possible deletions, so any remaining edge in the current graph
$G$ is explodable.  
Note that $(st_i)^p$ survived all the explosions of $\tau_{i-1}$ since
it is disjoint from $f_1, \ldots , f_{i-1}$ by (b) and has the same
owner as each other $(st_j)^p$.
To complete the definition of $\tau_i$, our aim is to find an
explodable neighbor $f_i$ of $(st_i)^p$ at $t_i$ in the current
graph $G$, and explode it.

Note we cannot apply Lemma~\ref{lem:simple}(ii) here directly to show
the existence of such an $f_i$, since after executing
$\tau_{i-1}$, we do not know any more that there is no cheap
DE-sequence starting with the current graph $G$.   
So we suppose that
$(st_{i})^p$ has no explodable neighbour at $t_i$ in the current graph
$G$ and obtain a contradiction. 
Since there is no KO-sequence starting with $G^*$ that isolates
$(st_i)^p$, by Observation~\ref{obs:sequence}(iii) there must still exist an $\alpha$-hyperedge
$e$ explodable with it, and this explosion by our assumption is at $s\in e$. 
If we now perform this explosion then we claim that the resulting 
DE-sequence $\tau'$ starting with $G^*$ would be cheap, providing a
contradiction. 
Indeed, the basic cover of $\tau_{i-1}$ together with $e$ forms a cover
of $\tau'$.  (By our assumption that $(st_{i})^p$ has no explodable
neighbour at $t_i$ in $G$, we do not need to include $t_i$ in the cover.)
The value $v(W_{\tau_{i-1}})  + v(e-s) \leq  (2i-1)\alpha T +\alpha T = 2i\alpha T$ shows that
$\tau'$ is cheap. So for some player $q \neq p$
there exists an $\alpha$-hyperedge $f_i^q \in V(G)$ which is
  explodable with $(st_i)^p$ at $t_i$ in $G$.

Now we show that $f_i$ satisfies properties $(a)$ and $(b)$. To
see this observe that $f_i^q$ was also explodable with $(st_i)^p$ in
$G^*$. 
Otherwise, by Meshulam's Theorem, the edge $f_i^q(st_i)^p$ was
deletable in $G^*$. But a deletable
edge in $G^*$ is a cheap DE-sequence of length zero, contradicting our
assumption on $G^*$. 
Hence our Claim applies to $f_i$ and so (a) and (b) hold for $k=i$.

For the basic cover of the ultimate DE-sequence $\tau_j$ we have
  $|W_{\tau_{j}}|  = 2j+1$, showing that $\tau_j$ is a $(2j+1)/j$-DE-sequence.
    \end{proof}

\section{Two values}\label{sec:2val}

Here we consider the $(1,\varepsilon)$-restricted version
of the Santa Claus problem, in which the value of each resource in $R$ is either 1 or $\veps$, where $0<\varepsilon<1$. Our overall approach in this section conceptually
parallels our work in the earlier sections.

We begin with a very high-level overview of our argument. Let
  an instance $\cI=(P,R,v,\{L_p:p\in P)$ of the
  $(1,\varepsilon)$-restricted problem be 
  given, and let $T$ be such that the CLP($T$) for $\cI$ is
  feasible. Let $c=\lceil\frac{T}{\veps} \rceil$. We will define an
  integer $r_c$ and a real number $\alpha$, as functions of $T$ and $\veps$.
  It will be
  straightforward to check (see the proof of Theorem~\ref{thm:2val})
  that we may assume with these definitions that the resources of
  value 1 are fat and those of value $\veps$ are thin. Hence for each
  $p\in P$, the set of thin configurations for $p$ is precisely the
  set of subsets of $L_p$ of cardinality at least $c$.

  Our proof will consist of showing that there exists an allocation
  of disjoint sets of resources to the players in $P$, where each
  set is either a single resource of value 1 or an $r_c$-subset of
  resources of value $\veps$. Hence the integrality gap for $\cI$ is
  at most $\frac{T}{r_c\veps}\leq\frac{c}{r_c}$ (since it will also be easy to show that we may assume   $1\geq r_c\veps$ ).

Our main lemma for these purposes will apply in the following setting.

\medskip

\noindent{\textbf{Setup~\ref{sec:2val}.}}
Let ${\cal I} = (P, R , v, \{L_p : p\in P\})$ be an instance of the
$(1,\veps)$-restricted Santa Claus problem and let $T\in \mathbb{R}$
be a target such that 
CLP($T$) is feasible. Fix an integer $r$ with $2\leq r$
and suppose $\alpha\in \mathbb{R}$ satisfies $$\min\left\{r\veps, 1 \right\}  > \alpha T \geq (r-1)\veps.$$ 

Let $U$ be an arbitrary subset of $P$, let $G^* \subseteq J(\alpha)|_U$ and  $W\subseteq R \setminus F$ be a subset of resources such
that ($\star$) holds with $G_{start}=J(\alpha)|_U$ and
$G_{end}=G^*$. Suppose that there is no KO-sequence starting with $G^*$ and every edge of $G^*$ is explodable.

\medskip

We remark that the conditions on $\alpha$ are simply to ensure that resources of value $1$ are fat, 
  resources of value $\veps$ are thin, and thin $\alpha$-hyperedges
  have cardinality $r$ and value $r\veps$. 

The idea of our proof is to use the same
basic framework as that of Theorem~\ref{thm:main}. As before, we will
define a DE-sequence starting with $J(\alpha)|_U$ by concatenating
many shorter DE-sequences constructed in
phases. Each phase lasts as long as there remains a thin
configuration $C$, with value $v(C\setminus W)$ outside the current
cover $W$, that exceeds a certain threshold associated with that phase. 
However, in our current $(1,\varepsilon)$-restricted setting, the
value $v(C\setminus W)$ is determined entirely by $|C\setminus
W|$. Instead of four phases, we will have one phase for each $X, c\geq
X \geq r_c,$ which lasts as long as there still exists a thin
configuration $C$ with $|C\setminus W| \geq X$. Our notion of
''economical'' DE-sequence will be one of low {\em average cost},
where the average cost of $\sigma$ with (minimum) cover $W_{\sigma}$ is
$av(\sigma)=|W_{\sigma}|/\ell(\sigma)$.  
Each iteration in Phase $X$ finds a DE-sequence of average cost
bounded above by $a_{r_c}(X)$, where
$a_r(X)$ is the function defined for all integers $X\geq r\geq 2$ as 
\begin{align*} 
  %a(X) = 
  a_r(X) & := \left\{
            \begin{array}{ll}
              3r-X-1 & r \leq X \leq \frac{3r-1}{2} \\
  2r- \frac{X+1}{3} & \frac{3r}{2} \leq X \leq 2r \\
  \frac{4r-1}{3} & 2r+1 \leq X.
\end{array}
                   \right.
                   \end{align*}
The main tool that enables this is the following lemma (which
corresponds to Theorem~\ref{thm:jm} in the proof of
Theorem~\ref{thm:main}).
We say that a DE-sequence $\sigma$ starting with $G^*$ is {\it
  based in} a 
configuration $C$ if one $\alpha$-hyperedge out of every pair of $\alpha$-hyperedges exploded
during $\sigma$ is in $C\setminus W$ (and is owned by the owner of $C$).

\begin{lem}\label{lem:basedinC}
Assume Setup~\ref{sec:2val}. 
Let $X\geq r$ be an integer. For every thin configuration $C\in {\cal
  C}_p(T)$ with 
  $p\in U$ and $|C\setminus W| \geq X$
there exists a DE-sequence $\sigma$ starting with $G^*$ based in $C$
with  $av(\sigma)\leq a_r(X)$.
\end{lem}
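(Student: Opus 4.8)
The plan is to build $\sigma$ greedily, always choosing the next explosion so that one of the two exploded $\alpha$-hyperedges lies in $C\setminus W$ (and is owned by $p$), and to stop as soon as no such extension is available. We then bound the average cost $av(\sigma)=|W_\sigma|/\ell(\sigma)$ of the resulting maximal DE-sequence. The three regimes in the definition of $a(X)$ correspond to three different ways of controlling this average, governed by how large $X=|C\setminus W|$ is compared with $r$. Throughout, recall that every thin $\alpha$-hyperedge has size exactly $r$, every configuration has size at least $c$, resources have value $\varepsilon$, and (by our standing assumption) there is no KO-sequence and no deletion is available, so every edge of $G^*$ is explodable; the needed extension facts are exactly Lemma~\ref{lem:multiexp}(a) and (b), together with Lemma~\ref{lem:simple}.

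\textbf{The regime $2r+1\le X$ (target $a(X)=\frac{4r-1}{3}$).} Here $C\setminus W$ is comparatively large, so we can afford to run the greedy process in blocks of three explosions at a time, arranging (via part (b) of Lemma~\ref{lem:multiexp}, choosing each $e_\ell$ inside $C\setminus W$ minus the previously used $f_j$'s and maximizing the overlap $|e_\ell\cap f_\ell|$) that each group of three explosions consumes a limited number of new resources outside $C\setminus W$ — about $4r-1$ of them per three explosions — while $C\setminus W$ itself contributes a fixed additive $O(r)$ term that washes out in the average. The point is that when the overlap $f_\ell\cap e_\ell$ is a proper subset of $f_\ell\cap\bigl((C\setminus W)\setminus\bigcup_{j<\ell}f_j\bigr)$, part (b) lets us continue for free; the only way the process can terminate is if for the last explosion the overlap is the \emph{whole} of that set, which forces $f_\ell$ to be essentially contained in $C\setminus W$ up to one resource, and this structural bottleneck is what caps the new resources at $\approx(4r-1)/3$ per explosion. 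I expect this to be the technically most delicate case, since one must carefully track which resources of $C\setminus W$ have been "used up" as $f_j$'s and verify that the $e_\ell$'s of size $r$ required by part (a)/(b) keep existing inside the shrinking set $(C\setminus W)\setminus\bigcup_{j<\ell}f_j$ — this is where the hypotheses $X\ge 2r+1$ (and Observation~\ref{obs:crc}) get used.

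\textbf{The middle regime $\tfrac{3r}{2}\le X\le 2r$ (target $a(X)=2r-\tfrac{X+1}{3}$) and the small regime $r\le X\le\tfrac{3r-1}{2}$ (target $a(X)=3r-X-1$).} When $X$ is only moderately larger than $r$, there is little room to iterate, so we aim for a \emph{single} well-chosen explosion, or a very short sequence, whose basic cover we bound directly. For $X$ in the small regime, part (a) of Lemma~\ref{lem:multiexp} applied to a maximal $\sigma$ gives $av(\sigma)\le r+\frac{r-1}{\ell(\sigma)}$, which is already $\le a(X)=3r-X-1$ once $\ell(\sigma)$ is at least the appropriate threshold; but if $\sigma$ is short, we instead exploit that a short maximal $\sigma$ means $(C\setminus W)\setminus\bigcup f_i$ has fewer than $r$ elements, i.e. $\bigcup f_i$ already covers all but $<r$ elements of $C\setminus W$, so $|W_\sigma|\le |\bigcup f_i|+ (X - |C\setminus W\cap\bigcup f_i|)$ is bounded by something like $r\ell + (X-(X-r+1)) \le r\ell + (r-1)$ but now with the extra savings that $C\setminus W$ is \emph{small}, giving $|W_\sigma|\le X-1+\ell$ roughly, hence $av(\sigma)\le 1+\frac{X-1}{\ell}$, and balancing the two bounds $r+\frac{r-1}{\ell}$ and $1+\frac{X-1}{\ell}$ over $\ell$ produces the piecewise-linear $a(X)$. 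The middle regime interpolates between the block-of-three argument of the large regime and the single-shot argument of the small regime: run the greedy process, and split into the case where it lasts $\ge 3$ steps (use the $\tfrac{4r-1}{3}$-type block bound) versus $<3$ steps (use the maximality bound $r+\frac{r-1}{\ell}$ with $\ell\le 2$), and check that the worse of the resulting bounds equals $2r-\tfrac{X+1}{3}$ at the breakpoints $X=\tfrac{3r}2$ and $X=2r$. I would set up a single greedy construction and then, at the end, do a short case analysis on $\ell(\sigma)$ and on $X$ to read off which bound applies; the main obstacle, as noted, is making the block-of-three accounting in the first regime airtight, in particular ensuring the required size-$r$ hyperedges persist inside the depleted portion of $C\setminus W$ throughout the process.
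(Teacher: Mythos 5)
Your high-level framework (build $\sigma$ greedily inside $C\setminus W$, extend via Lemma~\ref{lem:multiexp}(a)/(b), do a case analysis by length) matches the paper, but the proposal has a genuine gap in the middle regime $\frac{3r}{2}\le X\le 2r$, and the accounting in the other two regimes is also not quite right. The crux is your treatment of a maximal DE-sequence of length $\ell=2$. The only cover bound you use is the one from Lemma~\ref{lem:multiexp}(a), namely $|W_\sigma|\le r\ell+(r-1)$, which for $\ell=2$ gives $av(\sigma)\le\frac{3r-1}{2}$. But $\frac{3r-1}{2}>2r-\frac{X+1}{3}=a(X)$ whenever $X>\frac{3r+1}{2}$, so this bound is simply too weak for most of the middle range. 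The paper closes this gap with a different, non-basic cover: when every explodable neighbor $f$ of a hyperedge in $C\setminus W$ owned by $p$ has $|(C\setminus W)\cap f|\ge t_1+1$ (their ``Case~2''), one can drop an arbitrary subset $Y\subseteq(C\setminus W)\setminus(h\cup h')$ of size up to $t_1$ from the cover and check that $h\cup h'\cup\bigl((C\setminus W)\setminus Y\bigr)$ still satisfies ($\star$); the deleted set $Y$ is exactly what brings the average down to $a(X)$, or else forces a third explosion. Nothing in your sketch supplies this idea, and no amount of ``interpolating'' between the block-of-three bound and the $r+\frac{r-1}{\ell}$ bound will produce it, since both of those give at best $\frac{3r-1}{2}$ when $\ell=2$. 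Relatedly, the dichotomy that makes the whole argument go --- split on whether some explodable neighbor $f$ has $|(C\setminus W)\cap f|\le t_1$ (leaving room for a third explosion inside $C\setminus W$) or all have large intersection (enabling the $Y$-cover) --- does not appear in your plan.

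Two smaller points. In the small regime your claimed bound $|W_\sigma|\le X-1+\ell$ is not justified and is false in general: each partner hyperedge $f_i$ can contribute up to $r-1$ resources outside $C\setminus W$, so the only safe estimates are $|W_\sigma|\le r\ell+(r-1)$ from maximality, or, in the $\ell=1$ case, $|e\cup f|=2r-|f\cap(C\setminus W)|\le 3r-X-1$, which comes from showing that maximality at $\ell=1$ forces $f\cap e=f\cap(C\setminus W)$ and $|f\cap(C\setminus W)|\ge X-r+1$ --- a different calculation from the one you propose. In the large regime $X\ge 2r+1$ the ``blocks of three explosions each consuming $\approx 4r-1$ new resources'' picture is a misreading of the mechanism: one only needs to force $\ell(\sigma)\ge 3$ (which the paper's Case~1/Case~2 analysis does), after which extending $\sigma$ to be maximal and applying Lemma~\ref{lem:multiexp}(a) directly gives $av(\sigma)\le r+\frac{r-1}{3}=\frac{4r-1}{3}$, with no per-block accounting required.
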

 The technical definition of the function $a_r(X)$ is an artefact of our proof this lemma, presented in Subsection~\ref{sec:shortDE}. A plot of $a_r(X)$ for some representative values of $r$ appears in the Appendix.

  In Section~\ref{ineq} we describe our procedure in which we execute the phases to construct a long DE-sequence. Analogously to Section~\ref{sec:mainproof}, after each phase we define a feasible solution to DCLP($T$), which gives a lower bound on the total length of the DE-sequences constructed during that phase. 
 The result will be an overall lower bound on the length of the whole DE-sequence, which we then optimize to derive 
Theorem~\ref{thm:2val} below. 
In Subsection~\ref{sec:smallepsilon} we will complete the proof of Theorem~\ref{thm:smallepsilon} by deriving it from Theorem~\ref{thm:2val}.

\begin{thm}\label{thm:2val} Let $\varepsilon < \frac{1}{2}$. Let
  ${\cal I}$ be an instance of
  the $(1,\varepsilon)$-restricted Santa Claus problem
and let $T\in \mathbb{R}$ be such that CLP($T$) has a feasible
solution. Suppose that $1\leq T<  2$, and that
$c :=\lceil T/\varepsilon\rceil\geq 4$. 
Suppose $r\geq 2$ is an integer such that $\sum_{X=r}^c
\frac{1}{a_r(X)} \geq 1$. 
Then there is an allocation with
min-value at least $r\varepsilon$.
\end{thm}

Clearly this theorem is strongest when $r$ is
  largest. We therefore choose $r_c$ to be the largest integer $r \in
\mathbb{N}$ 
satisfying $\sum_{X=r}^c \frac{1}{a_r(X)} \geq
1$. Hence Theorem~\ref{thm:2val} implies the upper bound of $c/r_c$ on the integrality gap for all $c\geq 4$.
In the proof of Theorem~\ref{thm:smallepsilon} we will verify directly that
$c/r_c$ is an upper bound on the integrality gap when $1\leq c\leq 3$ as well.

For convenience, we provide a table showing the triples $(c,r_c,c/r_c)$ for $1\leq c\leq30$ (with $c/r_c$ truncated to two decimal places).

\begin{center}
\smallskip
\begin{small}
\begin{tabular}{|l|l|l|l|l|l|l|l|l|l|l|l|l|l|l|l|l|}
\hline {$c$}&1&2&3&4&5&6&7&8&9&10&11&12&13&14&15\\
\hline {$r_c$}&1&1&1&2&2&2&3&3&4&4&4&5&5&6&6\\
\hline {$c/r_c$}&1&2&3&2 &2.5 &3 &2.33&2.66&2.25&2.5&2.75&2.4&2.6&2.33&2.5\\
\hline
\hline {$c$}&16&17&18&19&20&21&22&23&24&25&26&27&28&29&30\\%&31&34&35\\
\hline {$r_c$}&6&7&7&8&8&8&9&9&10&10&11&11&11&12&12\\%&13&14&14 \\
\hline {$c/r_c$}&2.66&2.42&2.57&2.37&2.5&2.62&2.44&2.55&2.4 &2.5 &2.36 &2.45&2.54&2.4&2.5\\%&2.38&2.42&2.5 \\
\hline %&36&37&38&39&40&41&42&43

 \end{tabular}
\end{small}
\smallskip
\end{center}

In the following subsections we will need to refer to some simple properties of
the pairs $(c,r_c)$. These are spelled out in the following observation.
\begin{obs}\label{obs:crc}
  \begin{itemize}
  \item[(i)] $r_c\geq\frac{c}4$ for every $c\geq 4$,
  \item[(ii)] $c\geq 2r_c+1$ for every $c\geq 5$,
  \item[(iii)] $c\geq 2r_c+2$ for every $c\geq 10$.
  \end{itemize}
\end{obs}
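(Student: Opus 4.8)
I would work throughout with the quantity $\Phi(r,c):=\sum_{X=r}^{c}\frac{1}{a_r(X)}$, so that $r_c=\max\{r\in\mathbb{N}:\Phi(r,c)\geq 1\}$. Two preliminary remarks: since $a_1(X)=1$ for all $X$ one has $\Phi(1,c)=c\geq 1$, hence $r_c\geq 1$ always (this already gives (i) for $c\leq 4$); and for fixed $r$ the three defining ranges of $a_r$ partition the integers $X\geq r$ unambiguously (the breakpoints $\tfrac{3r-1}{2}$ and $\tfrac{3r}{2}$ are never simultaneously integral, and $a_r$ is continuous at $X=2r$), so $\Phi(r,c)$ is honestly the sum of the listed terms. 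I would also record the monotonicity $\Phi(r,c)\leq\Phi(r,c')$ for $c\leq c'$.

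For part (i), given $c\geq 5$ I would put $r=\lceil c/4\rceil\geq 2$, so that $4r-3\leq c\leq 4r$. Then among the indices $r\leq X\leq c$ exactly $c-2r\geq 2r-3$ lie in the third range of $a_r$, each contributing $\frac{3}{4r-1}$, so $\Phi(r,c)\geq\frac{3(2r-3)}{4r-1}$, which is $\geq 1$ as soon as $r\geq 4$. For the residual values $r\in\{2,3\}$ (i.e. $c\leq 11$) one adds a couple of low-index terms from the first two ranges, e.g. $\frac{1}{a_3(3)}+3\cdot\frac{3}{11}=\frac15+\frac{9}{11}>1$ and $\frac{1}{a_2(2)}+\frac{1}{a_2(3)}+\frac{1}{a_2(4)}+\frac{3}{7}>1$, to conclude $\Phi(r,c)\geq 1$. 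Hence $r_c\geq\lceil c/4\rceil\geq c/4$.

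Parts (ii) and (iii) both reduce to a single estimate. Set $S(r):=\Phi(r,2r)=\sum_{X=r}^{2r}\frac{1}{a_r(X)}$. The core claim is that $S(r)<1$ for every $r\geq 3$, and moreover $S(r)+\frac{3}{4r-1}<1$ for every $r\geq 5$. Granting this: if $r>\lfloor(c-1)/2\rfloor$ with $c\geq 5$ then $2r\geq c$, so $\Phi(r,c)\leq S(r)<1$ (note $\lfloor(c-1)/2\rfloor+1\geq 3$), so such $r$ are infeasible and $r_c\leq\lfloor(c-1)/2\rfloor$, which is (ii). For (iii) with $c\geq 10$: every $r>\lfloor(c-2)/2\rfloor$ has $2r\geq c$ except the single value $r=(c-1)/2$ occurring when $c$ is odd, for which $c=2r+1$ and hence $\Phi(r,c)=S(r)+\frac{3}{4r-1}<1$ since then $r\geq 5$; all other such $r$ have $2r\geq c$ and $r\geq c/2\geq 5$, so $\Phi(r,c)\leq S(r)<1$. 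Thus $r_c\leq\lfloor(c-2)/2\rfloor$, i.e. $c\geq 2r_c+2$.

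It remains to prove the estimate on $S(r)$. Splitting by the ranges of $a_r$, the first-range part equals $\sum_{k=L}^{2r-1}\frac1k$ with $L=\lceil\frac{3r-1}{2}\rceil\geq\frac{3r-1}{2}$, and the second-range part equals $3\sum_{j=4r-1}^{M}\frac1j$ with $M=6r-1-\lceil\frac{3r}{2}\rceil\leq\frac{9r-2}{2}$. Comparing each harmonic tail with an integral gives $S(r)\leq\ln\frac{4r-2}{3r-3}+3\ln\frac{9r-2}{8r-4}$, which is decreasing in $r$ and strictly below $1$ once $r\geq 5$; adding $\frac{3}{4r-1}$ keeps the bound below $1$ once $r\geq 7$. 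The few remaining small cases are settled by direct arithmetic, e.g. $S(3)=\frac15+\frac14+\frac14+\frac{3}{11}=\frac{214}{220}<1$. The main obstacle is precisely the tightness here: $S(3)=\frac{214}{220}$ and $S(5)+\frac{3}{19}\approx 0.988$ leave essentially no room, so a crude ``number of terms times largest term'' bound is useless and one must combine careful integral estimates for large $r$ with a handful of exact small-$r$ computations; a secondary nuisance is bookkeeping of the floors and ceilings at the range breakpoints, whose value depends on the parity of $r$.
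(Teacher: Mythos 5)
Your proof goes through, but it takes a considerably more laborious route than the paper's, and your closing remark about why a simpler route would fail is incorrect.

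You reduce parts (ii) and (iii) to the inequalities $S(r):=\Phi(r,2r)<1$ for $r\geq 3$ and $S(r)+\tfrac{3}{4r-1}<1$ for $r\geq 5$, and you establish these by harmonic-series/integral estimates for large $r$ plus direct small-case arithmetic. This is a valid reduction and the estimates you quote do hold (your integral bound handles $r\geq 5$ for the first and $r\geq 7$ for the second, with $r=3,4,5,6$ checked by hand). The paper takes a much shorter path: since the terms of $\Phi(r,c)$ are nondecreasing in $X$ with maximum $\tfrac{3}{4r-1}$, and $\tfrac{3}{4r-1}<\tfrac{1}{r+1}$ for $r\geq 5$ (resp.\ $\tfrac{3}{4r-1}<\tfrac{1}{r+2}$ for $r\geq 8$), the inequality $\Phi(r_c,c)\geq 1$ forces more than $r_c+1$ (resp.\ $r_c+2$) summands, i.e.\ $c-r_c+1>r_c+1$, giving $c\geq 2r_c+1$ (resp.\ $c\geq 2r_c+2$); the small $r_c$ are read off the table, which suffices because part (i) guarantees $r_c\geq 8$ once $c>30$. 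For part (i), the paper analogously bounds every term from \emph{below} by $\tfrac{1}{2r-1}$, giving $\Phi(r,c)\geq\tfrac{c-r+1}{2r-1}\geq 1$ for $r=\lceil c/4\rceil$, again avoiding any case split on ranges.

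Your statement that ``a crude `number of terms times largest term' bound is useless'' is wrong, and this is the substantive flaw in your commentary (not in the proof itself). Even in your own framework, $S(r)$ has exactly $r+1$ summands each bounded by $\tfrac{3}{4r-1}$, so $S(r)\leq\tfrac{3(r+1)}{4r-1}<1$ for all $r\geq 5$, and $S(r)+\tfrac{3}{4r-1}\leq\tfrac{3(r+2)}{4r-1}<1$ for all $r\geq 8$ --- the crude bound already carries the whole argument except a few hand-checked values. The apparent tightness you observe ($S(3)=\tfrac{214}{220}$, $S(5)+\tfrac{3}{19}\approx 0.988$) only bites at those small values, which need direct verification under any approach; the integral estimates you develop are therefore unnecessary machinery. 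In short: correct proof, different and heavier decomposition than the paper's, and a misdiagnosis of the difficulty.
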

\begin{proof}
The sum 
$A=\sum_{X=r}^c\frac1{a_r(X)}$ has $c-r+1$ terms, that form a
non-decreasing sequence. The smallest term is $\frac1{2r-1}$, implying
that if $c-r+1\geq 2r-1$ then $A\geq 1$. This is easily satisfied by
$r=\lceil\frac{c}4\rceil$, implying (i).

The largest term in $A$ is (at most)
$\frac3{4r-1}$. Note that $\frac3{4r-1}<\frac1{r+1}$ when $r\geq5$, so to
reach 1 in this case there must be at least $c-r+1>r+1$ terms. Hence
$c\geq2r_c+1$ if $r\geq5$. For values of $r\leq 4$ see the table to
complete the proof of (ii).

Similarly $\frac3{4r-1}<\frac1{r+2}$ when $r\geq 8$, so to
reach 1 in this case there must be at least $c-r+1>r+2$ terms. Hence
$c\geq2r+2$ if $r\geq 8$, and again the table completes the proof of (iii) for
the smaller values.
\end{proof}

\subsection{Proof of Theorem~\ref{thm:2val}}\label{ineq}

To prove Theorem~\ref{thm:2val} we will again apply Theorem~\ref{thm:diet} to infer
the existence of the independent transversal in $H(\alpha)$ with an appropriate $\alpha$, which in turn is equivalent to the existence of an allocation of min-value more than $\alpha T$. Hence to infer Theorem~\ref{thm:2val} we need an $\alpha$  satisfying  $r\veps  > \alpha T$, which is one of the conditions on $\alpha$ in Setup~\ref{sec:2val}.

Since we would like to appeal to Lemma~\ref{lem:basedinC}, we will start by verifying that an $\alpha$ satisfying all three conditions in Setup~\ref{sec:2val} exists. To that end it suffices to show that $r\veps < 1$. Indeed, when $c\geq 5$ then 
by Observation~\ref{obs:crc}(ii) we know $c\geq 2r_c+1$,
so by the maximality of $r_c$ we have  $\frac{2}{\varepsilon}+1>\frac{T}{\varepsilon}+1>c\geq 2r+1$. 
Otherwise, when $c=4$ then $r=2$ and the assumption $\varepsilon<1/2$
implies  $r\varepsilon<1$. 

Let us choose, say, $\alpha = (r-0.5)\veps/T$, which satisfies the conditions of Setup~\ref{sec:2val}. 
Then with this $\alpha$, resources of value 1 are fat since $1 > r\veps > \alpha T $. Resources of value $\veps$ are thin because $r\geq 2$ implies $\veps =\alpha T/(r-0.5)<\alpha T$. Thin $\alpha$-hyperedges have size exactly $r$ since $r\veps > \alpha T>(r-1)\veps$.

For the proof we fix a subset $U\subseteq P$, 
assume there is no KO-sequence starting with $J(\alpha)|_U$, and seek
a DE-sequence $\tau$ of length at least $|U| - |F_U|$ starting with
$J(\alpha)|_U$.  Our strategy will be as follows. 
\begin{itemize}
\item[] {\small INITIALIZATION.} Let $\tau$ be a sequence of deletions
  starting with $J(\alpha)|_U$ until no further deletion is possible and
  let $G^*$ be the resulting subgraph. Let $W=\emptyset$. 
  \\[1mm]
  For each $X$, $c\geq X\geq r$, in
 decreasing order, execute the following Phase $X$;
\item[] {\small PHASE $X$:} {\small WHILE} there is a configuration $C$
  with at least $X$    resources remaining in $C\setminus W$, \\
  {\small DO} perform a DE-sequence $\sigma$ starting with $G^*$ based
in $C$ (as given by Lemma~\ref{lem:basedinC} corresponding to the value 
of $X$) and perform all possible deletions afterwards. Update $G^*$ to be the resulting current graph. 
Append $\sigma$ to the end of $\tau$.
Let $W_{\sigma}$ denote the cover of $\sigma$ and set $W: = W \cup W_{\sigma}$.
\end{itemize}

Next we verify that this process is well-defined, that is,
whenever Lemma~\ref{lem:basedinC} is called upon in
some Phase $X$, the conditions of Setup~\ref{sec:2val} are satisfied.

The instance ${\cal I} = (P, R , v, \{L_p : p\in P\})$, target $T\in \mathbb{R}$
and integer $r\geq 2$ are given in the assumptions of
Theorem~\ref{thm:2val}. As indicated above, our choice of $\alpha=(r-0.5)\veps$ ensures that
the conditions on $\alpha$ are satisfied. We have fixed a subset
$U\subseteq P$.

Consider the graph $G^*\subseteq J(\alpha)|_U$ in Phase $X$ to
which we apply Lemma~\ref{lem:basedinC}. Observe that each iteration
of Phase $X$ is immediately preceded by an 
iteration of Phase $X$, of Phase $X+1$, or the initialization
phase. To verify the condition on $W$ in
Setup~\ref{sec:2val}, note that throughout the procedure, the $(\star)$ property
is maintained after each execution of an iteration of Phase $X$ or $X+1$, as
the cover $W_{\sigma}$ of the new segment of $\tau$ is added to
$W$. The $(\star)$ property holds trivially after initialization,
since no explosions 
have yet occurred so $W=\emptyset$ is a cover. Hence $W$ satisfies $(\star)$
with $G_{start}=J(\alpha)|_U$ and $G_{end}=G^*$. Since we have assumed there is
no KO-sequence starting with $J(\alpha)|_U$, there cannot be a
KO-sequence starting with $G^*$. Finally, we check that every edge of
$G^*$ is explodable. Since we end the initialization phase and each
iteration of Phase $X$ or $X+1$ by performing deletions until no more were
possible, we know that all edges of the graph $G^*$ are explodable. Hence the conditions of Setup~\ref{sec:2val} hold.

Let $W_X$ be the union of the covers and $n_X$ be the 
number of explosions done in Phase $X$.
By Lemma~\ref{lem:basedinC},
for each $X$ with 
$c\geq X\geq r$ we have $|W_X|\leq n_Xa_r(X)$.

For $c\geq X \geq r$ we consider the moment 
after the last step in Phase $X$ is 
executed and set $W=W_c\cup\ldots\cup W_X$. 
For each thin configuration $S\in {\cal C}_p(T)$
with $p\in U$, we know that $|S\cap W|\geq c-X+1$,
otherwise we could have continued with another step of Phase $X$. 
Recalling that
  $\varepsilon$ is the common value of all thin resources, we conclude that $v(S\cap W)\geq\varepsilon(c-X+1)$
for each such $S$. Hence we may
  apply Proposition~\ref{prop:dual} with $W$ in place of $Y$ and
  $\varepsilon(c-X+1)$ in place of $c$ to obtain
  $$\varepsilon|W|=v(W)\geq\varepsilon(c-X+1)(|U|-|F_U|).$$

Comparing the upper and lower bounds on $|\cup_{j=X}^c W_j|$ for each
$X = c, c-1 , \ldots , r$, we obtain
$$ \sum_{j=X}^c a_r(j) n_j \geq (c-X+1)(|U|-|F_U|).$$ 

Since the coefficient function $a_r$ is non-increasing in $j$,
in order to minimize the objective function $\sum_{j=r}^c n_j$, we
have to choose $n_c, n_{c-1}, \ldots , n_r$  in reverse order such
that all the inequalities are equalities
$$ \sum_{j=X}^c a_r(j) n_j = (c-X+1)(|U|-|F_U|).$$
This implies that the length $\ell(\tau) = \sum_{j=r}^c n_j$ of the
DE-sequence $\tau$ our process creates is minimized when
$n_j = \frac{|U|-|F_U|}{a_r(j)}$. Consequently
$\ell(\tau) \geq \sum_{j=r}^c \frac{1}{a_r(j)} (|U| -|F_U|)$, which is 
at least $|U| -|F_U|$, as required for Theorem~\ref{thm:diet}. 
This completes the proof of Theorem~\ref{thm:2val}.

\subsection{Proof of Lemma~\ref{lem:basedinC}}\label{sec:shortDE}

The main goal of this subsection is to prove Lemma~\ref{lem:basedinC}. 
To start we describe two criteria that guarantee that a DE-sequence based in $C$ can be continued (plus a consequence of the first one). Both here and in Lemma~\ref{lem:basedinC}, the sets $C$ and $W$ do not play separate roles in the proofs, but appear only in the form $C\setminus W$. To emphasise this we will write $(C\setminus W)$.

\begin{lem}\label{lem:multiexp}
Assume Setup~\ref{sec:2val}. Let $C\in {\cal C}_p(T)$ be a thin configuration with
  $p\in U$ and let $\sigma$ be a 
  DE-sequence starting with $G^*$, %based in $C$, 
  with explosions
  of $e_1f_1, \ldots , e_{\ell}f_{\ell}$, in this order, 
  where $e_i\subseteq (C\setminus W)$, $1\leq i \leq \ell$, is owned by $p$. 
\begin{itemize}
    \item[(a)] If $e\subseteq (C\setminus W) \setminus \left(\bigcup_{i=1}^{\ell}
  f_i\right)$ is of size $|e|=r$, then $\sigma$ can be extended to a
longer DE-sequence %based in $C$ 
with an explosion involving $e^p$. 
\item[(b)] If $\sigma$ cannot be extended to a longer DE-sequence based in $C$, then 
$av(\sigma)\leq r+\frac{r-1}{\ell(\sigma)}$.
\item[(c)] Let $G$ be the current graph immediately before the $\ell$th
  explosion and suppose $G$ has only explodable edges.
If $e_{\ell}f_{\ell}\in E(G)$ was chosen such that $e_{\ell} \subseteq
(C\setminus W) \setminus \bigcup_{j=1}^{\ell-1}
  f_j$ and $|e_{\ell}\cap
f_{\ell}|$ is maximized with this property, and if $f_{\ell} \cap e_ {\ell} \neq f_{\ell}
  \cap \left((C\setminus W) \setminus \bigcup_{j=1}^{\ell-1}
  f_j\right)$, then $\sigma$ can be extended to a
longer DE-sequence based in $C$.
\end{itemize}
\end{lem}
\begin{proof}
For Part (a), note that being disjoint
from all $f_j$s, the $\alpha$-hyperedge $e^p$ survived all explosions
so far. Since there is no  
KO-sequence starting with $G^*$ isolating $e^p$, after possible
deletions, there will be an explodable edge incident to $e^p$ in the
current graph. So we can extend $\sigma$ with a further explosion
based in $C$ involving $e^p$.

For (b), if $\sigma$ is maximal then
$\left|(C\setminus W) \setminus \bigcup_{j=1}^{\ell}
    f_j \right|\leq r-1$ and for the size of the basic cover we have 
$$|W_{\sigma}| \leq \left|\left(\bigcup_{i=1}^{\ell} f_i\right) \cup
  (C\setminus W)\right| \leq r\ell(\sigma) + r-1.$$  

For Part (c) let us take $y \in (f_\ell \setminus e_{\ell}) \cap \left( (C\setminus W) \setminus \bigcup_{j=1}^{\ell-1} f_j \right) \neq \emptyset$
and $w\in e_{\ell}\setminus f_{\ell} \neq \emptyset$. 
Then the $\alpha$-hyperedge $g = e_{\ell} -w +y$ is contained in $ (C\setminus W) \setminus \bigcup_{j=1}^{\ell-1} f_j $, and
consequently by Part (a) $g^p$ in particular is {\em present} in
the current graph $G$ immediately before the explosion of
$e_{\ell}f_{\ell}$. Since $|f_{\ell} \cap g| > |f_{\ell} \cap
e_{\ell}|$, the $\alpha$-hyperedge $g^p$ was not 
explodable with $f_{\ell}$ immediately before the explosion of
$e_{\ell}f_{\ell}$, when $G$ had only explodable edges.
Hence $g^pf_{\ell}$ was not an edge of $G$ (i.e. it must have been deleted earlier in the sequence $\sigma$ or was already not present in $G^*$).
Therefore $g^p$ survives the $\ell$th explosion as well and since
there is no KO-sequence isolating $g^p$, another explosion involving
$g^p$ is possible.
\end{proof}

We are now ready to prove Lemma~\ref{lem:basedinC}.

\begin{proof} First we assume that $X\leq \frac{3r-1}{2}$.
  If there is a DE-sequence of length two based in $C$,
  then by Lemma~\ref{lem:multiexp}(b) it has average cost
  at most $r+\frac{r-1}{2} \leq 3r - X-1$.

  We may thus assume that there is no DE-sequence of length two based
  in $C$. Since $|C\setminus W| \geq X \geq r$ there exists $e^p \in V(G^*)$ with $e\subseteq (C\setminus W)$ and an explosion involving $e^p$ by Lemma~\ref{lem:multiexp}(a). 
  Among all explodable pairs $e^pf\in E(G^*)$ with $e\subseteq
  (C\setminus W)$, let us choose one with $|e\cap f|$ largest.
If $f\cap e \neq f\cap (C\setminus W)$, then by
Lemma~\ref{lem:multiexp}(c) there is a second explosion based in
$C$, a contradiction. For this recall that $G^*$ has only explodable edges.

Otherwise $f\cap e = f\cap (C\setminus W)$.  
Since no further explosion based in $C$ is possible,
$|(C\setminus W)\setminus f|\leq r-1$ by
Lemma~\ref{lem:multiexp}(a). In other words
$|f\cap (C\setminus W)|\geq X-r+1$, in which case
for the size of the basic cover $f\cup e$ we have $|e| + |f| - |f\cap
(C\setminus W)| \leq 2r -
(X-r+1) =3r-X-1$, as desired.

  We consider now the range $\frac{3r}{2} \leq X$.
We aim to construct a
DE-sequence based in $C$ with average cost at most
$2r-\frac{X+1}{3}$, or one of length
at least $3$, which has average cost at most $\frac{4r-1}{3}$ by
Lemma~\ref{lem:multiexp}(b). This shows that
$av(\sigma) \leq \max \{2r-\frac{X+1}{3}, \frac{4r-1}{3}\}$.
The bound on $av(\sigma)$  in the second and third ranges then
  follows since $2r - \frac{X+1}{3} \geq \frac{4r-1}{3}$ if and
  only if $2r \geq X$.

 For the purposes of this proof we set $b = 2r-\frac{X+1}{3}$.

To begin, first suppose that there exist $g^p\in V(G^*)$  with
$g\subseteq (C\setminus W)$ and
$g^ph\in E(G^*)$ such that  $|h\cup g| \leq  b$. Then since every edge
of $G^*$ is explodable, the single explosion of $g^ph$ is a
DE-sequence of length 1 of the type we seek. We may therefore assume
from now on that $G^*$ contains no such edge.

 We now distinguish two cases.

\medskip

\noindent{\bf Case 1.} 
Suppose first that some $e^p$ with $e\subseteq (C\setminus W)$ has an
explodable neighbor $f$ such that 
  $|(C\setminus W)\cap  f|\leq \lceil\frac{2X-1}{3}\rceil-r = 3r -1 -
  \lfloor 2b \rfloor=: t_1$
  (note that this expression is non-negative for $X$ in our range). \\
We perform this explosion and then deletions until no more are possible. 
Then at least $X-t_1\geq r$ 
resources remain in $(C\setminus W) \setminus f$, so
 by Lemma~\ref{lem:multiexp}(a) there is a further explosion
 possible.  Among all edges $hg^p$ in the current graph, where
 $g\subseteq(C\setminus W) \setminus f$,   we choose one 
 with  $|h\cap g|$ largest.
 Note that $hg^p$ was an edge of $G^*$ as well, and hence we can assume
 $|h\cap g | \leq 2r -1 - \lfloor b \rfloor$.
Indeed,
 otherwise we find
 $|h\cup g| \leq |h| + |g| -|h\cap g| \leq \lfloor b \rfloor
 \leq b$, giving a DE-sequence of length 1 that achieves our aim as
 noted above.

 Now for our second explosion we explode $hg^p$, and follow with a
 sequence of deletions until no more are possible. 

If $h\cap g= h\cap ((C\setminus W) \setminus f)$, then we still have 
$|(C\setminus W) \setminus f|
- |h\cap g| \geq X- t_1 - (2r -1 - \lfloor b \rfloor) \geq
X -5r +2 + \lfloor b \rfloor + \lfloor 2b \rfloor 
\geq X -5r +2 +b+2b -1 
= X-5r +1 +6r - (X+1) = r
$ resources in $(C\setminus W)\setminus (f\cup h)$.
Hence by Lemma~\ref{lem:multiexp}(a) our sequence can be extended to a
sequence of length $3$. 

Otherwise $h\cap g \neq h\cap ((C\setminus W)\setminus f)$, in which case
Lemma~\ref{lem:multiexp}(c) guarantees the extension of our
sequence to a third explosion. 

\medskip

\noindent{\bf Case 2.} Suppose now  that $|(C\setminus W)\cap f|\geq
t_1 + 1$ for 
every explodable neighbor $f$ of an $\alpha$-hyperedge in $C\setminus
W$ owned by $p$. \\
For our first explosion we choose edge $hg^p$ of $G^*$
with $g\subseteq (C\setminus W)$ such that $|g\cap h|$ is largest.
Again, if $|h\cup g|\leq b$ we have achieved our aim with this
DE-sequence of length 1, hence we may assume that its basic cover
$h\cup g$ satisfies $|h\cup g| \geq \lfloor b \rfloor +1$.

If $h\cap g = h\cap (C\setminus W)$
then the set $(C\setminus W) \setminus h$ contains
$X-|h\cap (C\setminus W)| = X - (|h| +|g| -|h\cup g|)
\geq X- (2r - (\lfloor b \rfloor +1)) = 
X - \lceil \frac{X+1}{3} \rceil +1$ resources, which is at least $r$
for $X\geq \frac{3r}{2}$. Consequently we can apply
Lemma~\ref{lem:multiexp}(a) to extend our DE-sequence with a second
explosion involving some $g'\subseteq  (C\setminus W)$ owned by $p$.

If $h\cap g \neq h\cap (C\setminus W)$, then by
Lemma~\ref{lem:multiexp}(c) we can also extend our DE-sequence
with a second explosion involving such a $g'$.

Either way we have a DE-sequence $\sigma$ of length two,
based in $C$, with explosions $gh$ and $g'h'$. 
We claim that the set $W_{\sigma} = h \cup h' \cup 
(C\setminus W) \setminus Y$, where
$Y \subseteq (C\setminus W)\setminus (h \cup h')$ is an {\em arbitrary}
subset of size 
$\min \{t_1, |(C\setminus W)\setminus (h \cup h')| \}$,
is a cover. Indeed, let $f$ be an $\alpha$-hyperedge
disjoint from $W_{\sigma}$, and let us show that $f$ survived the
explosions of $\sigma$.  
Since $f$ is in particular disjoint from $h$ and $h'$, the only way
$f$ could have disappeared is if it had 
an edge of the current graph, and
hence also of $G^*$, to $g$ or $g'$. Recall that $p$ is the owner of
both $g$ and $g'$. But then $f$ is owned by some
$q\neq p$. Since we are in Case 2, the condition $|f\cap
(C\setminus W)| \geq t_1 +1$ holds in particular for $f$. Since $f
\cap W_{\sigma} = 
\emptyset$ we have that $f\cap (C\setminus W) \subseteq Y$, which is a
contradiction since $|Y|\leq t_1$ and hence is too small to contain
$f\cap (C\setminus W)$. 

The DE-sequence $\sigma$ is based in $C$ and has length two, so unless
its average cost is already small enough for our lemma, we have that 
$2b < |W_\sigma| = |h| + |h'|+|((C\setminus W) \setminus
(h\cup h')) \setminus Y|,$ which implies   
$$|(C\setminus W) \setminus (h\cup h')| \geq \lfloor 2b\rfloor +1 -2r + |Y|.$$
We claim that this implies $|(C\setminus W) \setminus (h\cup h')| \geq
r$ and therefore by Lemma~\ref{lem:multiexp}(a) we
can extend $\sigma$ to a third explosion. 
Indeed, if $|Y| = t_1$, then this is immediate from
$t_1=3r-1-\lfloor2b\rfloor$.  
Otherwise we have $ 2r-1 \geq \lfloor 2b \rfloor =
4r-\lceil\frac{2X+2}{3}\rceil$, or equivalently
$\lceil\frac{2X+2}{3}\rceil \geq 2r+1$, which implies $X\geq 3r$.
But then
$|(C\setminus W) \setminus (h\cup h')| \geq |(C\setminus W)|
 - |h| - |h'|= X - 2r \geq r$ as needed.
\end{proof}

\subsection{Proof of Theorem~\ref{thm:smallepsilon}} \label{sec:smallepsilon} %  Corollary~\ref{cor:no3}}
\begin{proof}[Proof of Theorem~\ref{thm:smallepsilon}]

We define $f$ by
$$f(x)= \frac{1}{xr_{\lceil\frac1x\rceil}}.$$
Here, recall,
that for an integer $c\in \mathbb{N}$, we denote by
$r_c$ the  largest integer $r \in \mathbb{N}$ such that
$\sum_{X=r}^c \frac{1}{a(X)} \geq 1$. 
We show that $f(x)$ bounds the integrality gap for ${\cal I}$, where
$x=\frac{\varepsilon}{T^*}$.

First observe that for any instance with $T^* >0$, it is easy to
check Hall's condition on the bipartite graph of players and coveted
resources to demonstrate that there is a valid allocation of one
resource to each player. Hence, in the two-values case
$OPT \geq \varepsilon$.
Since $T^* \geq OPT$, we always have that the integrality gap is at
most $\frac{T^*}{\varepsilon} \geq 1$. This shows that $f(x) = \frac{1}{x}$
is an appropriate choice to bound the integrality gap for every $x\in (0,1]$.
Since $r_1=r_2=r_3 =1$, this verifies the statement when
$x\geq\frac{1}{3}$.

We proceed now with the case $x := \frac{\varepsilon}{T^*} < \frac13$. 

Analogously to \cite{chantangwu}
we start by reducing the problem to the case when
  $1\leq T^* < 2$. 
If $T^*\geq 2$ recall that
the additive approximation result of Bez\'akov\'a and
Dani~\cite{bezakovadani}, mentioned in the Introduction,
gives a polynomial-time algorithm to find an allocation with min-value
$T_{ALP}- \max v_r$, where $T_{ALP}$ is the optimum of the standard
assignment LP. Hence $OPT\geq T_{ALP}-1 \geq T^*-1 \geq
\frac{T^*}{2}$ as the CLP is stronger than
the ALP and in our case $\max v_r =1$. So the integrality gap is
at most $2$, which is less than $f(x)$ for every $x$.

If $T^* < 1$, then we create another instance ${\cal I}'$  where for each
resource $r\in R$ with $v_r=\varepsilon$, we change the value of $r$ to
$v_r'=\varepsilon' : = \frac{\varepsilon}{T^*}$,
and keep $v_r'=v_r=1$ otherwise. It is easy check that $T^*({\cal I}')
=1$ and $OPT({\cal I}) \geq OPT({\cal I}')T^*$. Note that
$\varepsilon' < \frac{1}{3}$.
Hence, applying our theorem to
the $(1, \varepsilon')$-instance ${\cal I}'$, there is
an allocation with min-value at least
$\frac{T^*({\cal I}')}{f(\varepsilon'/T^*({\cal I}'))}$.
The very same  allocation in ${\cal I}$ has min-value at least
$\frac{T^*}{f(\varepsilon')}$ and hence exhibits an
integrality gap of at most
$f(\varepsilon/T^*)$ for  ${\cal I}$.

From now on we assume $1\leq T^* < 2$ and
apply Theorem~\ref{thm:2val}. For this we note that
$c:= \lceil\frac{T^*}{\varepsilon}\rceil = \lceil \frac1x\rceil > 3$
and $\varepsilon < \frac{1}{2}$.
Theorem~\ref{thm:2val} then implies that we have an
allocation for ${\cal I}$ with min-value at least $r_c\varepsilon$, thus
$OPT\geq r_c\varepsilon$. Hence the integrality gap for ${\cal I}$ is
at most

$$\frac{T^*}{r_c\varepsilon}=\frac{1}{xr_{\lceil\frac1x\rceil}}=f(x).$$
as promised.

To prove the assertions of Theorem~\ref{thm:smallepsilon} about the
values, first note that when $x\geq \frac13$ we have $f(x) = \frac{1}{x}$,
which is less than $3$ unless $x=\frac13$, and at most $2.75$ for
$x\geq \frac{4}{11}$.
For $x < \frac{1}{3}$ note that with $c=\lceil \frac{1}{x}\rceil$,
we have $f(x)=\frac1{xr_c}\leq\frac{c}{r_c}$. It is easy to verify 
that  for every $c\geq 4$ we have $c/r_c \leq 2.75$, unless $c=6$.
(In fact the ratio $2.75$ is attained on the unique pair $c=11, r=4$.) 
That is, unless $\lceil \frac{1}{x}\rceil =6$ we have $f(x) \leq 2.75$. 
When $\lceil \frac{1}{x}\rceil =6$ we have $x\geq \frac{1}{6}$ and
$f(x)=\frac{1}{r_6x} = \frac1{2x}$, which 
is at most $2.75$ for $x \geq \frac{2}{11}$
and strictly less than $3$ unless $x=\frac16$.

Finally, we deal with the case in which  $x \rightarrow 0$.
Then $c=\lceil\frac1x\rceil\rightarrow\infty$, and hence also
$r_{c} \geq \frac{c}{4}\rightarrow\infty$ by
Observation~\ref{obs:crc}(i).  

Let us assume
that $x < \frac{1}{10}$, so $c \geq
10$ implying that $c \geq 2r_{c}
+2$ by Observation~\ref{obs:crc}(iii). Setting $r:=r_c$, we write the sum as
$ \sum_{X=r}^c \frac{1}{a(X)} = A_r + B_r +C_r,$ where
\begin{align*} A_r:=&\sum_{X=r}^{\lfloor (3r-1)/2\rfloor}\frac{1}{3r-X-1} =
\sum_{k=\lceil (3r-1)/2\rceil}^{2r-1}\frac1k = (H_{2r-1}-H_{\lceil
                      (3r-1)/2\rceil -1})\rightarrow \ln(4/3),\\
                      B_r:= & \sum_{\lceil 3r/2\rceil}^{2r}\frac{3}{6r-X-1} =
\sum_{k=4r-1}^{\lfloor \frac{9r}{2}-1\rfloor}\frac3k= 3(H_{\lfloor
                              \frac{9r}{2}-1\rfloor}-H_{4r-2})\rightarrow
                              3\ln(9/8),\\
                              C_r : = & \sum_{X=2r+1}^c \frac{3}{4r-1}
                                        = \frac{3(c-2r)}{4r-1},
\end{align*}
when $r \rightarrow \infty$.
Here we use the well-known fact for the harmonic 
series $H_n=\sum_{k=1}^n\frac1k$, that $H_n-\ln n$
converges to a constant. 

By the maximality of $r_{c}$ and using $c
\geq 2r_{c} +2$, we have that 
$$A_{r_{c}+1} + B_{r_{c}+1} + \frac{3(c
  -2(r_{c}+1))}{4(r_{c}+1)-1} < 1.$$
Recall that when $x \rightarrow 0$, we also have $c
\rightarrow \infty$ and $r_{c} \rightarrow
\infty$, so we obtain
$\ln (4/3) + 3 \ln (9/8) + \frac{3}{4} \lim_{x \rightarrow 0}f(x) - \frac{3}{2}
\leq 1$, where we again use that $f(x) \leq \frac{c}{r_c}$.
Hence $$\lim_{x \rightarrow 0} f(x) \leq
\frac{10}{3} - \frac{4}{3} \ln(4/3) - 4\ln
(9/8) < 2.479,$$
as desired.
\end{proof}

\section{Conclusion}

In this paper we give an entirely novel approach, based on topological
notions, for bounding the integrality gap of the Santa Claus problem. 
This leads to significant improvements on the best known estimates. 
We believe that this approach will prove to be fruitful in addressing
other algorithmic problems involving hypergraph matchings.

As mentioned in the introduction, our argument at the moment does
not come with an efficient algorithm for finding an allocation with
the promised min-value. This is primarily due to the fact that
we do not have a good upper bound on the number of simplices in
the triangulation described in the Appendix, which ultimately
governs the running time of any algorithmic procedure
based on our argument. It would be of great interest to develop
methods to make the approach more efficient.

A possible ray of hope comes from recalling the eventual success of turning the
initially highly ineffective combinatorial procedure of
\cite{asadpourfeigesaberi}, based on \cite{haxell}, 
into an efficient algorithm with the same constant factor
appriximation. This was achieved through a series
of important contributions of several authors, as described in the introduction. 
Even a quasipolynomial-time algorithm based on our approach that provides
{\em any} constant factor approximation would seem to require new
ideas.
Such an algorithm would be a first step towards an efficient
approximation algorithm that breaks the factor $4$ barrier.

Finally, we would like to recall from the introduction
that our work on the $(1,\varepsilon)$-restricted case identifies certain
parameter choices that seem to capture a key difficulty for the
CLP-approach.
Specifically, we would like to see a $(1,1/3)$-restricted problem
instance that has optimal CLP-target $T^*=1$, and no allocation of
min-value $2/3$.

\medskip
\noindent{\bf Acknowledgements:} The authors would like to thank Lothar
Narins for helpful discussions in the early stages of this work, and Olaf Parczyk and Silas Rathke 
for the data on $r_c$  (Section~\ref{sec:2val}) and the figure for $a_r$ (Section~\ref{ar}), respectively.
They are also very grateful to the anonymous referees for many insightful comments and helpful suggestions.

\section{Appendix}
\subsection{The parameter $\eta$}
As mentioned in Section~\ref{sec:topological} for our results we need only that there
exists a graph parameter $\eta$ that satisfies Fact~\ref{fact:eta} and
Theorems~\ref{thm:aharoniberger} and \ref{thm:meshulam}. In fact such
a parameter can be defined in a purely combinatorial way, without any
explicit reference to topology. For completeness we begin with a
precise definition of $\eta$ following the
treatment of \cite{haxell2}, where the required properties are
verified. 
However, the intuition behind the parameter $\eta$ and how we use it
in our proofs is very much topological, as we describe after the definition. 

\paragraph{The definition.}
An {\em abstract
  simplicial complex} is a set $\cA$ of subsets $A$ of a finite set
$V=V(\cA) = \cup_{A\in {\cal A}} A$ with the property that if $A\in\cA$ 
and $B\subset A$ then $B\in\cA$. We call the sets $A$ the {\it
  simplices} of $\cA$, and the {\it dimension} of $A$ is $|A|-1$.
The {\it dimension} of $\cA$ is the maximum dimension of any
$A\in\cA$. 
Let ${\cal A}$ and $\Sigma$ be abstract simplicial complexes. 
A function $f:V(\cA) \rightarrow V(\Sigma)$ is called a {\em simplicial map
from ${\cal A}$ to $\Sigma$} if $f(A) \in \Sigma$ for every $A\in \cA$.
We say that $\cA$ is a {\em $d$-PSC}, i.e. a {\it pure}
$d$-dimensional simplicial complex, if every maximal $A\in\cA$ has the same
dimension $d$. 
Note then that a $d$-PSC is the {\it closure} of the ($d+1$)-uniform
hypergraph $\cA^d$ consisting of the $d$-dimensional simplices of
$\cA$, that is, 
we form $\cA$ from $\cA^d$ by adding all subsets of the hyperedges of
$\cA^d$.
For a $d$-PSC $\cA$, the {\it boundary}  $\partial\cA$  of $\cA$ is
the ($d-1$)-PSC that is the closure of the $d$-uniform hypergraph
$$\{B:|B|=d, |\{A\in\cA^d:B\subset A\}|\equiv 1\mod 2\}.$$
If $\partial(\cA)$ is empty we say that $\cA$ is a {\em $d$-dimensional $Z_2$-cycle}.
The abstract simplicial complex $\Sigma$ is said to be $k$-{\it connected} if
for each $d$, $-1\leq d\leq k$, for every $d$-dimensional $Z_2$-cycle
$\cA$ and every
simplicial map $f:\cA\to\Sigma$, there exists a $(d+1)$-PSC $\cB$ and
a simplicial map  $f':\cB\to\Sigma$ such that $\partial\cB=\cA$
and the restriction $f'|_{\cA}$ of $f'$ to $\cA$ satisfies 
$f'|_{\cA}=f$. 

The independence complex of a graph is an abstract simplicial complex
and the value of $\eta (G)$ for a graph $G$ is defined as the
largest integer $t$ such that the independence complex
${\cal J}(G)$ is $(t-2)$-connected. The parameter $\eta$ is not
explicitly defined in \cite{haxell2}, but the main theorems about
$\eta$ are stated and proved there in terms of the above
definition of $k$-connected. (Theorem~\ref{thm:aharoniberger} and
\ref{thm:meshulam} appear 
as Theorems 11 and 12, respectively.) We may verify Fact~1 as
follows. Fact~1(1) follows directly from 
the definition of $\eta$, as saying that $\Sigma$ is $(-1)$-connected is the 
same as saying that $V(\Sigma)$ is nonempty. For the second statement of
Fact~1(2), suppose $G$ contains an isolated vertex $x$. Let ${\cal A}$ be an
arbitrary $Z_2$-cycle, with a simplicial map $f$ from ${\cal A}$ to ${\cal
  J}(G)$. Then $f$ can be extended to a simplicial map from the
closure ${\cal B}$ of
$\{ A\cup \{w \} : A \in {\cal A}\}$, where $w\not \in
V({\cal A})$ is a new vertex, by mapping $w$ to $x$.  Since the
dimension of ${\cal A}$ is arbitrary, this implies that $\eta (G)$ is
infinite. Otherwise $G$ has no isolated vertices, and so $G_2$ contains an edge.  Then by Theorem~\ref{thm:meshulam} we can keep
deleting  and/or exploding edges from $G_2$, one by one, until all
edges of $G_2$ have disappeared. The resulting graph $G_{end}$ still
contains $G_1$. If $G_{end}$ has an
isolated vertex, then $\eta (G) \geq \eta (G_{end}) = \infty$ by the
above. Otherwise at least one explosion was performed and
$G_{end}=G_1$, hence $\eta (G) \geq \eta (G_1) + 1$ 
by Observation~\ref{obs:sequence}(i).

\paragraph{The intuition.}
In what follows, we describe the topological nature
of our work at an intuitive level, without getting into precise
details.
The topological space $X$ is said to be {\em $k$-connected} if for every $d$,
$-1 \leq d \leq k$, every continuous map from the $d$-dimensional
sphere to $X$ extends to a continuous map from the $(d+1)$-dimensional
ball to $X$. This property indicates that $X$ lacks a $(d+1)$-dimensional
''hole''.

To get a better understanding for the topological core of our
arguments, it helps to think of connectedness as defined in the
following way, that provides a link between the notion of
connectedness for a topological space and our earlier definition for
abstract simplicial complexes. This link goes via triangulations of a
simplex, which are {\em geometric simplicial complexes}, that can be
viewed both as topological spaces and as abstract simplicial complexes.
We say that an abstract simplicial complex $\Sigma$ is
{\em $k$-connected} if for every $d$, $-1 \leq d\leq k$, for every
triangulation ${\cal T}$ of the boundary of the $(d+1)$-dimensional
simplex $\tau$, and every simplicial map $f$ from ${\cal T}$ to
$\Sigma$, there exists a triangulation ${\cal T}'$ of the whole of
$\tau$ that extends 
${\cal T}$, and a simplicial map $f'$ from ${\cal T}'$ to $\Sigma$ that
extends $f$.

Our argument gives a process that, given an instance ${\cal I}$ of the
Santa Claus problem with player set $P$, produces an allocation with
the promised min-value. Very broadly speaking, the process has two
main stages. Following the proofs of Theorems~\ref{thm:aharoniberger}
and \ref{thm:meshulam}, the first stage constructs a triangulation
${\cal T}$ of the 
$(|P|-1)$-dimensional simplex $\tau$, and a simplicial map $f$ from
${\cal T}$ to the independence complex of the graph $H(\alpha)$
(defined in Section~\ref{sec:setup}), such
that the $|P|$-coloring of the points $v \in V({\cal T})$, defined by
the ''owner'' of 
the $\alpha$-hyperedge $f(v)$, satisfies the conditions of Sperner's
Lemma. The second stage applies Sperner's Lemma to find a multicolored
simplex, which corresponds to an independent transversal of
$H(\alpha)$, i.e. an allocation for instance ${\cal I}$ with
min-value at least $\alpha T$ as promised.

Executing the first stage is the main aim of our paper and here is where
topological connectedness helps us. 
The triangulation ${\cal T}$ and the map $f$ are built on the faces of
$\tau$ one by one, in increasing order of dimension.
When ${\cal T}$ and $f$ on a face
$\sigma$ of dimension $d$ are to be defined, triangulations and
maps of all the facets of $\sigma$ are already in place, forming the
boundary of $\sigma$, and these 
need to be extended to a triangulation and a map of the whole of
$\sigma$. 
This notion of extending a map from the boundary of $\sigma$ to the interior is
captured by the parameter $\eta$, so if $\eta$ is sufficiently 
large for each $\sigma$, then this extension is possible. 

\subsection{Demonstrating DE-sequences} \label{sec:example}

Here we demonstrate how to use DE-sequences to show that $\eta(G)\geq
2$ for the cycle $G=C_5$ of length 5. (In fact $\eta(G)=2$, since the
independence complex itself is a $5$-cycle, which has a
$2$-dimensional hole.) We do this without thinking about the underlying ''topology'' of the current graph (i.e. whether the next edge is deletable or explodable), but rather following through {\em all} sequences of deletions and explosions that are possible combinatorially (some of which might not actually represent a ''topologically legal" DE-sequence) and arriving at a lower bound of at least $2$ for each. 

For an edge
$e$ of $G$, the graph $G\explode e$ consists
of a single isolated vertex, and hence $\eta(G\explode e)=\infty$ by
Observation~\ref{obs:sequence}(ii). Therefore if $e$ is  
explodable we are done, so we may assume that $e$ is deletable. Deleting $e$ results in the path
$P_5$ with 5 vertices, and by the definition of deletable edge
$\eta (G) \geq \eta (P_5)$. Next
consider an edge $e'$ of $P_5$ joining two of its degree-2
vertices. Again $P_5\explode e'$ consists
of a single isolated vertex, showing that we may assume $e'$ is not explodable and
hence deletable. The graph $P_5-e'$ consists of two components, a
$P_2$ and a $P_3$. Each of these has a positive value of $\eta$ by
Fact~\ref{fact:eta}(1). Hence $$\eta(G)\geq \eta (P_5)
\geq \eta(P_5-e')\geq 1 + \eta (P_3) \geq 2$$ by
Fact~\ref{fact:eta}(2). %and Observation~\ref{obs:sequence}(ii).  

\subsection{The function $a_r$}\label{ar} 

\begin{figure}[h!tbp]
	\centering
	\begin{tikzpicture}[scale=.35]%Adjust this number to change the size of the picture
		
		%This defines the a command that plots the function. It expects 3 arguments. The first specifies the r, the second specifies the colour of the plot, the third specifies what the largest x value of the plot should be.
		\newcommand{\drawfunctionincolour}[3]{
			\pgfmathtruncatemacro{\firstend}{(3*#1-1)/2}
			\pgfmathtruncatemacro{\secondstart}{ceil(3*#1/2)}
			\pgfmathtruncatemacro{\secondend}{2*#1}
			\pgfmathtruncatemacro{\thirdstart}{2*#1+1}
			\draw [#2, thick, domain=#1:\firstend] plot ({\x}, {3*#1-\x-1});
			\draw [#2, thick, domain=\secondstart:\secondend] plot ({\x}, {2*#1-(\x+1)/3});
			\draw [#2, thick, domain=\thirdstart:#3] plot ({\x}, {(4*#1-1)/3});
		}
	
	    % Draw grid from (0,0) to (15,15)
		\draw[step=1cm,gray,very thin, opacity=.4] (0,0) grid (15,15);
		
		%Draw x- and y-axes
		\draw[thick,->] (-.2,0) -- (15,0) node[anchor=north west] {$X$};
		\draw[thick,->] (0,-.2) -- (0,15) node[anchor=south east] {$y$};
		
		%Plot function in four colours
		\drawfunctionincolour{3}{blue}{15}
		\drawfunctionincolour{4}{red}{15}
		\drawfunctionincolour{5}{orange}{15}
		\drawfunctionincolour{6}{purple}{15}
	\end{tikzpicture}
	\caption{Plots of the function $a_r(X)$ for different values of $r$. The function for $r=3$, $r=4$, $r=5$ and $r=6$ is given in blue, red, orange and purple respectively}
\end{figure}

\FloatBarrier

\subsection{Notation Finder}

For each term we indicate the section in which is defined. Most of the definitions appear very close to the beginning of the relevant section or subsection.

\begin{itemize} 
\item $(1,\veps)$-restricted problem~(\ref{sec:introduction})
\item $\alpha$-hyperedge of $p$~(\ref{sec:setup})
\item $\alpha$-edge~(\ref{sec:proofs})
\item allocation~(\ref{sec:introduction})
\item $a_r(X)$~(\ref{sec:2val})
\item $av(\sigma)$, the average cost of $\sigma$~(\ref{sec:2val})
\item block~(\ref{sec:warmup})
\item CLP($T$), the configuration LP with target $T$~(\ref{sec:introduction})
\item $C_p(T)$, the set of configurations for $p$~(\ref{sec:introduction})
\item cheap DE-sequence~(\ref{sec:economical})
\item cover~(\ref{sec:strategy})
\item DCLP($T$), the dual of the CLP($T$)~(\ref{sec:warmup})
\item deletable~(\ref{sec:topological})
\item DE-sequence~(\ref{sec:topological})
\item $e-x$, $e+x$~(\ref{sec:proofs})
\item explodable~(\ref{sec:topological})
\item explodable at resource $r$~(\ref{sec:proofs})
\item $e^p$, hyperedge owned by $p$~(\ref{sec:setup}),~(\ref{sec:proofs}) 
\item $\eta(G)$~(\ref{sec:topological})
\item $F$, the set of fat resources, $F_U$~(\ref{sec:strategy})
\item fat~(\ref{sec:strategy})
\item $f(x)$~(\ref{sec:smallepsilon})
\item $G-e$ ($G$ delete $e$), $G\explode e$ ($G$ explode $e$)~(\ref{sec:topological})
\item $H(\alpha)$, the $\alpha$-approximation allocation graph~(\ref{sec:setup})
\item $H_n$, the harmonic series~(\ref{sec:smallepsilon})
\item $i/j$-DE-sequence~(\ref{sec:economical})
\item independent transversal~(\ref{sec:setup})
\item $\cI$, an instance~(\ref{sec:introduction})
\item integrality gap~(\ref{sec:introduction})
\item $J(\alpha)$, $J(\alpha)|_U$~(\ref{sec:strategy})
\item $\cJ(G)$, the independence complex of $G$~(\ref{sec:topological})
\item KO-sequence~(\ref{sec:topological})
\item $\ell(\sigma)$, the length of $\sigma$~(\ref{sec:topological})
\item $L_p$, the liked set of $p$~(\ref{sec:introduction})
\item $m=\alpha T$~(\ref{sec:warmup})
\item maximal cheap DE-sequence~(\ref{sec:mainproof})
\item min-value~(\ref{sec:introduction})
\item $n_j$~(\ref{sec:mainproof})
\item $n_X$~(\ref{ineq})
\item $OPT$~(\ref{sec:introduction})
\item owner~(\ref{sec:setup})
\item $P$, the set of players~(\ref{sec:introduction})
\item $R$, the set of resources~(\ref{sec:introduction})
\item $r_c$~(\ref{sec:2val})
\item survives~(\ref{sec:proofs})
\item $T$ (target), $T^*$ (optimal target), $T_{ALP}$ (assignment LP optimum)~(\ref{sec:introduction})
\item thin~(\ref{sec:strategy})
\item $v$, $v_r$, $v(S)$, the value function~(\ref{sec:introduction})
\item $W$, a cover, satisfying ($\star$)~(\ref{sec:strategy})
\item $W_j$~(\ref{sec:mainproof})
\item $W_X$~(\ref{ineq})
\item $x=\lceil\frac{\veps}{T^*}\rceil$~(\ref{sec:smallepsilon})
\item $Y_{\leq d}$, $Y_{>d}$~(\ref{sec:mainproof})
\end{itemize}

\end{document}